


\documentclass{LMCS}

\def\dOi{11(1:1)2015}
\lmcsheading%
{\dOi}
{1--32}
{}
{}
{Apr.~\phantom03, 2011}
{Feb.~10, 2015}
{}

\ACMCCS{[{\bf Theory of computation}]:  Semantics and
  Reasoning---Program semantics---Operational semantics; Semantics and
  Reasoning---Program reasoning; Logic---Hoare logic}
\subjclass{F.3.1, F.3.2}

\newcommand{\REMOVED}[1]{}

\usepackage{hyperref}
\usepackage[anythingbreaks]{breakurl}

\usepackage{amssymb}
\usepackage{proof}
\usepackage{stmaryrd}

\newcommand{\whilefun}{\textrm{while-fun}}
\newcommand{\whilerel}{\textrm{while-rel}}

\newcommand{\IH}{\mathrm{IH}}
\newcommand{\Hyp}{\mathrm{H}}

\newcommand{\twoline}[2]{\begin{array}{l}#1 \\ \quad #2 \end{array}}

\newcommand{\detach}[3]{\hspace*{2mm} \deduce{\line(#1,6){#2}}{\makebox[0cm]
{#3}} \hspace*{2mm}}

\newcommand{\Skip}{\ensuremath{\mathsf{skip}}}
\newcommand{\Assign}[2]{\ensuremath{#1 := #2}}
\newcommand{\Seq}[2]{\ensuremath{#1;#2}}
\newcommand{\Ifthenelse}[3]
{\ensuremath{\mathsf{if~} #1 \mathsf{~then~} #2 \mathsf{~else~} #3}}
\newcommand{\While}[2]{\ensuremath{\mathsf{while~} #1 \mathsf{~do~} #2}}
\newcommand{\Abort}{\mathsf{abort}}
\newcommand{\true}{\ensuremath{\mathsf{true}}}
\newcommand{\false}{\ensuremath{\mathsf{false}}}

\newcommand{\eval}[2]{\ensuremath{\llbracket #1 \rrbracket #2}}
\newcommand{\istrue}[2]{\ensuremath{#2 \models #1}}
\newcommand{\isfalse}[2]{\ensuremath{#2 \not\models #1}}
\newcommand{\update}[3]{\ensuremath{#1[#2\mapsto #3]}}

\newcommand{\st}{\sigma}
\newcommand{\state}{\mathit{state}}
\newcommand{\tr}{\ensuremath{\tau}}
\newcommand{\trace}{\mathit{trace}}
\newcommand{\sglt}[1]{\langle#1\rangle}
\newcommand{\cons}[2]{\ensuremath{#1::#2}}
\newcommand{\bism}[2]{\ensuremath{#1 \approx #2}}

\newcommand{\hd}{\mathit{hd}}

\newcommand{\concat}{\mathbin{++}}

\newcommand{\exec}[3]{(#1,#2) \Rightarrow #3}
\newcommand{\execseq}[3]{(#1,#2) \stackrel{*}{\Rightarrow} #3}
\newcommand{\execnoargs}{\Rightarrow}
\newcommand{\execseqnoargs}{\stackrel{*}{\Rightarrow}}

\newcommand{\execind}[3]{(#1,#2) \Rightarrow^\mathrm{ind} #3}

\newcommand{\duplast}{\mathit{duplast}}

\renewcommand{\split}[4]{\mathit{split} ~#1 ~#2 ~#3 ~#4}


\newcommand{\semax}[3]{\{#1\}~#2~\{#3\}}
\newcommand{\Follows}[2]{#1 \mathbin{\ast\ast} #2}
\newcommand{\chop}{\mathbin{\ast\ast}}
\newcommand{\follows}[3]{#3 \models_{#2} #1}

\renewcommand{\conj}[2]{#1 \wedge #2}
\newcommand{\disj}[2]{#1 \vee #2}

\newcommand{\ceil}[2]{\lceil #1 \rceil #2}
\newcommand{\after}[2]{#1 \triangleleft #2}

\newcommand{\sem}[1]{\llbracket #1 \rrbracket}
\newcommand{\satisfy}[2]{#1 \models #2}
\newcommand{\imp}{\to}
\newcommand{\entails}{\models}
\newcommand{\logequ}{\Leftrightarrow}

\newcommand{\finite}{\mathit{finite}}
\newcommand{\infinite}{\mathit{infinite}}
\newcommand{\cofinally}{{\it cofinally}}

\newcommand{\eventually}{\mathit{eventually}}

\newcommand{\upstream}{\mathit{up}}

\newcommand{\stay}[1]{\langle #1 \rangle^*}

\newcommand{\red}[2]{#1 \stackrel{*}{\rightsquigarrow} #2}

\newcommand{\rar}{\rightarrow}

\newcommand{\dup}[1]{\langle #1 \rangle^2}
\newcommand{\rep}[1]{#1^\dagger}

\renewcommand{\conv}[2]{#1 \mathbin{\downarrow} #2}

\renewcommand{\div}[1]{#1^{\shortuparrow}}

\newcommand{\Last}[1]{\mathit{Last} ~#1}
\newcommand{\Lastnoargs}{\mathit{Last}}

\newcommand{\Inv}[3]{\mathit{Inv}(#1, #2, #3)}

\newcommand{\len}[1]{\mathit{len}\, #1}

\renewcommand{\sp}[2]{\mathit{sp}(#1, #2)}

\newcommand{\midp}{\mathit{midp}}


\begin{document}

\title[A Hoare Logic for the Coinductive Trace-Based Big-Step
  Semantics of While]{A Hoare Logic for the Coinductive Trace-Based Big-Step
  Semantics of While\rsuper*}

\author[K.~Nakata]{Keiko Nakata} 
\address{Institute of Cybernetics at Tallinn
  University of Technology, Akadeemia tee 21, 12618 Tallinn, Estonia}
\email{\{keiko,tarmo\}@cs.ioc.ee} 
\thanks{The authors were supported by the EU
  FP6 IST integrated project no.\ 15905 (MOBIUS), the ERDF funded
  Estonian CoE project EXCS, the Estonian Ministry of Education and
  Research target-financed themes no.\ 0322709s06 and 0140007s12, 
  and the Estonian Science Foundation grants no.\ 6940 and 9398.}

\author[T.~Uustalu]{Tarmo Uustalu}
\address{\vspace{-18 pt}}

\keywords{big-step semantics, Hoare logic, nontermination,
coinductive traces, formalization, Coq}
\titlecomment{{\lsuper*}This article is a revised and expanded version of the
  ESOP 2010 conference paper~\cite{esop10}.}

\begin{abstract}
  In search for a foundational framework for reasoning about
  observable behavior of programs that may not terminate, we have
  previously devised a trace-based big-step semantics for While. In
  this semantics, both traces and evaluation (relating initial states
  of program runs to traces they produce) are defined coinductively.
  On terminating runs, this semantics agrees with the standard inductive
  state-based semantics. Here we present a Hoare logic counterpart of
  our coinductive trace-based semantics and prove it sound and
  complete.  Our logic subsumes the standard partial-correctness
  state-based Hoare logic as well as the total-correctness variation:
  they are embeddable. In the converse direction, projections can 
  be constructed: a derivation of a Hoare triple in our trace-based
  logic can be translated into a derivation in the state-based logic
  of a translated, weaker Hoare triple.  Since we work with a
  constructive underlying logic, the range of program
  properties we can reason about has a fine
  structure; in particular, we can distinguish
  between termination and nondivergence, e.g., unbounded classically
  total search fails to be terminating, but is nonetheless
  nondivergent. Our metatheory is entirely constructive as well, and
  we have formalized it in Coq.
\end{abstract}

\maketitle


\section{Introduction}

Standard big-step semantics and Hoare logics do
not support reasoning about nonterminating runs of programs.
Essentially, they ignore them. But of course nonterminating runs are
important. 
Not only need we often program a partial function whose
domain of definedness we cannot decide or is undecidable, e.g., an
interpreter, but we also have to 
program functions that are
inherently partial. In programming with interactive
input/output, for example, diverging runs are often what we
really want. 

In search for a foundational framework for reasoning about possibly
nonterminating programs constructively (intuitionistically) and intrigued by attempts in this direction in
the literature, we have previously devised a big-step semantics for
While based on traces \cite{NU:trabco}. 
In this semantics, traces are possibly infinite
sequences of states that a program run goes through. They are defined
coinductively, as is the evaluation relation, relating initial states
of program runs to traces they produce. On terminating runs, this
nonstandard semantics agrees with the standard, inductive state-based
big-step semantics.

In this paper, we put forward a Hoare logic to match this big-step
semantics. In this new trace-based logic, program runs are reasoned
about in terms of assertions on states and traces. More precisely, our
Hoare triple $\semax{U}{s}{P}$ is given by a statement $s$, a state
assertion $U$ (a condition on the initial state of a run of $s$) and a
trace assertion $P$ (a condition on the trace produced by the run).
In the presentation we have chosen for this paper, assertions are
nothing but predicates expressible in the meta-logic, i.e., we do not
confine ourselves to a particular language of state and trace
assertions. Nonetheless, we do not want to downplay the question of
what makes a good assertion language for traces. We are after a set of
connectives that allows for a concise formulation of a sound and
complete Hoare logic over state and trace predicates and logical
entailment as given by the constructive meta-logic. We adopt a
solution that is reminiscent of interval temporal logic
\cite{Mos:temlrh,HZM:harsbt}, with a chop-connective. This gives us a set of
connectives that is Spartan in terms of convenience of expression, but
suffices for our meta-theoretical study. Our logic is intended
foundational framework into which more specialized and more applied
logics with more limited assertion languages can be embedded.

Besides being deterministic, the While language is also total as soon
as we accept that traces of program runs can be infinite. This allows
our logic to conservatively extend both the standard, state-based
partial-correctness Hoare logic as well as the state-based
total-correctness Hoare logic. On the level of derivability alone this
can be proved semantically by going through the soundness and
completeness results. But we go one step further: we show that
derivations in these two state-based logics are directly transformable
into derivations in our logic, yielding embeddings on the level of
derivations, not just mere derivability. The transformations are
relatively straightforward and do not require invention of new
invariants or variants, demonstrating that our logic incurs no undue
proof burden in comparison to the standard Hoare logics. In the
converse direction, we can project derivations in our trace-based
logic into derivations in the state-based logics: a derivation of a
Hoare triple in the trace-based logic is translated into a derivation
in the state-based logics with a translated, weaker postcondition.

However, the power of our logic goes beyond that of the state-based
partial-correctness and total-correctness Hoare logics. The assertions
have access to traces. As suggested by the similarity of our (open)
assertion language to the that of interval temporal logic, this allows
us to specify liveness properties of diverging runs.  We will
demonstrate this extra expressiveness of our logic by a series of
examples. Also, interpreted into a constructive underlying logic, our
assertion language becomes quite discerning. In particular we can
distinguish between termination and nondivergence, e.g., unbounded
classically total (constructively nonpartial) search fails to be
terminating, but is nonetheless nondivergent.

We do not discuss this in the paper, but our logic can be
adjusted to deal with exceptions and nondeterminism. 

The paper is organized as follows. In Section~\ref{sec:sem}, we
present our trace-based big-step semantics.  In
Section~\ref{sec:hoare}, we proceed to the question of a corresponding
Hoare logic. We explain our design considerations and then present our
Hoare logic and the soundness and completeness proofs.  In
Section~\ref{sec:Hoare}, we show the embeddings of the state-based
partial-correctness and total-correctness Hoare logics into our logic
and the projections back.  In Section~\ref{sec:examples}, we consider
examples.  In Section~\ref{sec:related}, we discuss the related work,
to conclude in Section~\ref{sec:concl}.  

We have formalized the development fully constructively in Coq version
8.1pl3 using the \textsf{Ssreflect} syntax extension library. The Coq
development is available at
\url{http://cs.ioc.ee/~keiko/code/abyss.tgz}.

Both the paper and the accompanying Coq code use coinductive types,
corecursion and coinduction extensively. For an introduction, we can
refer the reader to the exposition of Bertot and Cast\'eran
\cite[Ch.~13]{coqart}. In the paper, we have sought to abstract over
the more bureaucratic aspects involved in the Coq formalization (e.g.,
working with Coq's restricted guardedness condition on cofix
definitions, i.e., definitions by corecursion, proofs by coinduction).


\section{Big-step semantics}
\label{sec:sem}

We start with our big-step semantics. This is defined in terms of
states and traces. The notion of a state is standard. A state $\st \in
\state$ is an assignment of integer values to the variables. Traces
$\tr \in \trace$ are defined coinductively by the rules\footnote{We
  mark coinductive definitions by double horizontal rules.}
\[
\infer={\sglt{\st} \in \trace}{
}
\quad
\infer={\cons{\st}{\tr} \in \trace}{
  \tr \in \trace
}
\]
so a trace is a non-empty colist (possibly infinite sequence) of
states. We also define (strong) bisimilarity of two traces, $\bism{\tr}{\tr'}$,
coinductively by
\[
\infer={\bism{\sglt{\st}}{\sglt{\st}}}{
}
\quad
\infer={\bism{\cons{\st}{\tr}}{\cons{\st}{\tr'}}}{
  \bism{\tr}{\tr'}
}
\]
Bisimilarity is straightforwardly seen to be an equivalence. We think
of bisimilar traces as equal, i.e., type-theoretically we treat traces
as a setoid with bisimilarity as the equivalence
relation.\footnote{Classically, strong bisimilarity is equality. But we
  work in an intensional type theory where strong bisimilarity of
  colists is weaker than equality (just as equality of two functions
  on all arguments is weaker than equality of these two
  functions).}\footnote{In particular, we ``pattern-match'' traces
  using bisimilarity: any trace $\tr$ is bisimilar to either a trace
  of the form of $\sglt{\st}$ or one of the form
  $\cons{\st}{\tr'}$.} Accordingly, we have to make sure that all
functions and predicates we define on traces are setoid functions and
predicates (i.e., insensitive to bisimilarity). We define the initial
state $\hd~\tr$ of a trace $\tr$ by case distinction by
$\hd~\sglt{\st} = \st, \hd~(\cons{\st}{\tr}) = \st$. The function
$\hd$ is a setoid function. We also define finiteness of a trace (with
a particular final state) and infiniteness of a trace inductively
resp.\ coinductively by
\[
\infer{\conv{\sglt{\st}}{\st}}{
}
\quad
\infer{\conv{\cons{\st}{\tr}}{\st'}}{ 
  \conv{\tr}{\st'}
}
\hspace*{1.5cm}
\infer={\div{(\cons{\st}{\tr})}}{
  \div{\tr}
}
\]
Finiteness and infiniteness are setoid predicates. 
It should be noticed that infiniteness is defined 
positively, not as negation of finiteness. 
Constructively, it
is not the case that $\forall \tr.\, 
(\exists \st.\, \conv{\tr}{\st}) \vee \div{\tr}$, 
which amounts to asserting that finiteness is decidable. 
In particular, 
$\forall \tr.\,(\neg\, \exists \st.\, \conv{\tr}{\st})
\to \div{\tr}$
is constructively provable, but 
$\forall \tr.\, \neg\, \div{\tr} \to \exists \st.\, \conv{\tr}{\st}$ is not. 



The statements of the While language are given by the following
grammar where $x$ ranges over (integer) variables and $e$ over
(arithmetic) expressions built over variables.
\[
s ::= \Assign{x}{e} \mid \Skip \mid \Seq{s_0}{s_1}
    \mid \Ifthenelse{e}{s_t}{s_f} \mid \While{e}{s_t}
\]
The integer value of an expression $e$ in a state $\st$ is denoted
$\sem{e}{\st}$. We also interpret expressions as booleans; $\st
\models e$ stands for $e$ being true in $\st$. 
Evaluation $\exec{s}{\st}{\tr}$, expressing that running a statement
$s$ from a state $\st$ produces a trace $\tr$, is defined
coinductively by the rules in Figure~\ref{fig:sem}. The rules for
sequence and while implement the necessary sequencing with the help of
extended evaluation $\execseq{s}{\tr}{\tr'}$, 
also defined coinductively, as the coinductive prefix
closure of evaluation:
$\execseq{s}{\tr}{\tr'}$ expresses that 
running
a statement $s$ from the last state (if it exists) of an already
accumulated trace $\tr$ results in a total trace $\tr'$.  

A remarkable feature of the definition of $\execseq{s}{\tr}{\tr'}$ 
is that it does not hinge on deciding whether the trace $\tr$ 
is finite or not, which
is constructively impossible. A proof of $\execseq{s}{\tr}{\tr'}$ 
simply traverses the already accumulated trace $\tr$: if the last
element is hit, which is the case when $\tr$ is finite, 
then the statement is run, otherwise the traversal goes on
forever.

We look closer at the sequence rule. We want to conclude that
$\exec{\Seq{s_0}{s_1}}{\st}{\tr'}$ from the premise
$\exec{s_0}{\st}{\tr}$.  Classically, either the run of $s_0$ terminates, i.e.,
$\conv{\tr}{\st'}$ for some $\st'$, or it diverges, i.e., $\div{\tr}$.
In the first case, we would like to additionally use that $\tr$ is a
finite prefix of $\tr'$ and that $\exec{s_1}{\st'}{\tr''}$, where
$\tr''$ is the rest of $\tr'$. In the second case, it should be case that
$\bism{\tr}{\tr'}$. In both cases, the desirable condition is
equivalent to $\execseq{s_1}{\tr}{\tr'}$, which is the second premise
of our rule.  The use of extended evaluation, defined as the
coinductive (rather than inductive) prefix closure of evaluation,
allows us to avoid the need to decide whether the run of $s_0$ terminates or
not.


\begin{figure}
\[
\begin{array}{c}
\infer={
  \exec{\Assign{x}{e}}{\st}{\cons{\st}{\sglt{\update{\st}{x}{\eval{e}{\st}}}}}
}{}
\quad
\infer={
  \exec{\Skip}{\st}{\sglt{\st}}
}{}
\quad
\infer={
  \exec{\Seq{s_0}{s_1}}{\st}{\tr'}
}{
  \exec{s_0}{\st}{\tr}
  &\execseq{s_1}{\tr}{\tr'}
}
\\[1ex]
\infer={
  \exec{\Ifthenelse{e}{s_t}{s_f}}{\st}{\tr}
}{
  \istrue{e}{\st}
  &\execseq{s_t}{\cons{\st}{\sglt{\st}}}{\tr}
}
\quad
\infer={
  \exec{\Ifthenelse{e}{s_t}{s_f}}{\st}{\tr}
}{
  \isfalse{e}{\st}
  &\execseq{s_f}{\cons{\st}{\sglt{\st}}}{\tr}
}
\\[1ex]
\infer={
  \exec{\While{e}{s_t}}{\st}{\tr'}
}{
  \istrue{e}{\st}
  &\execseq{s_t}{\cons{\st}{\sglt{\st}}}{\tr}
  &\execseq{\While{e}{s_t}}{\tr}{\tr'}
}
\quad
\infer={
  \exec{\While{e}{s_t}}{\st}{\cons{\st}{\sglt{\st}}}
}{
  \isfalse{e}{\st}
}
\\[2ex]
\infer={
  \execseq{s}{\sglt{\st}}{\tr}
}{
  \exec{s}{\st}{\tr}
}
\quad
\infer={
  \execseq{s}{\cons{\st}{\tr}}{\cons{\st}{\tr'}}
}{
  \execseq{s}{\tr}{\tr'}
}
\end{array}
\]
\caption{Big-step semantics}\label{fig:bigrel}
\label{fig:sem}
\end{figure}

Evaluation is a setoid predicate.

\begin{prop}[\cite{NU:trabco}]
  For any $s$, $\st$, $\tr$ and $\tr'$, if $\exec{s}{\st}{\tr}$ and
  $\bism{\tr}{\tr'}$, then $\exec{s}{\st}{\tr'}$.
\end{prop}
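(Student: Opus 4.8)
The plan is to prove this by coinduction. The one subtlety that must be got right at the outset is that the evaluation relation $\exec{s}{\st}{\tr}$ is defined mutually with the extended evaluation relation $\execseq{s}{\tr_0}{\tr}$ — each occurs in the rules for the other — so a coinduction on evaluation alone will not go through. Instead I would strengthen the goal and prove, by a single simultaneous coinduction, both \emph{(i)} for all $s, \st, \tr, \tr'$, if $\exec{s}{\st}{\tr}$ and $\bism{\tr}{\tr'}$ then $\exec{s}{\st}{\tr'}$; and \emph{(ii)} for all $s, \tr_0, \tr, \tr'$, if $\execseq{s}{\tr_0}{\tr}$ and $\bism{\tr}{\tr'}$ then $\execseq{s}{\tr_0}{\tr'}$. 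Note that it suffices to let only the \emph{output} trace vary up to bisimilarity; the input trace $\tr_0$ of extended evaluation can be kept fixed, since in every rule it is either a concrete two-state trace or a trace handed over verbatim from another premise.

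For \emph{(i)} I would case-split on the last rule of the derivation of $\exec{s}{\st}{\tr}$. The assignment, $\Skip$ and while-false cases are bases: there $\tr$ has a fixed shape ($\sglt{\st}$, or $\cons{\st}{\sglt{\st'}}$ for the appropriate $\st'$), so inverting $\bism{\tr}{\tr'}$ forces $\tr'$ into the same shape and the goal holds by the very same rule, with no recursive call. In the $\Ifthenelse{}{}{}$, sequence and while-true cases the conclusion is rebuilt by the same constructor, leaving the state-level premises (the truth value of the test, the judgement $\exec{s_0}{\st}{-}$) untouched and feeding the output-producing premise — which is an extended-evaluation judgement whose output is $\tr$ — to the coinductive hypothesis \emph{(ii)} together with $\bism{\tr}{\tr'}$. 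In each such case the appeal to the coinductive hypothesis sits under an application of an evaluation constructor, so the coinduction is guarded.

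For \emph{(ii)} I would case-split on the derivation of $\execseq{s}{\tr_0}{\tr}$. If it is the base rule, $\tr_0 = \sglt{\st}$ and $\exec{s}{\st}{\tr}$; then part \emph{(i)} applied to $\exec{s}{\st}{\tr}$ and $\bism{\tr}{\tr'}$ yields $\exec{s}{\st}{\tr'}$, and reapplying the rule gives the goal — guarded, since the use of \emph{(i)} is under the extended-evaluation constructor. If it is the step rule, then $\tr_0 = \cons{\st}{\tr_0'}$ and $\tr = \cons{\st}{\tr_1}$ with $\execseq{s}{\tr_0'}{\tr_1}$; inverting $\bism{\cons{\st}{\tr_1}}{\tr'}$ exhibits $\tr'$ as $\cons{\st}{\tr_1'}$ with $\bism{\tr_1}{\tr_1'}$, so the coinductive hypothesis \emph{(ii)} on $\execseq{s}{\tr_0'}{\tr_1}$ and $\bism{\tr_1}{\tr_1'}$ gives $\execseq{s}{\tr_0'}{\tr_1'}$, and the step rule rebuilds $\execseq{s}{\cons{\st}{\tr_0'}}{\cons{\st}{\tr_1'}}$. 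The original proposition is then the special case \emph{(i)}.

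The routine but load-bearing detail throughout is the bookkeeping with bisimilarity: whenever the output trace is a cons, one must invert the bisimilarity hypothesis to expose the matching cons on the other side (and in the base cases to collapse $\tr'$ to the forced shape), relying on bisimilarity being an equivalence that respects $\hd$ and cons. The only genuine design decision — and the thing that a naive attempt at a plain coinduction on evaluation would stumble over — is recognizing up front that evaluation and extended evaluation must be shown closed under output-bisimilarity \emph{together}, in one coinductive argument, with every recursive call landing under a constructor so that the guardedness condition is met.
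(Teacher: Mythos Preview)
The paper does not give its own proof of this proposition; it merely cites it from the earlier paper~\cite{NU:trabco}. Your proposed argument is correct and is exactly the natural one: since $\execnoargs$ and $\execseqnoargs$ are defined as a mutual coinductive family, closure under bisimilarity of the output trace must be established for both simultaneously, and your case analysis, inversion of bisimilarity, and guardedness accounting are all in order. There is nothing to compare against in the present paper, but your proof is the one a reader would reconstruct.
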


The trace produced from a state begins with this state.

\begin{prop}
  For any $s$, $\st$ and $\tr$, if $\exec{s}{\st}{\tr}$, then $\hd~
  \tr = \st$.
\end{prop}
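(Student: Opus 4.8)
The plan is to prove the statement by structural induction on $s$, at each step inverting the evaluation relation $\exec{s}{\st}{\tr}$ at its root and, wherever it occurs, inverting the extended-evaluation relation $\execseq{\cdot}{\cdot}{\cdot}$ as well. No coinduction is needed: the head of a trace is fixed by a single unfolding, so we only ever \emph{use} the coinductively defined relations by inversion, which is always legitimate, and never have to \emph{establish} membership in them.

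For $\Assign{x}{e}$ and $\Skip$ the claim is immediate, since inverting $\exec{s}{\st}{\tr}$ exposes $\tr$ as $\cons{\st}{\sglt{\update{\st}{x}{\eval{e}{\st}}}}$, resp.\ $\sglt{\st}$, whose head is $\st$ by definition of $\hd$ (and, $\hd$ being a setoid function, it does not matter whether inversion yields this up to definitional equality or only up to bisimilarity). For $\Ifthenelse{e}{s_t}{s_f}$, inverting $\exec{s}{\st}{\tr}$ gives, in either branch, a derivation of $\execseq{s'}{\cons{\st}{\sglt{\st}}}{\tr}$ with $s'$ one of $s_t, s_f$; since the accumulated trace $\cons{\st}{\sglt{\st}}$ is a cons, inversion of this extended-evaluation judgement can only use the second extended-evaluation rule, and hence forces $\tr$ to have the form $\cons{\st}{\tr'}$, so $\hd~\tr = \st$. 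The while case is analogous: in the false branch $\tr = \cons{\st}{\sglt{\st}}$ directly; in the true branch we obtain $\execseq{s_t}{\cons{\st}{\sglt{\st}}}{\tr_0}$ and $\execseq{\While{e}{s_t}}{\tr_0}{\tr}$, inverting the first forces $\tr_0 = \cons{\st}{\tr_0'}$, and then inverting the second forces $\tr = \cons{\st}{\tr'}$. Note that neither the conditional nor the loop case actually invokes the induction hypothesis.

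The only case that uses the induction hypothesis, and the only one needing a small extra argument, is $\Seq{s_0}{s_1}$. Inverting $\exec{\Seq{s_0}{s_1}}{\st}{\tr}$ yields $\exec{s_0}{\st}{\tr_0}$ and $\execseq{s_1}{\tr_0}{\tr}$, and the induction hypothesis for $s_0$ gives $\hd~\tr_0 = \st$. It then remains to check that extended evaluation preserves the head of the accumulated trace, i.e.\ that $\execseq{s_1}{\tr_0}{\tr}$ implies $\hd~\tr = \hd~\tr_0$; this follows by case-analysing $\tr_0$: if $\tr_0$ is a singleton $\sglt{\st'}$, then inversion gives $\exec{s_1}{\st'}{\tr}$ and the induction hypothesis for $s_1$ gives $\hd~\tr = \st' = \hd~\tr_0$, while if $\tr_0 = \cons{\st'}{\tr_0'}$ then the second extended-evaluation rule forces $\tr = \cons{\st'}{\tr'}$, so again $\hd~\tr = \st' = \hd~\tr_0$. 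I do not expect any real obstacle: the whole argument is finitary, and the only point that needs care is performing inversion on the coinductively defined extended evaluation correctly — in particular, realising that it branches on the shape of the accumulated trace and that it is precisely the singleton (``$s_1$ run from the last state'') subcase that triggers the recursion on $s_1$. In the Coq development the only added friction is, as elsewhere, having to phrase trace pattern-matching via bisimilarity rather than definitional equality.
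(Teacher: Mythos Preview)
Your argument is correct. The paper states this proposition without proof, so there is no reference argument to compare against; structural induction on $s$ together with one-step inversion of the coinductive (extended) evaluation relation, as you outline, is exactly the natural route and goes through for the reasons you give.
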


Moreover, for While, evaluation is
deterministic (up to bisimilarity, as is appropriate for our notion of
trace equality).

\begin{prop}[\cite{NU:trabco}]\label{prop:determ}
  For any $s$, $\st$, $\tr$ and $\tr'$, if $\exec{s}{\st}{\tr}$ and
  $\exec{s}{\st}{\tr'}$, then $\bism{\tr}{\tr'}$.
\end{prop}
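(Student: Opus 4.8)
The plan is to prove the proposition by structural induction on $s$, resolving the trace bisimilarity within each case coinductively, and to carry along a companion fact about extended evaluation: if $\bism{\tr_1}{\tr_2}$, $\execseq{s}{\tr_1}{\tr_1'}$ and $\execseq{s}{\tr_2}{\tr_2'}$, then $\bism{\tr_1'}{\tr_2'}$. We need the companion fact because the rules for sequencing and while funnel through extended evaluation. Throughout, the setoid property of evaluation (stated above) is used to pattern-match traces up to bisimilarity, and it is used repeatedly that evaluating an expression at a fixed state yields a fixed truth value, so that the two conflicting rules for a conditional, respectively for a while, can never both apply.

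First I would dispose of the base cases. For $\Skip$ and $\Assign{x}{e}$, inversion on the two derivations forces $\tr$ and $\tr'$ to be literally the same trace, namely $\sglt{\st}$ respectively $\cons{\st}{\sglt{\update{\st}{x}{\eval{e}{\st}}}}$. For $\Seq{s_0}{s_1}$, inversion yields $\exec{s_0}{\st}{\tr_0}$, $\execseq{s_1}{\tr_0}{\tr}$ and $\exec{s_0}{\st}{\tr_0'}$, $\execseq{s_1}{\tr_0'}{\tr'}$; the induction hypothesis for $s_0$ gives $\bism{\tr_0}{\tr_0'}$, and the companion fact for $s_1$---which follows from the induction hypothesis for $s_1$ by a standalone coinduction---upgrades this to $\bism{\tr}{\tr'}$. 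For $\Ifthenelse{e}{s_t}{s_f}$ both derivations take the same branch, and we finish with the companion fact for $s_t$ or $s_f$ applied to the two runs from $\cons{\st}{\sglt{\st}}$. For $\While{e}{s_t}$ the truth value of $e$ at $\st$ again pins down the rule used: if $e$ is false, both traces are $\cons{\st}{\sglt{\st}}$; if $e$ is true, inversion gives intermediate traces $\tr_1$, $\tr_1'$ with $\execseq{s_t}{\cons{\st}{\sglt{\st}}}{\tr_1}$, $\execseq{\While{e}{s_t}}{\tr_1}{\tr}$ and the primed counterparts, whence the companion fact for $s_t$ gives $\bism{\tr_1}{\tr_1'}$, and the companion fact for $\While{e}{s_t}$---which here must be proved by coinduction simultaneously with determinism of $\While{e}{s_t}$ itself---yields $\bism{\tr}{\tr'}$.

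The companion fact is proved by pattern-matching the accumulated traces. If $\tr_1 \approx \sglt{\st}$, then $\tr_2 \approx \sglt{\st}$ as well, extended evaluation collapses to plain evaluation $\exec{s}{\st}{\tr_1'}$ and $\exec{s}{\st}{\tr_2'}$, and we appeal to the evaluation part (which, for a proper subterm, is available as an induction hypothesis). If $\tr_1 \approx \cons{\st}{\tr_1''}$, then $\tr_2 \approx \cons{\st}{\tr_2''}$ with $\bism{\tr_1''}{\tr_2''}$, inversion gives $\tr_1' \approx \cons{\st}{\tr_1'''}$ and $\tr_2' \approx \cons{\st}{\tr_2'''}$ with $\execseq{s}{\tr_1''}{\tr_1'''}$ and $\execseq{s}{\tr_2''}{\tr_2'''}$, and the corecursion continues through the cons rule of bisimilarity.

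The step I expect to be the main obstacle is the guardedness of the inner coinduction in the loop case: both in the evaluation part for $\While{e}{s_t}$ with $e$ true and in the singleton case of the companion fact for $\While{e}{s_t}$, one reaches a corecursive appeal without having peeled a state off an accumulated trace, so no guard is visibly available. The resolution is that a run of a while loop always emits at least its current state: inversion on the two while rules and the extended-evaluation rules (together with the proposition that a produced trace begins with its initial state) shows that $\exec{\While{e}{s_t}}{\st}{\tr}$ forces $\tr \approx \cons{\st}{\tr''}$. Hence in both situations the two output traces are cons-traces with the same head $\st$, so the cons rule of bisimilarity can be applied first, discharging the guard, and only then is the corecursive appeal made. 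With this observation in place, the rest is routine inversion on the coinductively defined evaluation and extended-evaluation rules.
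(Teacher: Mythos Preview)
The paper does not prove this proposition here: it is quoted from the authors' earlier work~\cite{NU:trabco} with no proof sketch in the present paper, so there is no in-paper argument to compare against.

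Your plan is sound. The outer structural induction on $s$, combined with an inner coinduction for bisimilarity and the companion lemma on extended evaluation (from $\bism{\tr_1}{\tr_2}$, $\execseq{s}{\tr_1}{\tr_1'}$ and $\execseq{s}{\tr_2}{\tr_2'}$ conclude $\bism{\tr_1'}{\tr_2'}$), is the right decomposition; for proper subterms the companion lemma follows once and for all from determinism by a separate guarded coinduction, as you say. Your diagnosis of the guardedness obstacle in the while case, and its resolution via the observation that every while-run emits a leading $\cons{\st}{\ldots}$ so that the cons constructor of bisimilarity can be applied \emph{before} the corecursive appeal, is exactly the crux. One minor simplification you might consider: rather than a genuinely mutual coinduction between determinism of the loop and its companion fact, you can prove just the companion fact for $\While{e}{s_t}$ by a single coinduction. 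In its singleton case, invert the two while-evaluations, use the companion fact for $s_t$ (available from the outer structural induction hypothesis) to align the body traces, peel the guaranteed leading $\st$ from both output traces, and corecurse on the companion fact itself with the tails; this is guarded by the cons just peeled. Determinism of the loop is then the instance with accumulated trace $\sglt{\st}$.
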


And it is also total.

\begin{prop}[\cite{NU:trabco}]\label{prop:total}
  For any $s$ and $\st$, there exists $\tr$ such that $\exec{s}{\st}{\tr}$.
\end{prop}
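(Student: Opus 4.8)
The plan is to prove the statement by structural induction on $s$, constructing the witnessing trace as we go; the only genuinely delicate case is $\While{e}{s_t}$, where the trace must be produced corecursively. Before the induction I would isolate an auxiliary lemma: \emph{if $\exec{s}{\cdot}{\cdot}$ is total (for every $\st$ there is $\tr$ with $\exec{s}{\st}{\tr}$), then $\execseq{s}{\cdot}{\cdot}$ is total too (for every $\tr$ there is $\tr'$ with $\execseq{s}{\tr}{\tr'}$)}. This is proved by coinduction, analyzing the accumulated trace: if $\bism{\tr}{\sglt{\st}}$, take the $\tr'$ supplied by the totality hypothesis at $\st$ and apply the first rule for extended evaluation; if $\bism{\tr}{\cons{\st}{\tr_0}}$, get $\tr_0'$ with $\execseq{s}{\tr_0}{\tr_0'}$ from the coinduction hypothesis and conclude $\execseq{s}{\cons{\st}{\tr_0}}{\cons{\st}{\tr_0'}}$ by the second rule; exactly one constructor is emitted before the corecursive appeal, so the definition is guarded.

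With this lemma the induction is routine for the non-looping constructs. For $\Skip$ and $\Assign{x}{e}$ the corresponding axiom of the semantics exhibits a trace outright. For $\Seq{s_0}{s_1}$, the induction hypothesis for $s_0$ gives $\tr$ with $\exec{s_0}{\st}{\tr}$, and the auxiliary lemma fed with the induction hypothesis for $s_1$ gives $\tr'$ with $\execseq{s_1}{\tr}{\tr'}$, so the sequence rule yields $\exec{\Seq{s_0}{s_1}}{\st}{\tr'}$. For $\Ifthenelse{e}{s_t}{s_f}$, I would do case analysis on whether $\istrue{e}{\st}$ or $\isfalse{e}{\st}$ (decidable, since the boolean value of $e$ in $\st$ is computed from concrete integer values), then feed $\cons{\st}{\sglt{\st}}$ to the auxiliary lemma for $s_t$, resp.\ $s_f$, and apply the matching conditional rule.

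The crux is $\While{e}{s_t}$, and the main obstacle is that the while rule carries a premise $\execseq{\While{e}{s_t}}{\tr}{\tr'}$, so the auxiliary lemma is unavailable here: its hypothesis is totality of $\exec{\While{e}{s_t}}{\cdot}{\cdot}$, which is precisely what we are proving. Instead I would establish, by a single coinduction using only the induction hypothesis for $s_t$ (and the auxiliary lemma applied to $s_t$), the statement $\forall\tr.\,\exists\tr'.\,\execseq{\While{e}{s_t}}{\tr}{\tr'}$; applying it to the trace $\sglt{\st}$ and inverting the first extended-evaluation rule then gives the required $\exec{\While{e}{s_t}}{\st}{\tr'}$. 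In the coinduction, when the accumulated trace is $\cons{\st}{\tr_0}$ one recurses on $\tr_0$ and prepends $\st$. When it is $\sglt{\st}$ one cases on $\istrue{e}{\st}$: if it is false, $\cons{\st}{\sglt{\st}}$ works by the exit rule with no corecursive appeal; if it is true, the auxiliary lemma for $s_t$ gives $\tr_1$ with $\execseq{s_t}{\cons{\st}{\sglt{\st}}}{\tr_1}$, and since any such $\tr_1$ is of the form $\cons{\st}{\tr_1'}$, the corecursive instance of the statement at $\tr_1$ supplies $\tr'$ (necessarily of the form $\cons{\st}{\tr''}$) with $\execseq{\While{e}{s_t}}{\tr_1}{\tr'}$, and the while rule closes the case. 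Each loop traversal emits at least the state $\st$ before the corecursive appeal, which is what legitimizes the construction; the part most awkward to push through Coq's syntactic guardedness check is making this leading constructor manifest, e.g.\ by first destructing $\tr_1$ so that productivity is visible to the checker.
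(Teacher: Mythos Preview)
Your proposal is sound. The paper itself does not prove this proposition---it is cited from~\cite{NU:trabco}---and the only in-text hint is that ``the unique trace for a given statement and initial state is definable by corecursion,'' i.e., one first defines a functional semantics and then checks it against the relation. Your argument is that same construction in existential dress: the auxiliary lifting lemma for $\execseqnoargs$, the structural induction on $s$, and the guarded corecursion in the while case are exactly what the functional definition would do. The one thing to make explicit in a formal development is that ``coinduction on $\forall\tr.\,\exists\tr'.\,\ldots$'' really means defining the witness trace by \emph{corecursion} and then proving the $\execseqnoargs$ derivation by \emph{coinduction} as a second pass; your remarks on destructing $\tr_1$ to expose the leading $\st$ show you have the right productivity argument, but the two passes should be disentangled when mechanizing.
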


In our definition, we have made a choice as regards to what grows the
trace of a run. We have decided that assignments and testing of guards
of if- and while-statements augment the trace by a state (but $\Skip$
does not), 
e.g., we have $\exec{x
  :=17}{\st}{\cons{\st}{\sglt{\update{\st}{x}{17}}}}$,
$\exec{\While{\false}{\Skip}}{\st}{\cons{\st}{\sglt{\st}}}$ and
$\exec{\While{\true}{\Skip}}{\st}{\cons{\st}{\cons{\st}{\cons{\st}{\ldots}}}}$.

This is good for several reasons. 
First, $\Skip$ becomes the unit of sequential composition, 
i.e., the semantics does not
distinguish $s$, $\Seq{\Skip}{s}$ and $\Seq{s}{\Skip}$.  
Second, we get a notion of small steps
that fully agrees with a very natural coinductive trace-based
small-step semantics arising as a straightforward variation of the
textbook inductive state-based small-step semantics.
The third and most
important outcome is that any while-loop always progresses, because
testing of the guard is a small step.  
For instance, in our semantics $\While{\true}{\Skip}$ can only derive 
$\exec{\While{\true}{\Skip}}{\st}{\cons{\st}{\cons{\st}{\cons{\st}{\ldots}}}}$
(up to bisimilarity). 
As we discuss below, giving up insisting on progress in terms of growing
the trace would introduce some semantic anomalies. 
It
also ensures that evaluation is total---as we should expect. Given
that it is also deterministic, we can thus equivalently turn our
relational big-step semantics into a functional one: the unique trace
for a given statement and initial state is definable by corecursion.
(For details, see our previous paper \cite{NU:trabco}.)

The coinductive trace-based semantics agrees with the inductive
state-based semantics. 

\begin{prop}[\cite{NU:trabco}]
  For any $s$, $\st$, $\st'$, existence of $\tr$ such that
  $\exec{s}{\st}{\tr}$ and $\conv{\tr}{\st'}$ is equivalent to
  $\execind{s}{\st}{\st'}$.
\end{prop}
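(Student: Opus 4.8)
The plan is to prove the two implications of the equivalence separately. For the direction from the inductive semantics to the coinductive one --- assuming $\execind{s}{\st}{\st'}$, produce $\tr$ with $\exec{s}{\st}{\tr}$ and $\conv{\tr}{\st'}$ --- I would induct on the derivation of $\execind{s}{\st}{\st'}$ in the standard inductive state-based semantics. The cases for assignment, $\Skip$, if, and the false-guard \textbf{while} are immediate. The real work is in the two rules that build their trace by sequencing, namely sequential composition and the unfolding step of \textbf{while}; for these I need a gluing lemma for extended evaluation: if $\conv{\tr}{\st''}$ and $\exec{s}{\st''}{\tr'}$ then $\execseq{s}{\tr}{\tr \concat \tr'}$, where $\tr \concat \tr'$ is the trace obtained by appending $\tr'$ after $\tr$ and fusing the shared state $\st''$, and moreover $\conv{(\tr \concat \tr')}{\st'}$ iff $\conv{\tr'}{\st'}$. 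This lemma is itself proved by induction on the derivation of $\conv{\tr}{\st''}$. Granting it, the coinductive rules for $\Seq{s_0}{s_1}$ and $\While{e}{s_t}$ accept exactly the data supplied by the induction hypotheses, so every rule of the inductive semantics yields a coinductive evaluation together with a finiteness witness for its trace.

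The converse direction is the subtle one: from a coinductive derivation of $\exec{s}{\st}{\tr}$ and an inductive derivation of $\conv{\tr}{\st'}$ (so $\tr$ has some finite length $n$), reconstruct $\execind{s}{\st}{\st'}$. I would use a lexicographic induction whose primary measure is $n$ and whose secondary measure is the structure of $s$; neither measure works alone, since the \textbf{while} rule recurses on the same statement while sequencing with a component whose trace is a singleton (e.g.\ $\Skip$) leaves the trace length unchanged. The case analysis is on $s$ and rests on three auxiliary lemmas: (i) inversion for extended evaluation, i.e.\ $\execseq{s}{\sglt{\st}}{\tr}$ iff $\exec{s}{\st}{\tr}$, and $\execseq{s}{\cons{\st}{\tr}}{\tr'}$ iff $\bism{\tr'}{\cons{\st}{\tr''}}$ and $\execseq{s}{\tr}{\tr''}$ for some $\tr''$; (ii) finiteness reflection, i.e.\ if $\execseq{s}{\tr_0}{\tr}$ and $\conv{\tr}{\st'}$ then $\conv{\tr_0}{\st''}$ for some $\st''$, proved by induction on $\conv{\tr}{\st'}$ using the proposition that the trace produced from a state begins with that state; and (iii) decomposition, i.e.\ if $\execseq{s}{\tr_0}{\tr}$ and $\conv{\tr_0}{\st''}$ then $\bism{\tr}{\tr_0 \concat \tr''}$ and $\exec{s}{\st''}{\tr''}$ for some $\tr''$, with $\conv{\tr}{\st'}$ iff $\conv{\tr''}{\st'}$, proved by induction on $\conv{\tr_0}{\st''}$. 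In the $\Seq{s_0}{s_1}$ and $\While{e}{s_t}$ cases, (ii) and (iii) let me peel the already-consumed prefix off $\tr$ and recurse on the residual trace, which is never longer and is strictly shorter on every recursive call that keeps $s$ unchanged or larger; (i) handles the if-cases and the singleton-trace bookkeeping; the cases for assignment, $\Skip$ and the false-guard \textbf{while} read off a single inductive rule each.

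The main obstacle is precisely this constructive bookkeeping in the converse direction. Classically one would split each intermediate trace into the finite and the infinite case, but that split is constructively unavailable, so finiteness of every consumed prefix must instead be extracted from finiteness of the whole trace via lemma (ii), and termination of the recursion has to be justified by the explicit lexicographic order rather than by structural descent. Once that order and the three lemmas are in place, each case of the induction is a routine unfolding that matches one rule of the inductive semantics.
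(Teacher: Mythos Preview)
The paper does not actually contain a proof of this proposition: it is stated with a citation to \cite{NU:trabco} and no argument is given here, so there is no in-paper proof to compare your proposal against.

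Assessed on its own merits, your plan is sound. The forward direction by induction on the inductive derivation, together with the gluing lemma for $\execseqnoargs$ proved by induction on the finiteness witness, is the natural argument and goes through as you describe. For the converse direction, your lexicographic measure $(|\tr|, s)$ is exactly what is needed: the recursive call on $\While{e}{s_t}$ in the true-guard case keeps $s$ fixed but strictly shortens the trace (the consumed prefix from $\execseq{s_t}{\cons{\st}{\sglt{\st}}}{\tr}$ has length at least one), while the calls on $s_0$, $s_1$, $s_t$, $s_f$ decrease the secondary component even when the primary stays put. Your three auxiliary lemmas are correct and are each provable by the induction you name; in particular, lemma (ii) does go through by induction on $\conv{\tr}{\st'}$ because inversion on $\execseq{s}{\tr_0}{\tr}$ is determined by the shape of $\tr$ (a singleton $\tr$ forces $\tr_0$ to be a singleton, a cons $\tr$ gives a case split), even though $\execseqnoargs$ is coinductive. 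The one point worth being explicit about is that the case analysis in the converse direction is on the syntax of $s$, with inversion of the coinductive evaluation derivation in each branch; you say this, but it is the step where a reader might otherwise worry about ``inducting'' on a coinductive object.
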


We notice that the inductive state-based semantics cannot be made
total constructively. It is unproblematic to complement the
inductively defined terminating evaluation relation with a
coinductively defined diverging evaluation relation, but this does not
help, as we cannot decide the halting problem.

\subsubsection*{Discussions on alternative designs}
We look at several seemingly
not so different but problematic alternatives that we reject,
thereby revealing some subtleties in designing
coinductive big-step semantics and motivating our design choices. 

Since progress of loops is not required for wellformedness of the
definitions of $\execnoargs$ and $\execseqnoargs$, one might be tempted 
to regard guard testing to be instantaneous and modify the rules for
the while-loop to take the form
\[
\infer={
  \exec{\While{e}{s_t}}{\st}{\tr'}
}{
  \istrue{e}{\st}
  &\exec{s_t}{\st}{\tr}
  &\execseq{\While{e}{s_t}}{\tr}{\tr'}
}
\quad
\infer={
  \exec{\While{e}{s_t}}{\st}{\sglt{\st}}
}{
  \isfalse{e}{\st}
}
\]
This leads to undesirable outcomes.  We can derive
$\exec{\While{\true}{\Skip}}{\st}{\sglt{\st}}$, which means that the
non-terminating $\While{\true}{\Skip}$ is considered semantically
equivalent to the terminal (immediately terminating) $\Skip$.  Worse,
we can also derive \linebreak $\exec{\Seq{\While{\true}{\Skip}}{x :=
    17}}{\st}{\cons{\st}{\sglt{\update{\st}{x}{17}}}}$, which is even
more inadequate: a sequence can continue to run after the
non-termination of the first statement. Yet worse, inspecting the
rules closer we discover we are also able to derive
$\exec{\While{\true}{\Skip}}{\st}{\tr}$ for any $\tr$.
Mathematically, giving up insisting on progress in terms of growing
the trace has also the consequence that the relational semantics
cannot be turned into a functional one, although While should
intuitively be total and deterministic. In a functional semantics,
evaluation must be a trace-valued function and in a constructive
setting such a function must be productive.

Another option, where assignments and test of guards are properly
taken to constitute steps, could be to define $\execseqnoargs$
by case distinction on the statement by rules such as
\[
\infer={
  \execseq{\While{e}{s_t}}{\tr}{\tr''}
}{
  \tr \models^* e
  &\execseq{s_t}{\duplast ~\tr}{\tr'}
  &\execseq{\While{e}{s_t}}{\tr'}{\tr''}
}
\quad
\infer={
  \execseq{\While{e}{s_t}}{\tr}{\duplast ~\tr}
}{
  \tr \not\models^* e
}
\]
Here, $\duplast ~\tr$, defined corecursively, traverses $\tr$ and
duplicates its last state, if it is finite.  Similarly, $\tr \models^*
e$ and $\tr \not\models^* e$ traverse $\tr$ and evaluate $e$ in the
last state, if it is finite:
\[
\infer={
\cons{\st}{\tr} \models^* e
}{
\tr \models^* e
}
\quad
\infer={
\sglt{\st} \models^* e
}{\st \models e}
\hspace*{1.5cm}
\infer={
\cons{\st}{\tr} \not\models^* e
}{
\tr \not\models^* e
}
\quad
\infer={
\sglt{\st} \not\models^* e
}{\st \not \models e}
\]
(The rules for $\Skip$ and sequence are very simple and appealing in
this design.) The relation $\execnoargs$ would then be defined
uniformly by the rule
\[
\infer{\exec{s}{\st}{\tr}}{
  \execseq{s}{\sglt{\st}}{\tr} 
}
\]
It turns out that we can still derive
$\exec{\While{\true}{\Skip}}{\st}{\tr}$ for any $\tr$. We can even derive $\exec{\While{\true}{x := x+1}}{\st}{\tr}$ for any $\tr$.

The third alternative (Leroy and Grall use this technique in
\cite{LG:coibso}) is most close to ours. It introduces, instead of
our $\execseqnoargs$ relation, an auxiliary relation 
$\mathit{split}$, defined coinductively by
\[
\infer={\split{\sglt{\st}}{\sglt{\st}}{\st}{\sglt{\st}}}{
}
\quad
\infer={\split{(\cons{\st}{\tr})}{\sglt{\st}}{\st}{(\cons{\st}{\tr'})}}{
  \bism{\tr}{\tr'}
}
\quad
\infer={\split{(\cons{\st}{\tr})}{(\cons{\st}{\tr_0})}{\st'}{\tr_1}}{
  \split{\tr}{\tr_0}{\st'}{\tr_1}
}
\]
so that $\split{\tr'}{\tr_0}{\st'}{\tr_1}$ expresses that the trace
$\tr'$ can be split into a concatenation of traces $\tr_0$ and $\tr_1$
glued together at a mid-state $\st'$. 
Then the evaluation relation is defined by replacing 
the uses of $\execseqnoargs$ with $\mathit{split}$, e.g., 
the rule for the sequence statement would be:
\[
\infer={
  \exec{\Seq{s_0}{s_1}}{\st}{\tr'}
}{
 \split{\tr'}{\tr_0}{\st'}{\tr_1}
  & \exec{s_0}{\st}{\tr_0}
  &\exec{s_1}{\st'}{\tr_1}
}
\]
This third alternative does not cause any outright anomalies for
While. But alarmingly $s_1$ has to be run from some (underdetermined)
state within a run of $s_0; s_1$ even if the run of $s_0$ does not
terminate. In a richer language with abnormal terminations, we get a
serious problem: no evaluation is derived for
$\Seq{(\While{\true}{\Skip})}{\Abort}$ although the $\Abort$ statement
should not be reached.


\section{Hoare logic}
\label{sec:hoare}

We now proceed to the Hoare logic and its soundness and completeness
proof. We base our consequence rule on semantic entailment rather than
derivability in some fixed proof system. This allows us to sidestep
the problem of its unavoidable incompleteness due to the impossibility
of complete axiomatization of any theory containing arithmetic. By
identifying assertions with state and trace predicates (more
precisely, predicates expressible in the meta-logic), we also avoid
the risk of possible incompleteness due to a chosen narrower assertion
language not being closed under weakest preconditions/strongest
postconditions. To formulate the rules of the Hoare logic, we
introduce a small set of assertion connectives, i.e., operations on
predicates. To be able to express the strongest postcondition of any
precondition, we need a few additional connectives.

\subsection{Assertions}

Our assertions are predicates over states and traces. A state
predicate $U$ is any predicate on states (in particular, it need not
be decidable). From a trace predicate $P$, we require additionally
that it is a setoid predicate, i.e., it must be unable to distinguish
bisimilar traces.

Although we refrain from introducing a language of assertions, we
introduce a number of connectives for our assertions, which are
operations on predicates.  All trace predicate connectives yield
setoid predicates. The inference rules of the Hoare logic make use of
these connectives. Indeed, it was an intriguing exercise for us to
come up with connectives that would be small but expressive enough for
practical specification purposes and at the same time allow us to
prove the Hoare logic sound and complete in our constructive setting.

The definitions of these connectives are given in
Figure~\ref{fig:assert}.\footnote{We use the symbol $\models$ to
  highlight application of a predicate to a state or trace. We are not
  defining a single satisfaction relation $\models$ for some assertion
  language, but a number of individual state/trace predicates and
  operations on such predicates.  Some of these operations are defined
  inductively, some coinductively, some definitions are not recursive
  at all.}

\begin{figure}
\[
\begin{array}{c}
\infer{
  \st \models \true
}{}
\qquad
\infer{
  \st \models \neg U
}{\neg (\st \models U)}
\qquad
\infer{
  \st \models U \wedge V
}{\st \models U & \st \models V}
\qquad
\ldots
\\[2ex]
\infer{
  \tr \models \true
}{}
\qquad
\infer{
  \tr \models \neg P
}{\neg (\tr \models P)}
\qquad
\infer{
  \tr \models P \wedge Q
}{\tr \models P & \tr \models Q}
\qquad
\ldots
\\[2ex]
\infer{
  \satisfy{\sglt{\st}}{\sglt{U}}
}{
  \satisfy{\st}{U}
}
\qquad
\infer{
  \satisfy{\tr'}{\Follows{P}{Q}}
}{
  \satisfy{\tr}{P}
  &\follows{Q}{\tr}{\tr'}
}
\qquad 
\infer={\satisfy{\tr}{\rep{P}}}{
  \satisfy{\tr}{\sglt{\true}}
}
\quad
\infer={\satisfy{\tr'}{\rep{P}}}{
  \satisfy{\tr}{P}
  &
  \follows{\rep{P}}{\tr}{\tr'}  
}
\\[2ex]
\infer{
  \satisfy{\cons{\st}{(\update{\st}{x}{\eval{e}{\st}})}}{\update{U}{x}{e}}
}{
  \satisfy{\st}{U}
}
\qquad
\infer{
  \satisfy{\cons{\st}{\sglt{\st}}}{\dup{U}}
}{
  \satisfy{\st}{U}
}
\\[2ex]
\infer{
  \satisfy{\st}{\Last{P}}
}{
  \satisfy{\tr}{P}
  &
  \conv{\tr}{\st}
}
\qquad
\infer{\satisfy{\tr}{\finite}}{
  \conv{\tr}{\st}
}
\qquad
\infer{\satisfy{\tr}{\infinite}}{
  \div{\tr} 
}
\\[2.5ex]
\infer=[\mathsf{[flw\mbox{-}nil]}]{
  \follows{Q}{\sglt{\st}}{\tr}
}{
  \hd~\tr = \st
  &\satisfy{\tr}{Q}
}
\quad
\infer=[\mathsf{[flw\mbox{-}delay]}]{
  \follows{Q}{\cons{\st}{\tr}}{\cons{\st}{\tr'}}
}{
  \follows{Q}{\tr}{\tr'}
}
\\[2.5ex]
\infer{U \models V}{
  \forall \st.\,  \satisfy{\st}{U} \to \satisfy{\st}{V}
}
\qquad
\infer{P \models Q}{
  \forall \tr.\, \satisfy{\tr}{P} \to \satisfy{\tr}{Q}
}

\end{array}
\]
\caption{Semantics of assertions}
\label{fig:assert}
\end{figure}

The two most primitive state (resp.\ trace) predicates are $\true$ and
$\false$, which are respectively true and false for any state (resp.\
trace). We can also use the standard connectives $\neg, \wedge, \vee$
and quantifiers $\forall, \exists$ to build state and trace
predicates. The context disambiguates the overloaded notations for
these state and trace predicates.

For a state predicate $U$, 
the singleton $\sglt{U}$ is a trace 
predicate that is true of singleton
traces given by a state satisfying $U$. 
In particular $\sglt{\true}$ is true of any singleton trace.

For a state predicate $U$, the doubleton  $\dup{U}$ is true of a
doubleton trace whose two states are identical and satisfy $U$. 

For a state predicate $U$, the update  $\update{U}{x}{e}$ 
is the strongest postcondition of
the statement $x := e$ for the precondition $U$. It is true of a
doubleton trace whose first state $\st$ satisfies $U$ and
second state is obtained from the first by modifying the value of $x$
to become $\sem{e}~\st$.

For trace predicates $P$ and $Q$, the chop $P \chop Q$ is a trace
predicate that is true, roughly speaking, of a trace $\tr'$ that has a
prefix $\tr$ satisfying $P$, with the rest of $\tr'$ satisfying $Q$.
(To be more precise, the prefix and the rest overlap on
a mid-state which is the last state of the prefix and the first state
of the suffix.)
But its definition is carefully crafted, so that $Q$ is not checked,
if $\tr$ is infinite (in which case necessarily $\bism{\tr}{\tr'})$,
and this happens without case distinction on whether $\tr$ is finite.
This effect is achieved with the premise $\follows{Q}{\tr}{\tr'}$. The
relation $\follows{Q}{\tr}{\tr'}$ is defined coinductively. It
traverses all of $\tr$, making sure that it is a prefix of $\tr'$ 
(rule $\mathsf{flw\mbox{-}delay}$), and, upon
possible exhaustion of $\tr$ in a finite number of steps, checks $Q$
against the rest of $\tr'$ (rule $\mathsf{flw\mbox{-}nil}$). 
This way the problem of deciding whether
$\tr$ is finite is avoided, basically by postponing it, possibly
infinitely.

Our chop operator is classically equivalent to the chop operator from
interval temporal logic \cite{Mos:temlrh,HZM:harsbt} (cf.\ also the
separating conjunction of separating logic). Indeed, classically,
$\satisfy {\tr'}{P \chop Q}$ holds iff
\begin{itemize}
\item either, for some finite prefix $\tr$ of $\tr'$, we have $\tr \models P$ and $\tr'' \models Q$, where $\tr''$ is the rest of $\tr'$,
\item or $\tr'$ is infinite and $\tr' \models P$.
\end{itemize}
This is how the semantics of chop is defined in interval temporal
logic. But it involves an upfront decision of whether $P$ will be
satisfied by a finite or an infinite prefix of $\tr'$. Our definition
is fine-tuned for constructive reasoning.


For a trace predicate $P$, its iteration $\rep{P}$ is a trace
predicate that is true of a trace which is a concatenation of a
possibly infinite sequence of traces, each of which satisfies
$P$. (This is modulo the overlap of the last and first states of
consecutive traces in the sequence and the empty concatenation being a
singleton trace.) It is reminiscent of the Kleene star operator.  It
is defined by coinduction and takes into account possibilities of both
infiniteness of some single iteration and infinite repetition.

For a trace predicate $P$, $\Last{P}$ is a state predicate
that is true of states that can be the last state of a finite trace
satisfying $P$. 

Trace predicates $\finite$ and $\infinite$ are true of 
finite and infinite traces, respectively. 

For state predicates $U$ and $V$ (resp. trace predicates $P$ and $Q$), 
$U \models V$ (resp. $P \models Q$) denotes entailment. 

\begin{prop}\label{prop:asserts_setoid}
  For any $U$, $\sglt{U}$, $\update{U}{x}{e}$ and $\dup{U}$ are setoid
  predicates. For any setoid predicates $P$, $Q$, $P \chop Q$ is a
  setoid predicate. For any setoid predicate $P$, $\rep{P}$ 
  is a setoid predicates. Moreover, $\finite$ and $\infinite$
  are setoid predicates. 
\end{prop}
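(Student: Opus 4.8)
The plan is to prove each clause separately, in each case exhibiting how a bisimulation witnessing $\bism{\tr}{\tr'}$ is transported along the structure of the connective. For the non-recursive connectives $\sglt{U}$, $\update{U}{x}{e}$ and $\dup{U}$, this is immediate by case analysis: these predicates hold only of singleton or doubleton traces of a rigidly prescribed shape, and by the "pattern-matching up to bisimilarity" principle noted in the footnotes, any trace bisimilar to $\sglt{\st}$ is itself of the form $\sglt{\st}$, and any trace bisimilar to $\cons{\st}{\sglt{\st'}}$ is of the form $\cons{\st}{\sglt{\st'}}$ (with the component states literally equal, since bisimilarity on singletons forces equality). So if $\bism{\tr}{\tr'}$ and $\tr$ satisfies one of these predicates, $\tr'$ has the same shape with the same states and satisfies it too. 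Likewise $\finite$ is a setoid predicate because a simple induction on the derivation of $\conv{\tr}{\st}$, using $\bism{\tr}{\tr'}$, yields $\conv{\tr'}{\st}$; and $\infinite$ is a setoid predicate by a coinductive argument: from $\div{\tr}$ and $\bism{\tr}{\tr'}$ we get $\tr = \cons{\st}{\tr_0}$, $\tr' = \cons{\st}{\tr_0'}$, $\bism{\tr_0}{\tr_0'}$ and $\div{\tr_0}$, and the coinductive hypothesis closes the goal $\div{\tr'}$.

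The substantive work is in the chop clause, and here I would first establish an auxiliary lemma: \emph{if $Q$ is a setoid predicate, then $\follows{Q}{\tr}{\tr_1}$, $\bism{\tr}{\tr'}$ and $\bism{\tr_1}{\tr_1'}$ imply $\follows{Q}{\tr'}{\tr_1'}$.} This is proved by coinduction on the goal $\follows{Q}{\tr'}{\tr_1'}$, with case analysis on the derivation of $\follows{Q}{\tr}{\tr_1}$: in the $\mathsf{flw\mbox{-}nil}$ case we have $\tr = \sglt{\st}$, so $\tr' = \sglt{\st}$, and $\hd~\tr_1 = \st$ together with $\bism{\tr_1}{\tr_1'}$ gives $\hd~\tr_1' = \st$ ($\hd$ being a setoid function), while $\satisfy{\tr_1}{Q}$ and the setoid property of $Q$ give $\satisfy{\tr_1'}{Q}$; in the $\mathsf{flw\mbox{-}delay}$ case both $\tr$ and $\tr_1$ start with the same $\st$, hence so do $\tr'$ and $\tr_1'$, the tails are bisimilar, and the coinductive hypothesis applies. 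Granting this lemma, the setoid property of $P \chop Q$ is then easy (and requires no coinduction itself): from $\satisfy{\tr_1}{P\chop Q}$ we extract $\tr$ with $\satisfy{\tr}{P}$ and $\follows{Q}{\tr}{\tr_1}$; since $P$ is already setoid, $\satisfy{\tr}{P}$ stands, and the lemma applied with the reflexivity bisimulation on $\tr$ and with $\bism{\tr_1}{\tr_1'}$ gives $\follows{Q}{\tr}{\tr_1'}$, whence $\satisfy{\tr_1'}{P\chop Q}$.

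For $\rep{P}$, the argument is analogous but the recursion is nested: $\rep{P}$ is defined by coinduction \emph{using} the same $\follows{\cdot}{\cdot}{\cdot}$ relation, so I would reuse the auxiliary lemma above (now instantiated at $Q := \rep{P}$, which is legitimate once we are proving the setoid property of $\rep{P}$ coinductively) and prove $\satisfy{\tr_1}{\rep{P}} \to \bism{\tr_1}{\tr_1'} \to \satisfy{\tr_1'}{\rep{P}}$ by coinduction on the conclusion, with case analysis on the last rule: the base case reduces to the already-handled $\sglt{\true}$ clause, and the step case uses $\satisfy{\tr}{P}$ (setoid, so preserved), the auxiliary lemma to move $\follows{\rep{P}}{\tr}{\tr_1}$ to $\follows{\rep{P}}{\tr}{\tr_1'}$, and — since $\follows{}{}{}$ only ever consumes a finite initial segment before it would invoke $\rep{P}$ again — the coinductive hypothesis is guarded. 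The main obstacle I anticipate is precisely this guardedness bookkeeping for $\rep{P}$: one must be careful that the appeal to the coinductive hypothesis for the setoid property of $\rep{P}$ occurs only under a constructor of $\follows{}{}{}$, which is why it is cleaner to isolate the $\follows{}{}{}$-transport lemma first and feed it a bisimulation, rather than trying to do a single simultaneous coinduction over both relations at once.
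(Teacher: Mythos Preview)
Your proposal is correct and matches the paper's approach; the paper's proof is essentially a one-liner saying the non-recursive connectives follow from the definition, $\finite$ is handled by induction, $\infinite$ by coinduction, and $P \chop Q$ and $\rep{P}$ by coinduction. One small correction on the $\rep{P}$ clause: you cannot literally apply the previously-proved transport lemma for $\follows{Q}{\tr}{\tr'}$ with $Q := \rep{P}$ by feeding it the coinductive hypothesis as its ``$Q$ is a setoid predicate'' premise, since that use of the hypothesis is unguarded. Contrary to your final sentence, the clean route here \emph{is} a simultaneous (nested) coinduction, proving the setoid property of $\rep{P}$ and the bisimilarity-transport of $\follows{\rep{P}}{\tr}{\tr'}$ together, so that the appeal to the former in the $\mathsf{flw\mbox{-}nil}$ case is guarded under that constructor, which in turn sits under the step constructor of $\rep{P}$.
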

\begin{proof}
That $\sglt{U}$, $\update{U}{x}{e}$ and $\dup{U}$ are setoid predicates
follows from the definition.  We prove that $P \chop Q$ and
$\rep{P}$ are setoid predicates when $P$ and $Q$ are by coinduction.
That $\finite$ and $\infinite$ are setoid predicates is proved
by induction and by coinduction respectively. 
\end{proof}

\begin{prop}\label{prop:asserts_monotone}
  For any $U$ and $V$, if $U \models V$,
  then $\sglt{U} \models \sglt{V}$, $\update{U}{x}{e} \models \update{V}{x}{e}$
  and $\dup{U} \models \dup{V}$. 
  For any setoid predicates $P, P'$ and $Q$, if $P \models P'$,
  then $P \chop Q \models P' \chop Q$ and
  $Q \chop P \models Q \chop P'$.
  For any setoid predicates $P$ and $Q$, if $P \models Q$,
  then $\rep{P} \models \rep{Q}$ and $\Last{P} \models \Last{Q}$.
\end{prop}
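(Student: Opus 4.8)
The plan is to dispatch the non-recursive connectives directly and to isolate the two places where coinduction is genuinely needed. For $\sglt{U} \models \sglt{V}$, $\update{U}{x}{e} \models \update{V}{x}{e}$ and $\dup{U} \models \dup{V}$, I would just unfold the single defining rule: a trace satisfying the antecedent has the prescribed shape with its distinguished state(s) satisfying $U$, so $U \models V$ upgrades them to satisfy $V$, and the same rule reassembles the consequent. For $\Last{P} \models \Last{Q}$: from $\satisfy{\st}{\Last{P}}$ we get $\tr$ with $\satisfy{\tr}{P}$ and $\conv{\tr}{\st}$; apply $P \models Q$ to obtain $\satisfy{\tr}{Q}$, whence $\satisfy{\st}{\Last{Q}}$. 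For monotonicity of chop in the \emph{left} argument, $P \chop Q \models P' \chop Q$: from $\satisfy{\tr'}{P \chop Q}$ we get a witness $\tr$ with $\satisfy{\tr}{P}$ and $\follows{Q}{\tr}{\tr'}$; replace $\satisfy{\tr}{P}$ by $\satisfy{\tr}{P'}$ via $P \models P'$, reusing the very same $\tr$ and the same derivation of $\follows{Q}{\tr}{\tr'}$.

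Monotonicity of chop in the \emph{right} argument, $Q \chop P \models Q \chop P'$, needs one small coinductive lemma first: if $P \models P'$ then $\follows{P}{\tr}{\tr'}$ implies $\follows{P'}{\tr}{\tr'}$ for all $\tr,\tr'$. I would prove this by coinduction, case-splitting on the last rule of the given derivation. In the $\mathsf{[flw\mbox{-}nil]}$ case, $\tr = \sglt{\st}$, $\hd~\tr' = \st$ and $\satisfy{\tr'}{P}$, so $\satisfy{\tr'}{P'}$ and $\mathsf{[flw\mbox{-}nil]}$ applies; in the $\mathsf{[flw\mbox{-}delay]}$ case, strip and re-attach the common head, using the coinduction hypothesis on the tails. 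With this lemma, $Q \chop P \models Q \chop P'$ is immediate, exactly as in the left-argument case but now changing the derivation of the $\follows{\cdot}{\cdot}{\cdot}$-premise.

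The main obstacle is $\rep{P} \models \rep{Q}$. Here one cannot simply invoke the $\follows{\cdot}{\cdot}{\cdot}$-monotonicity lemma, because the two predicates at issue are $\rep{P}$ and $\rep{Q}$ and the needed fact $\rep{P} \models \rep{Q}$ is precisely what is being proved — the dependency is circular. The remedy is a \emph{simultaneous} coinduction establishing both
\[
\forall \tr.\ \satisfy{\tr}{\rep{P}} \to \satisfy{\tr}{\rep{Q}}
\qquad\mbox{and}\qquad
\forall \tr, \tr'.\ \follows{\rep{P}}{\tr}{\tr'} \to \follows{\rep{Q}}{\tr}{\tr'}.
\]
For the first, case-split on the last rule of $\satisfy{\tr}{\rep{P}}$: the base rule gives $\satisfy{\tr}{\sglt{\true}}$, which is also the base case of $\rep{Q}$; the step rule gives $\tr_0$ with $\satisfy{\tr_0}{P}$ and $\follows{\rep{P}}{\tr_0}{\tr}$, and we re-apply the step rule of $\rep{Q}$ using $P \models Q$ together with the second statement for the $\follows{}{}{}$-premise — a guarded corecursive appeal, as it sits under the $\rep{Q}$ step constructor. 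For the second, case-split on the last rule of $\follows{\rep{P}}{\tr}{\tr'}$: $\mathsf{[flw\mbox{-}delay]}$ recurses on the tails via the second statement; $\mathsf{[flw\mbox{-}nil]}$ reduces the goal to $\satisfy{\tr'}{\rep{Q}}$ from $\satisfy{\tr'}{\rep{P}}$, discharged by the first statement, again guarded under the $\mathsf{[flw\mbox{-}nil]}$ constructor. Setoid-ness of all predicates appearing (needed for these statements even to be well-posed as setoid predicates) is Proposition~\ref{prop:asserts_setoid}.
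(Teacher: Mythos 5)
Your proposal is correct and follows essentially the same route as the paper: the non-recursive connectives and $\Lastnoargs$ are handled directly from the definitions, chop-monotonicity (in the right argument) via a coinductive monotonicity lemma for $\follows{\cdot}{\cdot}{\cdot}$, and $\rep{P} \models \rep{Q}$ by a mutual/nested coinduction pairing the predicate statement with the corresponding $\follows{\cdot}{\cdot}{\cdot}$ statement, which is exactly the technique the paper uses (cf.\ its proof of Lemma~\ref{lemma:misc_asserts}(6)). The paper's own proof of this proposition is just a one-line summary of the same plan, so your write-up is a faithful, more detailed version of it.
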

\begin{proof}
That $\sglt{U}$, $\update{U}{x}{e}$, $\dup{U}$ are monotone
follows from the definition.  We prove that $P \chop Q$ and
$\rep{P}$ are monotone and that $\Last{P}$ is monotone by
coinduction and by induction respectively. 
\end{proof}

A number of logical consequences and equivalences hold about these
connectives, to be proved in Lemma~\ref{lemma:misc_asserts}.
We have the trivial equivalence: 
$\sglt{\true} \chop P \logequ P
\logequ P \chop \sglt{\true}$.
The chop operator is associative: 
$(P \chop Q) \chop R \logequ P \chop
(Q \chop R)$. 
The iterator operator $\rep{P}$ repeats $P$ either zero times
or once followed by further repetitions: 
$\rep{P} \logequ \sglt{\true} \vee (P \chop \rep{P})$. 
A trace is infinite if and only if $\false$ holds
for any last state: $\infinite \logequ \true \chop \sglt{\false}$. 
If every trace satisfying $P$ is infinite, i.e., if $P
\models \infinite$, then any trace satisfying $P$ has no last state,
i.e., $\Last{P} \logequ \false$.
We have $P \chop \sglt{\Last{P}} \logequ P$, 
so that if a trace satisfies $P$, then its last state, 
if exists, satisfies $\Last{P}$.
The last state of a singleton trace $\sglt{\st}$ is $\st$, therefore
we have $\Last{\sglt{U}} \logequ U$.
We also have 
$\Last{(P \chop Q)} \models \Last{Q}$, but the converse does not hold. 
E.g., $\Last{\true} \not\models \Last{(\false \chop \true)}$.
Instead, we have
$\Last{(\sglt{\Last{P}} \chop Q)} \logequ \Last{(P \chop Q)}$.
Finally we have $\Last{(P \chop \sglt{U})} \logequ \Last{P} \wedge U$.
Namely, a state satisfies $U$ and can be the last state of a finite trace 
satisfying $P$ if and only if it can be the last state of 
a finite trace satisfying $P \chop \sglt{U}$.

We define the concatenation of traces $\tr$ and $\tr'$, 
$\tr \concat \tr'$, 
by replacing the last state of $\tr$ by $\tr'$. Formally, it is defined
by corecursion by
\[
\sglt{\st} \concat \tr = \tr
\qquad
(\cons{\st}{\tr}) \concat \tr' = \cons{\st}{(\tr \concat \tr')}
\]

We first observe three results, which are useful for later proofs.

\begin{lem}\label{lemma:follows_true}
For any $\tr$, $\follows{\sglt{\true}}{\tr}{\tr}$.
\end{lem}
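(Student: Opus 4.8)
The plan is to prove $\follows{\sglt{\true}}{\tr}{\tr}$ for all $\tr$ by coinduction on the definition of the $\follows{\,\cdot\,}{\,\cdot\,}{\,\cdot\,}$ relation, using the ``pattern-matching up to bisimilarity'' principle for traces. Concretely, I fix an arbitrary $\tr$ and case-split on its shape.

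\medskip
\noindent\emph{Base case.} If $\tr$ is bisimilar to a singleton $\sglt{\st}$, then I must show $\follows{\sglt{\true}}{\sglt{\st}}{\sglt{\st}}$. This is an instance of rule $\mathsf{[flw\mbox{-}nil]}$: its two premises are $\hd~\sglt{\st} = \st$, which holds by definition of $\hd$, and $\satisfy{\sglt{\st}}{\sglt{\true}}$, which holds since $\st \models \true$ always holds and the singleton-trace predicate $\sglt{\true}$ is true of $\sglt{\st}$. (Strictly speaking, since $\follows{\,\cdot\,}{\,\cdot\,}{\,\cdot\,}$ is a setoid predicate in all three arguments, the value $\follows{\sglt{\true}}{\tr}{\tr}$ for the actual $\tr$ follows from $\bism{\tr}{\sglt{\st}}$.)

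\medskip
\noindent\emph{Coinductive step.} If $\tr$ is bisimilar to $\cons{\st}{\tr_0}$, then I must show $\follows{\sglt{\true}}{\cons{\st}{\tr_0}}{\cons{\st}{\tr_0}}$, which I derive by rule $\mathsf{[flw\mbox{-}delay]}$ from $\follows{\sglt{\true}}{\tr_0}{\tr_0}$; the latter is exactly the coinductive hypothesis applied to $\tr_0$. (In Coq terms: this is a one-line \texttt{cofix} whose corecursive call is guarded by the constructor for $\mathsf{[flw\mbox{-}delay]}$.)

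\medskip
There is essentially no obstacle here — the lemma is a warm-up establishing that $\sglt{\true}$ is a right unit for chop at the level of the $\follows{\,\cdot\,}{\,\cdot\,}{\,\cdot\,}$ relation. The only point that needs a modicum of care is the usual one with Coq's guardedness: the coinductive call must sit directly under the $\mathsf{[flw\mbox{-}delay]}$ constructor, with no intervening pattern-match that would break guardedness; handling the trace by bisimilarity rather than by a raw \texttt{match} (as the paper's second footnote on traces indicates) takes care of this cleanly.
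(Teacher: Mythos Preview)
Your proof is correct and matches the paper's own argument essentially line for line: coinduction with case analysis on $\tr$, using $\mathsf{[flw\mbox{-}nil]}$ together with $\sglt{\st}\models\sglt{\true}$ in the singleton case, and $\mathsf{[flw\mbox{-}delay]}$ plus the coinduction hypothesis in the cons case.
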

\begin{proof}
By coinduction and case analysis on $\tr$. 
The case of $\bism{\tr}{\sglt{\st}}$ follows from 
$\sglt{\st} \models \sglt{\true}$.
The case of $\bism{\tr}{\cons{\st}{\tr'}}$: we get 
$\follows{\sglt{\true}}{\tr'}{\tr'}$ from the coinduction hypothesis,
from which we obtain $\follows{\sglt{\true}}{\tr}{\tr}$.
\end{proof}

\begin{lem}\label{lemma:follows_singleton}
For any $U$ and $\tr, \tr'$, if $\follows{\sglt{U}}{\tr}{\tr'}$,
then $\bism{\tr}{\tr'}$.
\end{lem}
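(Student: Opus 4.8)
The plan is to prove this by coinduction on the definition of bisimilarity, proceeding by inversion on the hypothesis $\follows{\sglt{U}}{\tr}{\tr'}$, i.e., by case analysis on which of the two clauses defining $\follows{\cdot}{\cdot}{\cdot}$ produced it after one unfolding. There are two cases, corresponding to the rules $\mathsf{flw\mbox{-}nil}$ and $\mathsf{flw\mbox{-}delay}$.

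In the $\mathsf{flw\mbox{-}nil}$ case we have $\tr \approx \sglt{\st}$, $\hd~\tr' = \st$ and $\satisfy{\tr'}{\sglt{U}}$. Unfolding the definition of the singleton predicate, $\satisfy{\tr'}{\sglt{U}}$ forces $\tr' \approx \sglt{\st'}$ for some $\st'$ with $\satisfy{\st'}{U}$; but then $\hd~\tr' = \st'$, so $\st' = \st$, and hence $\tr \approx \sglt{\st} \approx \tr'$, which is witnessed by the first bisimilarity rule. No appeal to the coinduction hypothesis is needed here. In the $\mathsf{flw\mbox{-}delay}$ case we have $\tr \approx \cons{\st}{\tr_0}$, $\tr' \approx \cons{\st}{\tr'_0}$ and $\follows{\sglt{U}}{\tr_0}{\tr'_0}$. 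Applying the coinduction hypothesis to the last fact gives $\bism{\tr_0}{\tr'_0}$, and the second bisimilarity rule (for $\cons{\st}{\cdot}$) yields $\bism{\cons{\st}{\tr_0}}{\cons{\st}{\tr'_0}}$, i.e., $\bism{\tr}{\tr'}$. The corecursive appeal to the hypothesis is guarded, occurring under the $\cons$-constructor in the bisimilarity proof, so the coinduction is legitimate.

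The argument is short, and the only points demanding a little care are the usual bookkeeping with bisimilarity (treating traces up to $\approx$ when we ``pattern-match'' on $\tr$ and $\tr'$) and checking that the use of the coinduction hypothesis is guarded; neither is a genuine obstacle. If anything is delicate it is merely pinning down the final state of $\tr'$ in the $\mathsf{flw\mbox{-}nil}$ case from $\hd~\tr' = \st$ together with $\satisfy{\tr'}{\sglt{U}}$ — which also makes clear that the hypothesis $\satisfy{\st}{U}$ plays no role in the conclusion, so the same proof establishes the slightly stronger statement with $\sglt{\true}$ (or indeed any $\sglt{V}$) in place of $\sglt{U}$.
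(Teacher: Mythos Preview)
Your proof is correct and is exactly the approach the paper takes: the paper's own proof reads simply ``By coinduction and inversion on $\follows{\sglt{U}}{\tr}{\tr'}$,'' and you have faithfully unpacked the two cases of that inversion, including the key observation that in the $\mathsf{flw\mbox{-}nil}$ case the premise $\satisfy{\tr'}{\sglt{U}}$ forces $\tr'$ to be a singleton.
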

\begin{proof}
By coinduction and inversion on $\follows{\sglt{U}}{\tr}{\tr'}$.
\end{proof}

\begin{lem}\label{lemma:conv_last}
For any $U$ and $\tr$, 
if for any $\st$, $\conv{\tr}{\st}$ implies $\satisfy{\st}{U}$,
then $\follows{\sglt{U}}{\tr}{\tr}$.
\end{lem}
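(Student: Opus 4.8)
The plan is to prove Lemma~\ref{lemma:conv_last} by coinduction on the relation $\follows{\sglt{U}}{\tr}{\tr}$, following the same pattern as the proof of Lemma~\ref{lemma:follows_true}, but now carrying along the hypothesis about $\tr$'s last state. So I would set up the coinductive proof with the statement: for all $\tr$, if $\forall \st.\, \conv{\tr}{\st} \to \satisfy{\st}{U}$, then $\follows{\sglt{U}}{\tr}{\tr}$, and proceed by case analysis on $\tr$ (using the setoid ``pattern match'': $\tr$ is bisimilar to either $\sglt{\st}$ or $\cons{\st}{\tr'}$).

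In the case $\bism{\tr}{\sglt{\st}}$: here $\conv{\sglt{\st}}{\st}$ holds by the base rule for finiteness, so the hypothesis gives $\satisfy{\st}{U}$, hence $\satisfy{\sglt{\st}}{\sglt{U}}$; combined with $\hd~\sglt{\st} = \st$, rule $\mathsf{[flw\mbox{-}nil]}$ yields $\follows{\sglt{U}}{\sglt{\st}}{\sglt{\st}}$, and we transport this back along bisimilarity (using that $\follows{\cdot}{\cdot}{\cdot}$ respects bisimilarity, which follows from $\sglt{U}$ being a setoid predicate by Proposition~\ref{prop:asserts_setoid} together with the usual setoid-stability of coinductively defined trace relations). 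In the case $\bism{\tr}{\cons{\st}{\tr'}}$: I need to apply the coinduction hypothesis to $\tr'$, which requires checking its side condition, namely $\forall \st'.\, \conv{\tr'}{\st'} \to \satisfy{\st'}{U}$. This follows from the original hypothesis on $\tr$ because $\conv{\tr'}{\st'}$ implies $\conv{(\cons{\st}{\tr'})}{\st'}$ by the inductive step rule for finiteness, and then $\conv{\tr}{\st'}$ by bisimilarity-closure of finiteness (Proposition~\ref{prop:asserts_setoid} / the stated setoid property of $\finite$). So the coinduction hypothesis gives $\follows{\sglt{U}}{\tr'}{\tr'}$, and rule $\mathsf{[flw\mbox{-}delay]}$ produces $\follows{\sglt{U}}{\cons{\st}{\tr'}}{\cons{\st}{\tr'}}$, which we again transport back to $\tr$ along bisimilarity.

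The only mildly delicate point is the propagation of the side condition through the ``unfolding'' step: one must observe that every finite-convergence witness for the tail $\tr'$ lifts to one for $\cons{\st}{\tr'}$ (and hence for $\tr$), so the hypothesis of the lemma is preserved by the corecursive call. This is exactly the shape that makes the coinductive argument go through — there is no need to decide whether $\tr$ is finite, which is the whole point of the $\follows{}{}{}$ relation. I expect the bookkeeping with bisimilarity (ensuring all the predicates and the finiteness relation used are setoid-stable) to be the most fiddly part, but it is entirely routine given Propositions~\ref{prop:asserts_setoid} and~\ref{prop:asserts_monotone} and the setoid-pattern-matching convention already adopted in the paper.
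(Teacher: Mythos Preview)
Your proposal is correct and follows essentially the same approach as the paper's proof: coinduction with case analysis on $\tr$, using the hypothesis directly in the singleton case and propagating it to the tail via the step rule for $\convnoargs$ in the cons case before applying the coinduction hypothesis and $\mathsf{[flw\mbox{-}delay]}$. The extra detail you give about bisimilarity transport is routine bookkeeping that the paper leaves implicit.
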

\begin{proof}
By coinduction with case analysis on $\tr$.

The case of $\bism{\tau}{\sglt{\st}}$: We have $\conv{\tr}{\st}$,
hence $\st \models U$ by our hypothesis. 
We conclude $\follows{\sglt{U}}{\tr}{\tr}$.

The case of $\bism{\tr}{\cons{\st'}{\tr'}}$: Since 
$\conv{\tr'}{\st}$ implies
$\conv{\tr}{\st}$, we have that, 
for any $\st$, $\conv{\tr'}{\st}$ implies $\satisfy{\st}{U}$.
We get $\follows{\sglt{U}}{\tr'}{\tr'}$ by the coinduction hypothesis,
from where we conclude $\follows{\sglt{U}}{\tr}{\tr}$.
\end{proof}



\begin{lem}\label{lemma:misc_asserts}
For any $U, V$ and setoid predicates $P, Q$ and $R$, we have
\begin{enumerate}[label=\enspace(\arabic*)]
\item\label{lemma:sglt_dup}%

$\sglt{U} \chop \dup{V} \logequ \dup{U \wedge V}
\logequ \dup{U} \chop \sglt{V}$

\item\label{lemma:sglt_chop}%
$\sglt{U} \chop \sglt{V} \logequ \sglt{U \wedge V}$

\item\label{lemma:sglt_true_chop}%
$\sglt{\true} \chop P \logequ P
\logequ P \chop \sglt{\true}$

\item\label{lemma:chop_assoc}%

$(P \chop Q) \chop R \logequ P \chop
(Q \chop R)$

\item\label{lemma:rep_unfold}%

$\rep{P} \logequ \sglt{\true} \vee (P \chop \rep{P})$

\item\label{lemma:rep_idem}%
$\rep{P} \logequ \rep{P} \chop \rep{P}$

\item $\infinite \logequ \true \chop \sglt{\false}$

\item 
If $P \models \infinite$, then $\Last{P} \logequ \false$.

\item $P \logequ P \chop \sglt{\Last{P}}$
\end{enumerate}
\begin{enumerate}[label=\enspace(\arabic*),start=10]

\item\label{lemma:last_sglt}%
$\Last{\sglt{U}} \logequ U$

\item\label{lemma:last_chop}%
$\Last{(P \chop Q)} \models \Last{Q}$

\item\label{lemma:last_last}%
$\Last{(\sglt{\Last{P}} \chop Q)} \logequ \Last{(P \chop Q)}$

\item\label{lemma:last_chop_sglt}%
 $\Last{(P \chop \sglt{U})} \models  U$

\item\label{lemma:last_rep}
$\Last{(\sglt{I} \chop \rep{(P \chop \dup{I})})}
\models I$

\end{enumerate}
\end{lem}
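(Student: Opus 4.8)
The plan is to prove the fourteen items essentially in the stated order, after first isolating a handful of auxiliary facts about the coinductive relation $\follows{\,}{\,}{\,}$, since almost everything about $\chop$ reduces to manipulating it. I would establish: \emph{head preservation}, $\follows{Q}{\tr}{\tr'}$ implies $\hd~\tr = \hd~\tr'$ (one inversion); \emph{finite composition}, if $\conv{\tr}{\st_1}$, $\hd~\tr'' = \st_1$ and $\satisfy{\tr''}{Q}$, then $\follows{Q}{\tr}{\tr\concat\tr''}$ (induction on $\conv{\tr}{\st_1}$); \emph{finite decomposition}, if $\follows{Q}{\tr}{\tr'}$ and $\conv{\tr'}{\st}$, then there are $\st_1$ and $\tr''$ with $\conv{\tr}{\st_1}$, $\hd~\tr'' = \st_1$, $\satisfy{\tr''}{Q}$, $\conv{\tr''}{\st}$ and $\bism{\tr'}{\tr\concat\tr''}$ (induction on $\conv{\tr'}{\st}$, inverting the defining rules of $\follows{\,}{\,}{\,}$); and \emph{divergence swallows $\follows{\,}{\,}{\,}$}, if $\div{\tr}$ then $\follows{Q}{\tr}{\tr}$ (a short coinduction, or Lemma~\ref{lemma:conv_last} via the trivial $\div{\tr} \Rightarrow \forall\st.\,\neg\,\conv{\tr}{\st}$). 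I will also use Lemmas~\ref{lemma:follows_true} and \ref{lemma:follows_singleton} and the obvious fact that $\tr_0\concat\tr_1$ is finite with last state $\st$ when $\tr_0,\tr_1$ are finite and $\conv{\tr_1}{\st}$.

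Items~(1), (2), (3) and (10) are then discharged by unfolding the definitions of $\sglt{\cdot}$, $\dup{\cdot}$, $\chop$ and $\Lastnoargs$ and inverting the rules for $\follows{\,}{\,}{\,}$, with item~(3) additionally using Lemmas~\ref{lemma:follows_true} and \ref{lemma:follows_singleton}. Item~(5) is one unfolding of the coinductive definition of $\rep{P}$ read through the definition of $\chop$. Item~(11) is exactly the content of finite decomposition; item~(13) then follows from items~(11) and (10) with $Q := \sglt{U}$. For item~(7), $\satisfy{\tr}{\true\chop\sglt{\false}}$ unfolds to $\exists\tr_0.\,\follows{\sglt{\false}}{\tr_0}{\tr}$, whose consequence $\div{\tr}$ is a short coinduction inverting $\follows{\,}{\,}{\,}$ (the $\mathsf{flw\mbox{-}nil}$ case is absurd, demanding a state satisfying $\false$), while conversely $\div{\tr}$ supplies the witness $\tr_0 := \tr$ by ``divergence swallows $\follows{\,}{\,}{\,}$''. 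Item~(8) is immediate, a $\Lastnoargs$-witness for $P\models\infinite$ being a trace that is finite and infinite at once. For item~(9), the inclusion $P \models P\chop\sglt{\Last{P}}$ uses the trace itself and Lemma~\ref{lemma:conv_last} (all its last states satisfy $\Last{P}$ by the definition of $\Lastnoargs$), and the converse uses Lemma~\ref{lemma:follows_singleton} and the setoid-closure of $P$.

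The main obstacle is item~(4), associativity of $\chop$, which I would route through the equivalence: $\follows{Q\chop R}{\tr_0}{\tr}$ holds iff there is $\tr_1$ with $\follows{Q}{\tr_0}{\tr_1}$ and $\follows{R}{\tr_1}{\tr}$. The ``if'' direction is a coinduction walking down $\tr_0$: in the singleton case, $\mathsf{flw\mbox{-}nil}$-inversion of $\follows{Q}{\sglt{\st}}{\tr_1}$ gives $\hd~\tr_1 = \st$ and $\satisfy{\tr_1}{Q}$, so $\satisfy{\tr}{Q\chop R}$ and, by head preservation, $\hd~\tr = \st$; in the cons case both hypotheses are forced to be $\mathsf{flw\mbox{-}delay}$ steps, and the coinduction hypothesis handles the tails. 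The ``only if'' direction is the delicate point: from $\follows{Q\chop R}{\tr_0}{\tr}$ I must \emph{construct} the mediating trace $\tr_1$, which I do by corecursion on $\tr_0$---copying $\tr_0$ prefix by prefix and, once $\tr_0$ is exhausted, inverting $\satisfy{\tr}{Q\chop R}$ to read off its $Q$-prefix and continuing with that---followed by a separate coinduction verifying the two $\follows{\,}{\,}{\,}$ claims and a check that $\tr_1$ is insensitive to bisimilarity. Granting this, item~(4) is bookkeeping: $\satisfy{\tr}{(P\chop Q)\chop R}$ unfolds to a $\tr_0\models P$ together with a nested pair of $\follows{\,}{\,}{\,}$ facts for $Q$ and $R$ that the equivalence reassociates into the witness for $\satisfy{\tr}{P\chop(Q\chop R)}$, and symmetrically. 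I expect this corecursive construction and the proof of its bisimilarity-invariance to be the single most technical step of the lemma.

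The remaining items rest on the above. For item~(6), $\rep{P} \models \rep{P}\chop\rep{P}$ follows from $\sglt{\true}\models\rep{P}$ (first $\rep{\cdot}$ rule), item~(3) and monotonicity of $\chop$ (Proposition~\ref{prop:asserts_monotone}); conversely $\rep{P}\chop\rep{P} \models \rep{P}$ rewrites the left side by items~(3), (4), (5) into $\rep{P}\vee(P\chop(\rep{P}\chop\rep{P}))$ and closes by coinduction against the two $\rep{\cdot}$ rules. For item~(12), the inclusion $\Last{(P\chop Q)}\models\Last{(\sglt{\Last{P}}\chop Q)}$ applies finite decomposition to expose the last state $\st_0$ of the $P$-prefix, which satisfies $\Last{P}$, so the $Q$-suffix already witnesses $\sglt{\Last{P}}\chop Q$ with the correct last state; the converse uses finite composition to glue a finite $P$-trace ending at such an $\st_0$ in front of the $Q$-trace. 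Finally, for item~(14): after peeling the leading $\sglt{I}$ (leaving a trace whose head satisfies $I$ and which as a whole satisfies $\rep{R}$, $R := P\chop\dup{I}$), I show by well-founded induction on the length of the finite trace that any finite $\rep{R}$-trace with $I$ at its head has $I$ at its last state---unfolding $\rep{R}$ by item~(5), the trace is either a singleton (done) or, by finite decomposition, a nonempty $R$-prefix (never a singleton, since $\dup{I}$ holds only of doubletons, so the length strictly drops) followed by a shorter $\rep{R}$-trace whose head is the prefix's last state; that state satisfies $\Last{(P\chop\dup{I})}\models\Last{\dup{I}}\models I$ by items~(11) and (13) (after rewriting $\dup{I}$ as $\dup{I}\chop\sglt{I}$ via item~(1)), and the induction hypothesis finishes.
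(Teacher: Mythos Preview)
Your proof is correct and for most items follows the paper's line closely, including the corecursive construction of the mediating trace for associativity of $\chop$ (item~(4)), which is exactly the paper's $\midp$ function. Two points deserve comment.

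For item~(6) converse ($\rep{P}\chop\rep{P} \models \rep{P}$), your rewrite to $\rep{P}\vee(P\chop(\rep{P}\chop\rep{P}))$ is correct, but ``closes by coinduction against the two $\rep{\cdot}$ rules'' hides the step where you must convert $\follows{\rep{P}\chop\rep{P}}{\tr_1}{\tr}$ into $\follows{\rep{P}}{\tr_1}{\tr}$; this would use the coinductive hypothesis inside a proof of the coinductive $\follows{\,}{\,}{\,}$, which in Coq requires nested (equivalently, mutual) coinduction. The paper makes this structure explicit, proving two statements simultaneously: (a) $\tr\models\rep{P}$ and $\follows{\rep{P}}{\tr}{\tr'}$ imply $\tr'\models\rep{P}$, and (b) $\follows{\rep{P}}{-}{-}$ is transitive. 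Your plan can be completed the same way, but the sketch as written elides this.

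For item~(14) you take a genuinely different route. The paper first reduces (via item~(13) and monotonicity) to the trace-level entailment $\sglt{I}\chop\rep{(\true\chop\dup{I})} \models \true\chop\sglt{I}$ and then proves $\follows{\sglt{I}}{\tr}{\tr}$ by mutual coinduction, never leaving the coinductive world. You instead exploit the finiteness supplied by $\Lastnoargs$ and argue by well-founded induction on the length of the witnessing trace, using finite decomposition to peel off a non-singleton $P\chop\dup{I}$ prefix at each step. Both are valid; your argument is arguably more elementary here since finiteness is already at hand, while the paper's coinductive proof establishes a stronger trace-predicate entailment that does not mention $\Lastnoargs$ and would also cover infinite traces.
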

\begin{proof}\hfill
\begin{enumerate}[label=\enspace(\arabic*)]
\item Follows from the definition.

\item Follows from the definition. 

\item 
$\sglt{\true} \chop P \logequ P$ follows from the definition.
$P \models P \chop \sglt{\true}$ holds by Lemma~\ref{lemma:follows_true}.
Suppose $\tr' \models P \chop \sglt{\true}$.
There exists $\tr$ such that $\tr \models P$
and $\follows{\sglt{\true}}{\tr}{\tr'}$.
By Lemma~\ref{lemma:follows_singleton} $\bism{\tr}{\tr'}$ holds,
so we must have $\tr' \models P$ since $P$ is a setoid predicate. 
This proves $P \chop \sglt{\true} \models P$.

\item 
Suppose $\tr'' \models (P \chop Q) \chop R$.
There exist $\tr$ and $\tr'$ such that
$\tr \models P$ and $\follows{Q}{\tr}{\tr'}$
and $\follows{R}{\tr'}{\tr''}$. 
We prove, 
for any $\tr_0, \tr_1$ and $\tr_2$, 
$\follows{Q}{\tr_0}{\tr_1}$ and $\follows{R}{\tr_1}{\tr_2}$
imply $\follows{Q \chop R}{\tr_0}{\tr_2}$ by coinduction
and inversion on $\follows{Q}{\tr_0}{\tr_1}$.
This yields $\follows{Q \chop R}{\tr}{\tr''}$ therefore
$\tr'' \models P \chop (Q \chop R)$.

The converse is more subtle. Given $\tr'' \models P \chop (Q \chop R)$,
we have to find a prefix $\tr'$ of $\tr''$ that satisfies $P \chop Q$
while $\follows{R}{\tr'}{\tr''}$.
To do so, we define a function 
$\midp: (\follows{P_0 \chop Q_0}{\tr_0}{\tr_1}) \rar \trace$ by corecursion
(we take $\tr_0$ and $\tr_1$ to be implicit parameters of $\midp$,
inferred from the proof argument).\footnote{Where useful, we give hypothetical proofs names, like $h$ below; $\_$ is for an anonymous dummy argument. $\mathit{existT}$ is the constructor of sigma-types in Coq.}
\[
\begin{array}{l}
\midp~(\mathsf{flw\mbox{-}nil} 
~\st~\tr_0~(\_:\hd ~\tr_0 = \st)~(h:\tr_0 \models P_0 \chop Q_0))\\
\quad = \mathit{let~existT} ~\tr_1 
~(\_: \tr_1 \models P_0 \wedge \follows{Q_0}{\tr_1}{\tr_0}) 
= h ~\mathit{in}~\tr_1\\[1ex]
\midp~(\mathsf{flw\mbox{-}delay} 
~\st~\tr_0~\tr_1~(h:\follows{P_0\chop Q_0}{\tr_0}{\tr_1}))
= \cons{\st}{\midp~h}
\end{array}
\]
We then prove that,
for any $\tr_0, \tr_1$ and $h:\follows{P_0\chop Q_0}{\tr_0}{\tr_1}$,
$\follows{P_0}{\tr_0}{\midp~h}$
and $\follows{Q_0}{\midp~h}{\tr_1}$ hold
by coinduction and inversion on $h$.

Now assume $\tr'' \models P \chop (Q \chop R)$.
There exists $\tr$ such that $\tr \models P$
and $h:\follows{Q\chop R}{\tr}{\tr''}$.
We have $\midp~h \models P \chop Q$, since 
$\follows{Q}{\tr}{\midp~h}$.
This together with $\follows{R}{\midp~h}{\tr''}$ proves
$\tr'' \models (P\chop Q) \chop R$, as required. 

\item Follows from the definition.

\item Suppose $\tr \models \rep{P}$. We have to prove
$\tr \models \rep{P} \chop \rep{P}$. From Lemma~\ref{lemma:follows_true}
and (5), we deduce $\follows{\rep{P}}{\tr}{\tr}$,
which gives us $\tr \models \rep{P} \chop \rep{P}$.
Conversely, suppose $\tr \models \rep{P} \chop \rep{P}$.
There exists $\tr'$ such that $\tr' \models \rep{P}$ and
$\follows{\rep{P}}{\tr'}{\tr}$. We close the case by proving the following
two conditions by mutual coinduction\footnote{In Coq, we actually perform
nested coinduction.}:
\begin{enumerate}
\item $\forall \tr\, \tr'.\, \tr \models \rep{P} \imp
\follows{\rep{P}}{\tr}{\tr'} \imp  \tr' \models \rep{P}$
\item $\forall \tr\, \tr'\, \tr''.\, 
\follows{\rep{P}}{\tr}{\tr'} \imp \follows{\rep{P}}{\tr'}{\tr''}
\imp \follows{\rep{P}}{\tr}{\tr''}$.
\end{enumerate}

(a): We perform inversion on $\tr \models \rep{P}$.
The case of $\tr \models \sglt{\true}$, i.e., $\bism{\tr}{\sglt{\st}}$:  
From $\follows{\rep{P}}{\tr}{\tr'}$, we conclude $\tr' \models \rep{P}$.
The case of $\tr'' \models P$ and $\follows{\rep{P}}{\tr''}{\tr}$:
we get $\follows{\rep{P}}{\tr''}{\tr'}$ by (b),
from where we conclude $\tr' \models \rep{P}$.

(b): We perform inversion on 
$\follows{\rep{P}}{\tr}{\tr'}$. The case of $\bism{\tr}{\sglt{\st}}$
and $\hd~\tr' = \st$ and $\tr' \models \rep{P}$ follows from
(a).
The case of $\bism{\tr}{\cons{\st}{\tr_0}}$ and
$\bism{\tr'}{\cons{\st}{\tr'_0}}$ and 
$\follows{\rep{P}}{\tr_0}{\tr'_0}$: 
We must have $\bism{\tr''}{\cons{\st}{\tr''_0}}$
and $\follows{\rep{P}}{\tr'_0}{\tr''_0}$.
The coinduction hypothesis (b) gives us
$\follows{\rep{P}}{\tr''_0}{\tr_0}$,
from which we conclude 
$\follows{\rep{P}}{\tr''}{\tr}$.

\item We prove an auxiliary condition: for any $\tr$,
$\infinite ~\tr$ iff $\follows{\sglt{\false}}{\tr}{\tr}$
by coinduction. 
$\infinite \models \true \chop \sglt{\false}$ follows
from the condition.
$\true \chop \sglt{\false} \models \infinite$ follows
from the condition and Lemma~\ref{lemma:follows_singleton}.

\item Follows from $\infinite \wedge \finite \models \false$.

\item 
Suppose $\tr \models P$.
By the definition of $\Last{P}$, we have 
for any $\st$, $\conv{\tr}{\st}$ implies $\satisfy{\st}{\Last{P}}$.
We then deduce $\follows{\sglt{\Last{P}}}{\tr}{\tr}$
by Lemma~\ref{lemma:conv_last}, thus
conclude $\satisfy{\tr}{P \chop \sglt{\Last{P}}}$. 

Conversely, suppose that $\tr' \models P \chop \sglt{\Last{P}}$, 
i.e., $\tr \models P$
and $\follows{\sglt{\Last{P}}}{\tr}{\tr'}$ for some $\tr$.
By Lemma~\ref{lemma:follows_singleton}, $\bism{\tr}{\tr'}$ holds,
so we must have $\tr' \models P$ since $P$ is a setoid predicate. 

\end{enumerate}
\begin{enumerate}[label=\enspace(\arabic*),start=10]
\item Follows from the definition.

\item Suppose $\st \models \Last{(P \chop Q)}$. 
There exist $\tr$ and $\tr'$ such that $\conv{\tr'}{\st}$
and $\tr \models P$ and $\follows{Q}{\tr}{\tr'}$.
We have to prove $\st \models \Last{Q}$. We do so by 
proving an auxiliary condition: for any $\st_0$ and $\tr_0$, 
if $\conv{\tr_0}{\st_0}$,
then for any $\tr_1$, $\follows{Q}{\tr_1}{\tr_0}$ implies
$\st_0 \models \Last{Q}$ by induction on the derivation of 
$\conv{\tr_0}{\st_0}$.

\item Suppose $\st \models \Last{(\sglt{\Last{P}} \chop Q)}$.
There exist $\tr$ and $\tr'$ 
such that $\tr \models P$, $\conv{\tr}{\hd~\tr'}$,
$\tr' \models Q$ and $\conv{\tr'}{\st}$.
We then have that the concatenation of $\tr$ and $\tr'$ has the
desired properties. Namely, 
$\tr \concat \tr' \models P \chop Q$
and $\conv{(\tr \concat \tr')}{\st}$. 
This proves $\st \models \Last{(P \chop Q)}$ as we wanted. 

Conversely, suppose $\st \models \Last{(P \chop Q)}$.
There exist $\tr$ and $\tr'$ such that $\tr \models P$,
$\follows{Q}{\tr}{\tr'}$ and $\conv{\tr'}{\st}$.
The finiteness of $\tr'$ implies that of $\tr$,
i.e., we have $\conv{\tr}{\st'}$ for some $\st'$.
We can therefore find the suffix $\tr''$ of $\tr'$ such that
$\bism{\tr'}{\tr \concat \tr''}$ and $\hd~\tr'' = \st'$. 
(Basically, we drop the first $n$ elements from $\tr'$ to obtain $\tr''$,
where $n$ is the length of $\tr$. Since $\tr$ is finite,
its length is defined.)
Together $\follows{Q}{\tr}{\tr'}$ and $\conv{\tr}{\st'}$
proves $\tr'' \models Q$. This concludes
$\st \models \Last{(\sglt{\Last{P}} \chop Q)}$,
as we wanted. 

%

\item By the monotonicity of the last and chop operators,
  it suffices to prove
  $\Last{(\true \chop \sglt{U})} \models U$. Suppose $\st \models
  \Last{(\true \chop \sglt{U})}$. There exists $\tr$ such that
  $\conv{\tr}{\st}$ and $\tr \models \true \chop \sglt{U}$. 
  We now prove $\forall \tr, \st.\,
  \conv{\tr}{\st} \to \forall \tr'.\, \follows{\sglt{U}}{\tr'}{\tr}
  \to \st \models U$ by induction on the proof of $\conv{\tr}{\st}$,
  from where $\st \models U$ follows. 
  
\item By (13) and the monotonicity
of the chop and iterator operators, it suffices to prove
$\sglt{U} \chop \rep{(\true \chop \dup{U})} \models \true \chop \sglt{U}$.
Suppose $\tr \models \sglt{U} \chop \rep{(\true \chop \dup{U})}$.
There exists $\tr'$ such that $\tr' \models \sglt{U}$
and $\follows{\rep{(\true \chop \dup{U})}}{\tr'}{\tr}$,
which give us $\tr \models \rep{(\true \chop \dup{U})}$
and $\hd\, \tr \models U$.
We want to prove $\follows{\sglt{U}}{\tr}{\tr}$.
We do so by proving the following conditions by mutual coinduction.
\begin{enumerate}
\item $\forall \tr.\, \hd\, \tr \models U \imp 
\tr \models \rep{(\true \chop \dup{U})} \imp
\follows{\sglt{U}}{\tr}{\tr}$

\item $\forall \tr, \tr', \tr''.\, 
\follows{\dup{U}}{\tr}{\tr'} \imp 
\follows{\rep{(\true \chop \dup{U})}}{\tr'}{\tr''}
\imp \follows{\sglt{U}}{\tr''}{\tr''}$.
\end{enumerate}

(a): We perform inversion on $\tr \models \rep{(\true \chop \dup{U})}$.
The case of $\tr \models \sglt{\true}$: We have $\bism{\tr}{\sglt{\st}}$.
This and $\hd\, \tr \models U$ prove
$\follows{\sglt{U}}{\tr}{\tr}$.
The case of $\tr' \models \true \chop \dup{U}$
and $\follows{\rep{(\true \chop \dup{U})}}{\tr'}{\tr}$:
The former gives us $\follows{\dup{U}}{\tr''}{\tr'}$
for some $\tr''$. We therefore conclude 
$\follows{\sglt{U}}{\tr}{\tr}$ by (b).

(b): We perform inversion on 
$\follows{\dup{U}}{\tr}{\tr'}$:
The case of $\bism{\tr}{\sglt{\st}}$
and $\hd~\tr' = \st$ and $\tr' \models \dup{U}$:
We have $\bism{\tr'}{\cons{\st}{\sglt{\st}}}$, 
therefore $\bism{\tr''}{\cons{\st}{\tr''_0}}$
for some $\tr''_0$ with $\hd~\tr''_0 = \st$, 
and $\tr''_0 \models \rep{(\true \chop \dup{U})}$.
From (a), we obtain $\follows{\sglt{U}}{\tr''_0}{\tr''_0}$, which yields $\follows{\sglt{U}}{\tr''}{\tr''}$, as required. 
The case of $\bism{\tr'}{\cons{\st}{\tr'_0}}$ and
$\bism{\tr}{\cons{\st}{\tr_0}}$ and
$\follows{\dup{U}}{\tr_0}{\tr'_0}$:
We have $\bism{\tr''}{\cons{\st}{\tr''_0}}$
and $\follows{\rep{(\true \chop \dup{U})}}{\tr'_0}{\tr''_0}$.
By (b), we get 
$\follows{\sglt{U}}{\tr''_0}{\tr''_0}$,
which yields $\follows{\sglt{U}}{\tr''}{\tr''}$.

\end{enumerate}
\end{proof}

\subsection{Inference rules}

The derivable judgements of the Hoare logic are given by the
inductively interpreted inference rules in Figure~\ref{fig:hoare}. The
proposition $\semax{U}{s}{P}$ states derivability of the judgement.
The intent is that $\semax{U}{s}{P}$ should be derivable precisely
when running a statement $s$ from an initial state satisfying $U$
is guaranteed to produce a trace satisfying $P$.

The rules for assignment and $\Skip$ are self-explanatory.

The rule for sequence is defined in terms of the chop operator.  The
precondition $V$ for the second statement $s_1$ is given by those
states in which a run of the first statement $s_0$ may terminate.
In particular, if $\semax{U}{s_0}{P}$
and $P \models \infinite$, i.e., 
$s_0$ is necessarily diverging for the precondition $U$,
then we have $\semax{U}{s_0}{P \chop \sglt{\false}}$.
In this case, from the derivability of 
$\semax{\false}{s_1}{Q}$ for any $Q$, we get $\semax{U}{s_0; s_1}{P \chop Q}$ for any $Q$. But this makes sense, since $P \chop Q \logequ P$ as soon as $P \models \infinite$.

The rule for if-statement uses the doubleton operator in accordance
with the operational semantics where we have chosen that testing the
boolean guard grows the trace.

The rule for while-statement is inspired by the corresponding rule of
the standard, state-based partial-correctness Hoare logic. It uses a
loop invariant $I$. This is a state predicate that has to be true each
time the boolean guard is about to be (re-)tested in a run of the
loop.  Accordingly, the precondition $U$ should be stronger then $I$.
Also, $I$ must hold each time an iteration of $s_t$ has finished, as
enforced by having $P \chop \sglt{I}$ as the postcondition of $s_t$.
The postcondition $\dup{U} \chop \rep{(\REMOVED{\sglt{e} \chop} P
  \chop \dup{I})} \chop \sglt{\REMOVED{I \wedge} \neg e}$ of the loop
consists of three parts.  $\dup{U}$ accounts for the first test of the
guard; $\rep{(\REMOVED{\sglt{e} \chop} P \chop \dup{I})}$ accounts for
iterations of the loop body in alternation with re-tests of the guard (notice that that we are again using the doubleton operator);
$\sglt{\REMOVED{I \wedge} \neg e}$ accounts for the state in which the
last test of the guard is finished.

We have chosen to introduce a separate rule for instantiating
auxiliary variables. Alternatively, we might have stated the
consequence rule in a more general form, as suggested by
Kleymann~\cite{Kle:auxvar}; yet the separation facilitates
formalization in Coq.

The various logical consequences and equivalences about the
connectives suggest also further alternative and equivalent
formulations.
For instance, we could replace the rule for the while-statement by
\[
  \infer{
    \semax{I}{\While{e}{s_t}}{
    \dup{I} \chop \rep{(P \chop \dup{I})}
                      \chop \sglt{\neg e}}}{
    \semax{\conj{e}{I}}{s_t}{\Follows{P}{\sglt{I}}}
  }
\]
if we strengthened the consequence rule to  
\[
  \infer{
    \semax{U}{s}{P}
  }{
   U \models U'
   &\semax{U'}{s}{P'}
   &
   \sglt{U} \chop P' \models P
  }
\]
With our chosen rule for while, this strengthened version of
consequence is admissible:
\begin{lem}\label{lemma:semax_hd}
For any $U$, $s$ and $P$,
if $\semax{U}{s}{P}$, then $\semax{U}{s}{\sglt{U} \chop P}$.
\end{lem}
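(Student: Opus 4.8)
The plan is to proceed by induction on the derivation of $\semax{U}{s}{P}$, proving that $\semax{U}{s}{\sglt{U} \chop P}$ is derivable in each case. The key observation making this feasible is that $\sglt{U} \chop P$ is, by Lemma~\ref{lemma:misc_asserts}\ref{lemma:sglt_chop} and \ref{lemma:sglt_dup}, easy to simplify whenever $P$ itself begins with a singleton or doubleton assertion—which, inspecting Figure~\ref{fig:hoare} (the rules as I expect them), is exactly the shape of the postcondition in every base case. For assignment, $P = \update{U}{x}{e}$; since $\update{U}{x}{e}$ already records $U$ on its first state, we should have $\sglt{U} \chop \update{U}{x}{e} \logequ \update{U}{x}{e}$, so the same assignment rule instance plus the consequence rule closes this case. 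For $\Skip$, $P = \sglt{U}$, and $\sglt{U} \chop \sglt{U} \logequ \sglt{U \wedge U} \logequ \sglt{U}$ by \ref{lemma:sglt_chop}, again closing via consequence.

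For the compound cases I would use the induction hypothesis on the premises together with monotonicity (Proposition~\ref{prop:asserts_monotone}) and the algebraic laws of Lemma~\ref{lemma:misc_asserts}. For sequence, from $\semax{U}{s_0}{P}$ the IH gives $\semax{U}{s_0}{\sglt{U} \chop P}$; the last-state predicate of $\sglt{U} \chop P$ entails that of $P$ by \ref{lemma:last_chop}, so the sequence rule still applies with the same $V = \Last{\sglt{U} \chop P}$ (or we weaken back using \ref{lemma:last_last}), yielding a postcondition $(\sglt{U} \chop P) \chop Q \logequ \sglt{U} \chop (P \chop Q)$ by associativity \ref{lemma:chop_assoc}, which is what we want. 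For the if-statement, the postcondition already has the form $\dup{U} \chop (\cdots)$ after applying the rule, and $\sglt{U} \chop \dup{U} \logequ \dup{U}$ by \ref{lemma:sglt_dup}, so prefixing $\sglt{U}$ changes nothing up to equivalence. The while case is analogous: its postcondition begins with $\dup{U}$ (for the first guard test), and the same $\sglt{U} \chop \dup{U} \logequ \dup{U}$ identity absorbs the extra prefix. The auxiliary-variable rule and the consequence rule are handled by routine threading of the IH through the side conditions, using that $\sglt{\cdot} \chop -$ is monotone.

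The main obstacle I anticipate is bookkeeping rather than depth: in the sequence case one must be careful that after prepending $\sglt{U}$ the precondition of $s_1$ extracted via $\Lastnoargs$ is not accidentally strengthened or weakened in a way that breaks the rule—this is where \ref{lemma:last_chop}, \ref{lemma:last_last}, and \ref{lemma:sglt_true_chop} (to normalize trivial $\sglt{\true}$ prefixes) all get used, possibly in combination. A secondary subtlety is that all these manipulations must respect that the assertions are setoid predicates and that the rewriting is along $\logequ$, not strict equality; but since every connective involved is already known to be a setoid predicate and monotone (Propositions~\ref{prop:asserts_setoid} and~\ref{prop:asserts_monotone}), this is a matter of applying the consequence rule at the end rather than a real difficulty. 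No new invariant or variant is needed, which is the point—the lemma is meant to show the strengthened consequence rule imposes no extra proof burden.
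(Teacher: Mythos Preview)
Your plan fails at the consequence rule, and the failure is not bookkeeping but a genuine contravariance problem. Suppose the last rule applied is consequence, with premise $\semax{U'}{s}{P'}$ and side conditions $U \models U'$, $P' \models P$. Your induction hypothesis, applied to the premise, yields $\semax{U'}{s}{\sglt{U'} \chop P'}$. To reach $\semax{U}{s}{\sglt{U} \chop P}$ via consequence you would need $\sglt{U'} \chop P' \models \sglt{U} \chop P$; monotonicity of chop gives you the $P' \models P$ part, but on the singleton you would need $\sglt{U'} \models \sglt{U}$, i.e.\ $U' \models U$---and you only have the opposite entailment $U \models U'$. ``Monotonicity of $\sglt{\cdot} \chop -$'' cannot rescue this: the precondition sits covariantly inside the postcondition after you have applied the IH, so weakening the precondition from $U'$ to $U$ goes the wrong way.

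This is exactly why the paper does not induct on the statement as given. It first generalizes to: for any $V$, $\semax{U \wedge V}{s}{\sglt{V} \chop P}$, and proves this by induction on the derivation. The original lemma then follows by taking $V = U$. With the generalized IH the consequence case goes through cleanly: from the premise one gets, for any $V$, $\semax{U' \wedge V}{s}{\sglt{V} \chop P'}$; since $U \wedge V \models U' \wedge V$ and $\sglt{V} \chop P' \models \sglt{V} \chop P$, a single application of consequence finishes. The decoupling of the singleton parameter $V$ from the actual precondition $U$ is the missing idea in your plan; once you add it, the remaining cases are indeed as routine as you describe.
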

\begin{proof}
We prove the following more general statement by
induction on the derivation of $\semax{U}{s}{P}$:
for any $U$, $s$ and $P$,
if $\semax{U}{s}{P}$, then for any $V$,
$\semax{\conj{U}{V}}{s}{\sglt{V} \chop P}$.
\end{proof}
We do not attempt to argue that our formulation is the best choice;
yet we found that the present formulation is viable from the
points-of-view of both the meta-theory and applicability of the logic.

\begin{figure}
\[
\begin{array}{c}
\infer{
  \semax{U}{\Assign{x}{e}}{\update{U}{x}{e}}
}{}
\quad
\infer{
  \semax{U}{\Skip}{\sglt{U}}
}{}
\quad
\infer{
  \semax{U}{\Seq{s_0}{s_1}}{\Follows{P}{Q}}
}{
    \semax{U}{s_0}{\Follows{P}{\sglt{V}}}
    &\semax{V}{s_1}{Q}
}
\\[2ex]
\infer{
  \semax{U}{\Ifthenelse{e}{s_t}{s_f}}{\Follows{\dup{U}}{P}}}{
  \semax{\conj{e}{U}}{s_t}{P}
  &
  \semax{\conj{\neg e}{U}}{s_f}{P}
}
\\[2ex]
\infer{
  \semax{U}{\While{e}{s_t}}{
  \dup{U} \chop \rep{(\REMOVED{\sglt{e} \chop} P \chop \dup{I})}
                    \chop \sglt{\REMOVED{I \wedge} \neg e}}}{
  U \models I
  &
  \semax{\conj{e}{I}}{s_t}{\Follows{P}{\sglt{I}}}
}
\\[2ex]
\infer{
  \semax{U}{s}{P}
}{
 U \models U'
 &\semax{U'}{s}{P'}
 &
 P' \models P
}
\quad
\infer{
  \semax{\exists z.\, U}{s}{\exists z.\, P}
}{
  \forall z.\, \semax{U}{s}{P}
}
\end{array}
\]
\caption{Inference rules of Hoare logic}
\label{fig:hoare}
\end{figure}

\subsection{Soundness}

The soundness result states that any derivable Hoare triple is
semantically valid in the sense that, if the precondition holds of the
initial state of an evaluation, then the postcondition is true of the
trace produced.

\begin{prop}[Soundness]\label{prop:sound}
  For any $s$, $U$ and $P$, if $\semax{U}{s}{P}$, then, for all $\st$ and
  $\tr$, $\satisfy{\st}{U}$ and $\exec{s}{\st}{\tr}$ imply
  $\satisfy{\tr}{P}$.
\end{prop}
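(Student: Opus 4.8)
The plan is to induct on the derivation of $\semax{U}{s}{P}$, inverting the evaluation $\exec{s}{\st}{\tr}$ in each case. The cases of $\Assign{x}{e}$, $\Skip$, the consequence rule and the auxiliary-variable rule are immediate: inversion determines $\tr$ (or leaves it untouched), and the conclusion drops out of the definitions of $\sglt{\cdot}$, $\update{\cdot}{x}{e}$, $\dup{\cdot}$ and the earlier proposition that a produced trace begins with its initial state. For the if-rule, inverting $\exec{\Ifthenelse{e}{s_t}{s_f}}{\st}{\tr}$ gives two symmetric subcases; in the ``true'' one it forces $\bism{\tr}{\cons{\st}{\tr_0}}$ with $\exec{s_t}{\st}{\tr_0}$, so the induction hypothesis (applicable as $\satisfy{\st}{\conj{e}{U}}$) gives $\satisfy{\tr_0}{P}$, and $\satisfy{\tr}{\dup{U} \chop P}$ follows by one $\mathsf{flw\mbox{-}delay}$ step and one $\mathsf{flw\mbox{-}nil}$ step, the latter discharged by $\hd~\tr_0 = \st$.

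The sequence and while cases both require a bridge between extended evaluation $\execseqnoargs$ and the relation $\follows{Q}{\tr}{\tr'}$ that underlies $\chop$. I would first prove, by coinduction with case analysis on $\tr_0$, the auxiliary statement: for all $s$, $Q$, $\tr_0$, $\tr$, if $\execseq{s}{\tr_0}{\tr}$ and, for every $\st_0$ with $\conv{\tr_0}{\st_0}$ and every $\tr''$ with $\exec{s}{\st_0}{\tr''}$, $\satisfy{\tr''}{Q}$ holds, then $\follows{Q}{\tr_0}{\tr}$. When $\tr_0$ is a singleton, $\execseqnoargs$ reduces to $\execnoargs$ and we close with $\mathsf{flw\mbox{-}nil}$ (using that a produced trace begins with its initial state); when $\tr_0$ is $\cons{\st_1}{\cdot}$, we peel one state and appeal to the coinduction hypothesis through $\mathsf{flw\mbox{-}delay}$. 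For the sequence rule, inversion gives $\exec{s_0}{\st}{\tr_0}$ and $\execseq{s_1}{\tr_0}{\tr}$; the induction hypothesis gives $\satisfy{\tr_0}{P \chop \sglt{V}}$, whence $\satisfy{\tr_0}{P}$ by Lemma~\ref{lemma:follows_singleton} (since $P$ is a setoid predicate) and, by $\Last{(P \chop \sglt{V})} \entails V$ (Lemma~\ref{lemma:misc_asserts}), every last state of $\tr_0$ satisfies $V$; feeding the auxiliary statement the induction hypothesis for the second premise yields $\follows{Q}{\tr_0}{\tr}$, hence $\satisfy{\tr}{P \chop Q}$.

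The while case is the crux and is handled by a nested coinduction, in the spirit of the proofs of the iterator laws in Lemma~\ref{lemma:misc_asserts}. Writing $R := \rep{(P \chop \dup{I})} \chop \sglt{\neg e}$, I would prove by coinduction, with case analysis on $\tr_0$: for all $\tr_0$, $\tr$, if $\execseq{\While{e}{s_t}}{\tr_0}{\tr}$ and every last state of $\tr_0$ satisfies $I$, then $\follows{\dup{I} \chop R}{\tr_0}{\tr}$. If $\tr_0$ is a singleton $\sglt{\st_0}$ (so $\satisfy{\st_0}{I}$), invert $\exec{\While{e}{s_t}}{\st_0}{\tr}$: when the guard is false, $\bism{\tr}{\cons{\st_0}{\sglt{\st_0}}}$ and $\satisfy{\st_0}{\neg e}$, and $\satisfy{\tr}{\dup{I} \chop R}$ holds with the zero-iteration reading of $\rep{(P \chop \dup{I})}$; when the guard is true, inversion yields $\bism{\tr}{\cons{\st_0}{\tr_c}}$ with $\exec{s_t}{\st_0}{\tr_b}$, $\hd~\tr_b = \st_0$, $\hd~\tr_c = \st_0$ and $\execseq{\While{e}{s_t}}{\tr_b}{\tr_c}$, where the induction hypothesis (using $\satisfy{\st_0}{\conj{e}{I}}$) gives $\satisfy{\tr_b}{P \chop \sglt{I}}$; extracting $\satisfy{\tr_b}{P}$ and ``every last state of $\tr_b$ satisfies $I$'' exactly as in the sequence case, the coinduction hypothesis at $\tr_b$ gives $\follows{\dup{I} \chop R}{\tr_b}{\tr_c}$, and then, using associativity of $\chop$ and the iterator unfolding $\rep{Q'} \logequ \sglt{\true} \vee (Q' \chop \rep{Q'})$ (both instances of Lemma~\ref{lemma:misc_asserts}), $\satisfy{\tr_c}{R}$ follows with $\tr_b$ as the leading $P$-segment, and hence $\satisfy{\tr}{\dup{I} \chop R}$ by $\mathsf{flw\mbox{-}delay}$ and $\mathsf{flw\mbox{-}nil}$. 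The $\cons$ case of $\tr_0$ is routine. Finally, instantiating this claim at $\tr_0 := \sglt{\st}$ gives $\satisfy{\tr}{\dup{I} \chop R}$, and since the leading doubleton of any trace satisfying $\dup{I} \chop R$ consists of two copies of the trace's own head — here $\st$, with $\satisfy{\st}{U}$ — the same trace satisfies $\dup{U} \chop R$, which is the required postcondition.

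The main obstacle is precisely this while case: setting up the coinductive generalization so that the $\dup{\cdot}$ connectives line up with the guard tests that grow the trace (keeping exact account of the one-state overlap, so that the decomposition ``$\dup{U}$, then $\rep{(P \chop \dup{I})}$, then $\sglt{\neg e}$'' matches the operational trace) and so that the corecursion stays productive. Productivity holds because every guard test contributes a state to the trace — this is the $\mathsf{flw\mbox{-}delay}$ step in the argument — which is exactly the design choice motivated after Figure~\ref{fig:sem}. One must also bear in mind that this coinduction sits inside the induction on the Hoare derivation, so the induction hypothesis for the loop body has to be used as a closed fact rather than unfolded under the cofix.
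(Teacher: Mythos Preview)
Your treatment of assignment, $\Skip$, the if rule, consequence, auxiliary variables and sequence is essentially the paper's. The auxiliary lemma you isolate for sequence is the paper's lemma, with the equivalent hypothesis ``every last state of $\tr_0$ satisfies $V$'' in place of $\follows{\sglt{V}}{\tr_0}{\tr_0}$; either form works.

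The while case, however, has a genuine guardedness gap. Your coinduction is on $\follows{\dup{I}\chop R}{\tr_0}{\tr}$, and in the singleton/guard-true subcase the only applicable constructor is $\mathsf{flw\mbox{-}nil}$, whose premise is $\tr\models\dup{I}\chop R$. You discharge this premise by invoking the coinduction hypothesis at $(\tr_b,\tr_c)$ to get $\follows{\dup{I}\chop R}{\tr_b}{\tr_c}$ and then massaging via associativity and the iterator unfolding. But that call is \emph{not} under a constructor of the coinductive goal: in the functor defining $\follows{Q}{\cdot}{\cdot}$ the recursive variable occurs only in the $\mathsf{flw\mbox{-}delay}$ branch, while the $\mathsf{flw\mbox{-}nil}$ branch requires $\tr\models Q$ outright, with no reference to the candidate relation. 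So the step is circular in the post-fixed-point sense, and Coq's guardedness checker would (rightly) reject it. Your remark that ``every guard test contributes a state to the trace --- this is the $\mathsf{flw\mbox{-}delay}$ step'' conflates the $\mathsf{flw\mbox{-}delay}$ you apply for $\follows{R}{\cdot}{\cdot}$ inside the chop with a guarding step for the actual goal $\follows{\dup{I}\chop R}{\cdot}{\cdot}$; only the latter would license the corecursive call, and there is none in this subcase.

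The paper avoids this by restructuring so that the recursive content lives in the genuinely coinductive connective $\rep{(P\chop\dup{I})}$. It first proves $\follows{\sglt{\neg e}}{\tr}{\tr}$ by a separate, easy coinduction on the evaluation of the loop, and then establishes $\tr\models\dup{U}\chop\rep{(P\chop\dup{I})}$ via a \emph{mutual} coinduction between two statements: one concluding $\tr\models\rep{(P\chop\dup{I})}$ (so the corecursion is on the constructors of $\rep{\cdot}$) and one concluding a $\follows{\cdot}{\tr}{\tr'}$ fact that feeds back into the first. In that arrangement every corecursive call sits directly under a constructor of either $\rep{\cdot}$ or $\follows{\cdot}{\cdot}{\cdot}$. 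Your single coinduction on $\follows{\dup{I}\chop R}{\cdot}{\cdot}$ collapses precisely the structure that needs to stay exposed for the argument to be productive.
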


\begin{proof}
By induction on the derivation of $\semax{U}{s}{P}$. 
We show the main cases of sequence and while. 
\begin{itemize}
\item $s = \Seq{s_0}{s_1}$: We are given as the induction hypothesis that,
for any $\st, \tr$, $\exec{s_0}{\st}{\tr}$ and $\satisfy{\st}{U}$
imply $\satisfy{\tr}{P \chop \sglt{V}}$, 
and that,
for any $\st, \tr$, $\exec{s_1}{\st}{\tr}$ and $\satisfy{\st}{V}$
imply $\satisfy{\tr}{Q}$. We have to prove $\satisfy{\tr_1}{P \chop Q}$,
given $\satisfy{\st}{U}$ and $\exec{s_0}{\st}{\tr_0}$
and $\execseq{s_1}{\tr_0}{\tr_1}$. 
The induction hypothesis for $s_0$ gives us 
$\satisfy{\tr_0}{P \chop  \sglt{V}}$.
By Lemma~\ref{lemma:follows_singleton} and that $P$ is 
a setoid predicate, 
we derive $h_0:\satisfy{\tr_0}{P}$ and $\follows{\sglt{V}}{\tr_0}{\tr_0}$.
We prove by coinduction an auxiliary lemma: for any $\tr$, $\tr'$, 
$\follows{\sglt{V}}{\tr}{\tr}$ 
and $\execseq{s_1}{\tr}{\tr'}$ give $\follows{Q}{\tr}{\tr'}$, 
using the induction hypothesis for $s_1$. 
The lemma gives us
$h_1:\follows{Q}{\tr_0}{\tr_1}$. 
We can now close the case by $h_0$ and $h_1$.

\item $s = \While{e}{s_t}$: 
We are given as the induction hypothesis
that for any $\st$ and $\tr$, $\satisfy{\st}{e \wedge I}$
and $\exec{\st}{s_t}{\tr}$ imply
$\satisfy{\tr}{P \chop \sglt{I}}$.
We also have $U \models I$ .
We have to prove $\satisfy{\tr}{\dup{U} 
\chop \rep{(P \chop \dup{I})} \chop \sglt{\neg e}}$,
given $\satisfy{\st}{U}$
and $\exec{\While{e}{s_t}}{\st}{\tr}$.
We prove that, for any $\st$ and $\tr$,
$\exec{\While{e}{s_t}}{\st}{\tr}$ implies
$\follows{\sglt{\neg e}}{\tr}{\tr}$ by coinduction. 
It remains to prove $\tr \models \dup{U} \chop \rep{(P \chop \dup{I})}$.
By inversion on $\exec{\While{e}{s_t}}{\st}{\tr}$, 
we learn that $\bism{\tr}{\cons{\st'}{\tr'}}$ for some
$\st'$ and $\tr'$ such that $\hd~\tr' = \st'$.
So, we close the case by 
proving the following conditions by mutual coinduction:
\begin{itemize}
\item for any $\st$ and $\tr$,
if $\satisfy{\st}{I}$ and $\exec{\While{e}{s_t}}{\st}{\cons{\st}{\tr}}$,
then $\satisfy{\tr}{\rep{(P \chop \dup{I})}}$

\item for any $\tr$ and $\tr'$,
if 
$\follows{\sglt{I}}{\tr}{\tr}$
and $\execseq{\While{e}{s_t}}{\tr}{\tr'}$,
then 
$\follows{\rep{\dup{I} \chop (P \chop \dup{I})}}{\tr}{\tr'}$.
\end{itemize}

\end{itemize}
\end{proof}

Thanks to Proposition~\ref{prop:total} (totality of evaluation), as an
immediate corollary of Proposition~\ref{prop:sound} (soundness) we
learn that, if a state satisfies the precondition of a derivable Hoare
triple, then there exists an evaluation producing a trace satisfying
the postcondition.

\begin{cor}[Total-correctness soundness] \label{coro:semax_satisfiable}
For any $s$, $U$ and $P$, if $\semax{U}{s}{P}$,
then, for any $\st$ such that $\st \models U$,
there exists $\tr$ such that $\exec{s}{\st}{\tr}$ and $\tr \models P$.
\end{cor}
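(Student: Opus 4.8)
The plan is to obtain the statement as a direct composition of the two preceding propositions: totality supplies a witness trace, and soundness certifies that this trace satisfies the postcondition. Concretely, fix $s$, $U$, $P$ with a derivation of $\semax{U}{s}{P}$, and fix a state $\st$ with $\st \models U$.

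First I would invoke Proposition~\ref{prop:total} (totality of evaluation) to get a trace $\tr$ with $\exec{s}{\st}{\tr}$. Note that constructively this $\tr$ is genuinely produced --- it is the trace computed by the corecursive evaluation function mentioned after Proposition~\ref{prop:total} --- so no appeal to classical choice is needed; the existential in the conclusion will be witnessed by exactly this $\tr$. Second, I would apply Proposition~\ref{prop:sound} (soundness) to the given derivation of $\semax{U}{s}{P}$, instantiated at $\st$ and this $\tr$: feeding in $\st \models U$ and $\exec{s}{\st}{\tr}$ yields $\tr \models P$. Pairing $\tr$ with the two facts $\exec{s}{\st}{\tr}$ and $\tr \models P$ then discharges the conclusion $\exists \tr.\, \exec{s}{\st}{\tr} \wedge \tr \models P$.

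There is essentially no residual obstacle: all the work lives in Proposition~\ref{prop:total} (whose proof hinges on the corecursive definition of the evaluation function and Coq's guardedness condition) and Proposition~\ref{prop:sound} (whose proof is the induction on the Hoare derivation with nested coinductions in the sequence and while cases). The corollary is just the remark that these two results compose. The only point worth flagging is that determinism (Proposition~\ref{prop:determ}) is \emph{not} used here: we do not need \tr\ to be the unique trace of $s$ from $\st$, since soundness constrains \emph{every} trace that $s$ can produce from a state satisfying $U$.
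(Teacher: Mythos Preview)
Your proposal is correct and matches the paper's approach exactly: the paper states the corollary as an immediate consequence of Proposition~\ref{prop:total} (totality) and Proposition~\ref{prop:sound} (soundness), and your argument spells out precisely that composition. Your observation that determinism is not needed here is accurate and a nice clarification, though the paper does not comment on this.
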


\subsection{Completeness}

The completeness result states that any semantically valid Hoare
triple is derivable. Following the standard approach (see, e.g., ~\cite{NN:sem})
we define, for a given statement $s$ and a given precondition $U$, a trace predicate $\sp{s}{U}$---the candidate strongest postcondition.
Then we prove that $\sp{s}{U}$ is a postcondition according to the
logic (i.e., $\semax{U}{s}{\sp{s}{U}}$ is derivable) and that
$\sp{s}{U}$ is semantically stronger than any other trace predicate that is a postcondition semantically. Completeness follows.

The trace predicate $\sp{s}{U}$ is defined by recursion on $s$ in
Figure~\ref{fig:sp}. The definition is mostly self-explanatory, as it
mimics the inference rules of the logic, except that we need the
loop-invariant $\Inv{e}{s}{U}$.  $\Inv{e}{s}{U}$ characterizes the set
of states that are reachable by some run of \While{e}{s_t} from a
state satisfying $U$ and where the boolean guard is tested in that
run.\footnote{Because of the induction-recursion involved in the
  simultaneous definition of $\Inv{e}{s}{U}$ and $\sp{s}{U}$, we have
  used impredicativity in our Coq development.}

\begin{figure}
\[
\begin{array}{l}
\sp{\Assign{x}{e}}{U} =  \update{U}{x}{e}\\
\sp{\Skip }{U} =  \sglt{U}\\
\sp{\Seq{s_0}{s_1}}{U} = 
   P \chop \sp{s_1}{\Last{P}} \mathit{~where~} P = \sp{s_0}{U}  \\
\sp{\Ifthenelse{e}{s_t}{s_f}}{U}  =
  \dup{U} \chop (\disj{\sp{s_t}{\conj{e}{U}}}{\sp{s_f}{\conj{\neg e}{U}}}) \\
\sp{\While{e}{s_t}}{U} = 
  \dup{U} \chop \rep{(\sp{s_t}{\conj{e}{I}} \chop \dup{I})} \chop \sglt{\neg e}\\
\hspace*{2cm} \mathit{~where~} I = \Inv{e}{s_t}{U} \\
\\[2ex]
\infer{ \satisfy{\st}{\Inv{e}{s}{U}} }{
  \satisfy{\st}{U}
}
\quad
\infer{ \satisfy{\st}{\Inv{e}{s}{U}} }{
  V \models \Inv{e}{s}{U}
  &
  \satisfy{\st}{\Last{(\sglt{\Inv{e}{s}{U} \wedge e} \chop \sp{s}{V})}}
}
\end{array}
\]
\caption{Strongest postcondition}\label{fig:sp}
\end{figure}

For any $s$ and $U$, the predicate $\sp{s}{U}$ is a monotone setoid predicate.

\begin{lem}\label{lemma:sp_setoid}
For any $s$, $U$, $\tr$, $\tr'$,
if $\satisfy{\tr}{\sp{s}{U}}$ and $\bism{\tr}{\tr'}$, then 
$\satisfy{\tr'}{\sp{s}{U}}$.
\end{lem}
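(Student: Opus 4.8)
The plan is to argue by structural induction on the statement $s$, since $\sp{s}{U}$ is defined by recursion on $s$. The induction hypothesis I will use is the statement of the lemma quantified over the precondition: for every proper substatement $s'$ of $s$ and \emph{every} state predicate $V$, the predicate $\sp{s'}{V}$ is a setoid predicate. This quantification over $V$ is essential, because the recursive occurrences of $\spnoargs$ in the clauses for sequence and while are taken at derived preconditions ($\Last{P}$ and $\conj{e}{I}$ respectively), not at $U$ itself. Almost all of the real work has already been done in Proposition~\ref{prop:asserts_setoid}: it tells us that $\sglt{U}$, $\update{U}{x}{e}$ and $\dup{U}$ are setoid, and that $\chop$ and $\rep{\cdot}$ send setoid predicates to setoid predicates. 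I will additionally use the trivial fact that a finite disjunction of setoid predicates is setoid. So in each case it suffices to peel off these connectives and invoke the induction hypothesis at the leaves.

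Concretely: for $s = \Assign{x}{e}$ and $s = \Skip$ the claim is immediate from Proposition~\ref{prop:asserts_setoid}. For $s = \Seq{s_0}{s_1}$, the induction hypothesis at $s_0$ with precondition $U$ gives that $P := \sp{s_0}{U}$ is setoid; the induction hypothesis at $s_1$ with precondition $\Last{P}$ gives that $\sp{s_1}{\Last{P}}$ is setoid; hence $P \chop \sp{s_1}{\Last{P}}$ is setoid by Proposition~\ref{prop:asserts_setoid}. For $s = \Ifthenelse{e}{s_t}{s_f}$, the two recursive calls are setoid by the induction hypothesis, hence so is their disjunction, hence so is $\dup{U}$ chopped with it. For $s = \While{e}{s_t}$, write $I = \Inv{e}{s_t}{U}$: here the key observation is that $I$ is a \emph{state} predicate, so it is automatically insensitive to bisimilarity of traces, and therefore $\dup{I}$ and $\sglt{\neg e}$ are setoid by Proposition~\ref{prop:asserts_setoid}; the induction hypothesis at $s_t$ with precondition $\conj{e}{I}$ makes $\sp{s_t}{\conj{e}{I}}$ setoid; then $\sp{s_t}{\conj{e}{I}} \chop \dup{I}$ is setoid, hence so is $\rep{(\sp{s_t}{\conj{e}{I}} \chop \dup{I})}$, and finally chopping with $\dup{U}$ on the left and $\sglt{\neg e}$ on the right preserves setoid-ness.

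I do not expect a genuine obstacle: the coinductive reasoning about $\chop$ and $\rep{\cdot}$ is precisely what Proposition~\ref{prop:asserts_setoid} has already discharged, so the present lemma reduces to bookkeeping over the five cases. The one point deserving a remark is the while case, where $I = \Inv{e}{s_t}{U}$ is defined by induction--recursion simultaneously with $\spnoargs$; but since $\Inv{e}{s_t}{U}$ refers to $\sp{s_t}{\cdot}$ only at the strict substatement $s_t$, and since $\Inv{e}{s_t}{U}$ is a state predicate (hence trivially a setoid predicate), this mutual dependency causes no difficulty. (The companion monotonicity statement, quoted in the surrounding text, is proved by the same structural induction, appealing to Proposition~\ref{prop:asserts_monotone} in place of Proposition~\ref{prop:asserts_setoid}.)
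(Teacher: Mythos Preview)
Your proposal is correct and matches the paper's approach: the paper's proof is simply ``By induction on the structure of $s$,'' and your expansion of each case via Proposition~\ref{prop:asserts_setoid} is exactly the intended argument.
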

\begin{proof}
By induction on the structure of $s$. 
\end{proof}

\begin{lem}\label{lemma:sp_cont}
For any $s$, $U$, $U'$,
if $U \models U'$, then $\sp{s}{U} \models \sp{s}{U'}$.
\end{lem}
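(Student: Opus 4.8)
The plan is to argue by induction on the structure of $s$, pushing the hypothesis $U \models U'$ through the recursive definition of $\spnoargs$ in Figure~\ref{fig:sp} and appealing at each step to the monotonicity of the assertion connectives recorded in Proposition~\ref{prop:asserts_monotone}. The two base cases are immediate: $\sp{\Assign{x}{e}}{U} = \update{U}{x}{e} \models \update{U'}{x}{e} = \sp{\Assign{x}{e}}{U'}$ and $\sp{\Skip}{U} = \sglt{U} \models \sglt{U'} = \sp{\Skip}{U'}$, since $\update{\cdot}{x}{e}$ and $\sglt{\cdot}$ are monotone.

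For $s = \Seq{s_0}{s_1}$, write $P = \sp{s_0}{U}$ and $P' = \sp{s_0}{U'}$. The induction hypothesis for $s_0$ gives $P \models P'$, hence $\Last{P} \models \Last{P'}$ by monotonicity of $\Lastnoargs$, and then the induction hypothesis for $s_1$ gives $\sp{s_1}{\Last{P}} \models \sp{s_1}{\Last{P'}}$; two uses of monotonicity of chop (one per argument) yield $P \chop \sp{s_1}{\Last{P}} \models P' \chop \sp{s_1}{\Last{P'}}$, as required. The case $s = \Ifthenelse{e}{s_t}{s_f}$ is handled the same way: from $U \models U'$ one gets $\conj{e}{U} \models \conj{e}{U'}$ and $\conj{\neg e}{U} \models \conj{\neg e}{U'}$, so the induction hypotheses for $s_t$ and $s_f$, together with the evident monotonicity of disjunction and the monotonicity of $\dup{\cdot}$ and chop, give $\dup{U} \chop (\disj{\sp{s_t}{\conj{e}{U}}}{\sp{s_f}{\conj{\neg e}{U}}}) \models \dup{U'} \chop (\disj{\sp{s_t}{\conj{e}{U'}}}{\sp{s_f}{\conj{\neg e}{U'}}})$.

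The substantive case is $s = \While{e}{s_t}$, where $\sp{\While{e}{s_t}}{U} = \dup{U} \chop \rep{(\sp{s_t}{\conj{e}{I}} \chop \dup{I})} \chop \sglt{\neg e}$ with $I = \Inv{e}{s_t}{U}$, and likewise $I' = \Inv{e}{s_t}{U'}$. Since $\dup{\cdot}$, $\rep{\cdot}$, $\sglt{\cdot}$ and chop are monotone (using also associativity of chop, Lemma~\ref{lemma:misc_asserts}) and $\dup{U} \models \dup{U'}$ is trivial, it suffices to prove $\sp{s_t}{\conj{e}{I}} \chop \dup{I} \models \sp{s_t}{\conj{e}{I'}} \chop \dup{I'}$, and for this it is enough to establish the auxiliary inclusion of invariants $I \models I'$, i.e.\ $\Inv{e}{s_t}{U} \models \Inv{e}{s_t}{U'}$: granting it, $\conj{e}{I} \models \conj{e}{I'}$ gives $\sp{s_t}{\conj{e}{I}} \models \sp{s_t}{\conj{e}{I'}}$ by the induction hypothesis for the subterm $s_t$, and $\dup{I} \models \dup{I'}$ follows by monotonicity of $\dup{\cdot}$. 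I would prove $\Inv{e}{s_t}{U} \models \Inv{e}{s_t}{U'}$ by induction on the derivation of $\satisfy{\st}{\Inv{e}{s_t}{U}}$: the base rule gives $\satisfy{\st}{U}$, hence $\satisfy{\st}{U'}$, hence $\satisfy{\st}{\Inv{e}{s_t}{U'}}$; in the step rule we have a state predicate $V$ with $V \models \Inv{e}{s_t}{U}$ and a proof of $\satisfy{\st}{\Last{(\sglt{\conj{\Inv{e}{s_t}{U}}{e}} \chop \sp{s_t}{V})}}$, and applying the induction hypothesis at the positive occurrences of $\Inv{e}{s_t}{U}$ (both in $V \models \Inv{e}{s_t}{U}$ and inside the singleton state predicate) followed by monotonicity of $\sglt{\cdot}$, chop and $\Lastnoargs$ rebuilds a proof of $\satisfy{\st}{\Last{(\sglt{\conj{\Inv{e}{s_t}{U'}}{e}} \chop \sp{s_t}{V})}}$ with $V \models \Inv{e}{s_t}{U'}$, so the step rule delivers $\satisfy{\st}{\Inv{e}{s_t}{U'}}$.

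I expect this last sub-induction to be the only delicate point, for the reason flagged in the footnote to Figure~\ref{fig:sp}: $\Inv{e}{s_t}{U}$ occurs inside its own step rule, and impredicatively, since that rule quantifies over an arbitrary state predicate $V$. One must check that the induction principle really supplies the induction hypothesis at all these positive occurrences; equivalently, one may present the argument as a Knaster--Tarski observation — the monotone operator generating $\Inv{e}{s_t}{U}$ is pointwise below the one generating $\Inv{e}{s_t}{U'}$ whenever $U \models U'$, because their only differing ingredient is the base predicate $U$ versus $U'$ and $\sp{s_t}{\cdot}$ does not mention the invariant — so their least fixed points are ordered. Everything outside this point is routine propagation of monotonicity through the connectives.
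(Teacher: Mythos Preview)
Your proposal is correct and follows exactly the approach the paper indicates: induction on the structure of $s$. The paper's proof is a single line (``By induction on the structure of $s$''), and your write-up is a faithful and careful expansion of that induction, including the expected monotonicity sub-argument for $\Inv{e}{s_t}{\cdot}$ in the while case; your caution about the nested impredicative occurrence and the Knaster--Tarski reformulation are apt, but no genuinely different idea is involved.
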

\begin{proof}
By induction on the structure of $s$.
\end{proof}

The following lemma states that 
any trace which satisfies $\sp{s}{U}$ has its first state
satisfying $U$. 

\begin{lem}\label{lemma:sp_hd}
For any $s$, $U$, $\tr$, if $\satisfy{\tr}{\sp{s}{U}}$, then $\satisfy{\hd~\tr}{U}$.
\end{lem}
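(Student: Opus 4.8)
The plan is to prove this by structural induction on $s$, reading off the head of $\tr$ from the outermost connective of $\sp{s}{U}$ as given in Figure~\ref{fig:sp}. The only connective that occurs at the head position of any clause of that definition and is itself recursive is the chop; so the one auxiliary fact I would establish first is that $\follows{Q}{\tr}{\tr'}$ forces $\hd~\tr = \hd~\tr'$. This is immediate by inversion on the derivation of $\follows{Q}{\tr}{\tr'}$: in the $\mathsf{flw\mbox{-}nil}$ case $\tr$ is a singleton $\sglt{\st}$ and the premise $\hd~\tr' = \st$ yields the claim; in the $\mathsf{flw\mbox{-}delay}$ case $\tr$ and $\tr'$ begin with one and the same state $\st$. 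Consequently, whenever $\satisfy{\tr'}{P \chop Q}$, the witness prefix $\tr$ with $\satisfy{\tr}{P}$ satisfies $\hd~\tr' = \hd~\tr$.

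For the base cases, $\sp{\Assign{x}{e}}{U} = \update{U}{x}{e}$ and $\sp{\Skip}{U} = \sglt{U}$: a trace satisfying $\update{U}{x}{e}$ is of the form $\cons{\st}{\sglt{\update{\st}{x}{\eval{e}{\st}}}}$ with $\satisfy{\st}{U}$, and a trace satisfying $\sglt{U}$ is of the form $\sglt{\st}$ with $\satisfy{\st}{U}$; in either case $\hd~\tr = \st$, so $\satisfy{\hd~\tr}{U}$. For $s = \Seq{s_0}{s_1}$ we have $\sp{s}{U} = P \chop \sp{s_1}{\Last{P}}$ with $P = \sp{s_0}{U}$; a trace $\tr'$ satisfying it decomposes as a $\tr$ with $\satisfy{\tr}{P}$ and $\follows{\sp{s_1}{\Last{P}}}{\tr}{\tr'}$, so the induction hypothesis for $s_0$ gives $\satisfy{\hd~\tr}{U}$ and the auxiliary fact transports this to $\hd~\tr'$. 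The cases $s = \Ifthenelse{e}{s_t}{s_f}$ and $s = \While{e}{s_t}$ are analogous and do not even use the induction hypothesis: in both, $\sp{s}{U}$ has the form $\dup{U} \chop R$ (in the while case one may first regroup the threefold chop using associativity of chop from Lemma~\ref{lemma:misc_asserts}, or equivalently apply the head-preservation fact twice), a trace satisfying it decomposes as a $\tr$ with $\satisfy{\tr}{\dup{U}}$, i.e.\ $\tr = \cons{\st}{\sglt{\st}}$ with $\satisfy{\st}{U}$, and again $\hd~\tr' = \hd~\tr = \st$.

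There is no genuine obstacle here; the proof is a direct unfolding of the definition of $\sp{s}{U}$ in each case. The only points that need a moment of care are the head-preservation property of $\follows{Q}{\tr}{\tr'}$ (a one-step inversion argument, not a coinductive one) and, in the while case, regrouping the postcondition so that $\dup{U}$ sits as the leftmost chop factor. The sequence case is the only one that invokes the induction hypothesis, and it does so in the most straightforward way.
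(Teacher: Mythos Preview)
Your proof is correct and follows exactly the approach the paper indicates, namely structural induction on $s$; the paper's own proof is the one-liner ``By induction on the structure of $s$.'' and your write-up simply spells out the cases. The auxiliary head-preservation fact for $\follows{Q}{\tr}{\tr'}$ is indeed what makes the chop cases go through, and your handling of the while case (regrouping via associativity or two applications of head-preservation) is fine.
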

\begin{proof}
By induction on the structure of $s$.
\end{proof}

The next lemma states a crucial property of $\Inv{e}{s}{U}$.

\begin{lem}\label{lemma:Inv_correct}
For any $s$, $e$, $U$, 
$\sp{s}{\conj{\Inv{e}{s}{U}}{e}} \logequ
\sp{s}{\Inv{e}{s}{U} \wedge e} \chop \sglt{\Inv{e}{s}{U}}$.
\end{lem}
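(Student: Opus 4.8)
Write $I = \Inv{e}{s}{U}$, so that the claim reads $\sp{s}{I \wedge e} \logequ \sp{s}{I \wedge e} \chop \sglt{I}$, and prove the two entailments separately. The direction $\sp{s}{I \wedge e} \chop \sglt{I} \models \sp{s}{I \wedge e}$ is immediate: $\sglt{I} \models \sglt{\true}$ by monotonicity of the singleton operator (Proposition~\ref{prop:asserts_monotone}), whence monotonicity of chop together with Lemma~\ref{lemma:misc_asserts}\ref{lemma:sglt_true_chop} gives $\sp{s}{I \wedge e} \chop \sglt{I} \models \sp{s}{I \wedge e} \chop \sglt{\true} \logequ \sp{s}{I \wedge e}$ (using that $\sp{s}{I\wedge e}$ is a setoid predicate, Lemma~\ref{lemma:sp_setoid}).

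For the other direction the key step I would isolate first is the ``invariant-closure'' fact $\Last{\sp{s}{I \wedge e}} \models I$. To prove it, take $\st$ with $\st \models \Last{\sp{s}{I \wedge e}}$; unfolding $\Lastnoargs$, there is a finite trace $\tr$ with $\conv{\tr}{\st}$ and $\tr \models \sp{s}{I \wedge e}$. By Lemma~\ref{lemma:sp_hd}, $\hd~\tr \models I \wedge e$, so $\sglt{\hd~\tr} \models \sglt{I \wedge e}$; and by Lemma~\ref{lemma:sp_cont} (since $I \wedge e \models I$) we have $\tr \models \sp{s}{I}$, hence rule $\mathsf{flw\mbox{-}nil}$ yields $\follows{\sp{s}{I}}{\sglt{\hd~\tr}}{\tr}$ and therefore $\tr \models \sglt{I \wedge e} \chop \sp{s}{I}$. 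Together with $\conv{\tr}{\st}$ this gives $\st \models \Last{(\sglt{I \wedge e} \chop \sp{s}{I})}$, which is precisely the second premise of the second defining rule of $\Inv{e}{s}{U}$ once its universally quantified state predicate $V$ is instantiated to $I = \Inv{e}{s}{U}$ itself (the first premise $V \models \Inv{e}{s}{U}$ being then reflexivity); so $\st \models I$.

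With the closure fact in hand, $\sp{s}{I \wedge e} \models \sp{s}{I \wedge e} \chop \sglt{I}$ follows quickly: given $\tr \models \sp{s}{I \wedge e}$, every $\st$ with $\conv{\tr}{\st}$ satisfies $\st \models \Last{\sp{s}{I \wedge e}}$ by definition of $\Lastnoargs$, hence $\st \models I$ by the closure fact, so Lemma~\ref{lemma:conv_last} gives $\follows{\sglt{I}}{\tr}{\tr}$ and, instantiating the chop rule with prefix $\tr$, $\tr \models \sp{s}{I \wedge e} \chop \sglt{I}$. (Equivalently one can invoke item~(9) of Lemma~\ref{lemma:misc_asserts}, $P \logequ P \chop \sglt{\Last{P}}$, with $P := \sp{s}{I\wedge e}$, and push $\Last{\sp{s}{I\wedge e}} \models I$ through the postcondition using Proposition~\ref{prop:asserts_monotone}.) I expect the only genuinely delicate point to be recognizing that the second $\Inv$ rule, read with $V$ set to $\Inv{e}{s}{U}$ itself, is exactly the closure statement needed; this is not circular, since $\Inv$ is defined inductively and that rule is one of its generators, so no fresh coinduction or induction is needed here — all the recursive content already sits inside the cited lemmas.
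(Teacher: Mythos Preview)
Your proof is correct and follows essentially the same route as the paper: for $(\Rightarrow)$ both arguments use Lemma~\ref{lemma:sp_hd} and Lemma~\ref{lemma:sp_cont} to obtain $\tr \models \sglt{I\wedge e}\chop\sp{s}{I}$, invoke the second $\mathit{Inv}$ rule with $V=I$ to get that any last state of $\tr$ satisfies $I$, and finish via Lemma~\ref{lemma:conv_last}; for $(\Leftarrow)$ the paper unfolds chop and applies Lemma~\ref{lemma:follows_singleton} and Lemma~\ref{lemma:sp_setoid} directly, whereas you package the same step as monotonicity of chop plus Lemma~\ref{lemma:misc_asserts}\ref{lemma:sglt_true_chop}, which is an equivalent formulation.
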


\begin{proof}
($\Rightarrow$): Suppose $\tr \models \sp{s}{\conj{\Inv{e}{s}{U}}{e}}$.
It suffices to prove $\follows{\sglt{\Inv{e}{s}{U}}}{\tr}{\tr}$.
We have $\hd\, \tr \models \conj{\Inv{e}{s}{U}}{e}$
by Lemma~\ref{lemma:sp_hd}, and
$\tr \models \sp{s}{\Inv{e}{s}{U}}$ by Lemma~\ref{lemma:sp_cont}
and $\conj{\Inv{e}{s}{U}}{e} \models \Inv{e}{s}{U}$.
These give us $\tr \models \sglt{\Inv{e}{s}{U} \wedge e} \chop
\sp{s}{\Inv{e}{s}{U}}$. 
By the definition of {\it Inv}, we have for
any $\st$, $\conv{\tr}{\st}$ implies $\satisfy{\st}{\Inv{e}{s}{U}}$.
Therefore we conclude $\follows{\sglt{\Inv{e}{s}{U}}}{\tr}{\tr}$ by
Lemma~\ref{lemma:conv_last}.  

($\Leftarrow$): Suppose
$\tr \models \sp{s}{\Inv{e}{s}{U} \wedge e} \chop
\sglt{\Inv{e}{s}{U}}$. We then have some $\tr'$ such that
$\tr' \models \sp{s}{\Inv{e}{s}{U} \wedge e}$
and $\follows{\sglt{\Inv{e}{s}{U}}}{\tr'}{\tr}$.
The latter proves $\bism{\tr}{\tr'}$
by Lemma~\ref{lemma:follows_singleton}.
We conclude $\tr \models \sp{s}{\conj{\Inv{e}{s}{U}}{e}}$
by Lemma~\ref{lemma:sp_setoid}.
\end{proof}


We are now ready to establish that $\sp{s}{U}$ is a postcondition
according to the Hoare logic.

\begin{lem}\label{lemma:a}
For any $s$, $U$, we have $\semax{U}{s}{\sp{s}{U}}$.
\end{lem}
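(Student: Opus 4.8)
The plan is to argue by structural induction on $s$, in each case assembling $\semax{U}{s}{\sp{s}{U}}$ out of the matching rule of Figure~\ref{fig:hoare}, the consequence rule, and the induction hypotheses for the immediate substatements; the definition of $\sp{s}{U}$ in Figure~\ref{fig:sp} was deliberately set up to mirror those rules, so the two base cases are immediate. For $s = \Assign{x}{e}$ the assignment rule already gives $\semax{U}{\Assign{x}{e}}{\update{U}{x}{e}}$, which is literally $\semax{U}{s}{\sp{s}{U}}$; for $s = \Skip$ the $\Skip$ rule gives $\semax{U}{\Skip}{\sglt{U}}$, again literally $\semax{U}{s}{\sp{s}{U}}$.

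For $s = \Seq{s_0}{s_1}$, let $P$ denote $\sp{s_0}{U}$, so that $\sp{s}{U} = \Follows{P}{\sp{s_1}{\Last{P}}}$. From the induction hypothesis $\semax{U}{s_0}{P}$, the equivalence $P \logequ \Follows{P}{\sglt{\Last{P}}}$ of Lemma~\ref{lemma:misc_asserts} and the consequence rule yield $\semax{U}{s_0}{\Follows{P}{\sglt{\Last{P}}}}$; together with the induction hypothesis $\semax{\Last{P}}{s_1}{\sp{s_1}{\Last{P}}}$, the sequence rule (taking $V$ to be $\Last{P}$ and $Q$ to be $\sp{s_1}{\Last{P}}$) gives the desired triple. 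For $s = \Ifthenelse{e}{s_t}{s_f}$, let $D$ denote $\disj{\sp{s_t}{\conj{e}{U}}}{\sp{s_f}{\conj{\neg e}{U}}}$; weakening the two induction hypotheses $\semax{\conj{e}{U}}{s_t}{\sp{s_t}{\conj{e}{U}}}$ and $\semax{\conj{\neg e}{U}}{s_f}{\sp{s_f}{\conj{\neg e}{U}}}$ to the common postcondition $D$ via the consequence rule and then applying the if-rule produces $\semax{U}{s}{\Follows{\dup{U}}{D}}$, which is exactly $\semax{U}{s}{\sp{s}{U}}$.

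The while case is the one that needs real input. Let $I$ denote $\Inv{e}{s_t}{U}$, so that $\sp{s}{U} = \dup{U} \chop \rep{(\sp{s_t}{\conj{e}{I}} \chop \dup{I})} \chop \sglt{\neg e}$. I would instantiate the while-rule with this $I$ and with the body postcondition taken to be $\sp{s_t}{\conj{e}{I}}$. The first premise, $U \models I$, holds by the first defining clause of $\Inv$ in Figure~\ref{fig:sp}. For the second premise, $\semax{\conj{e}{I}}{s_t}{\Follows{\sp{s_t}{\conj{e}{I}}}{\sglt{I}}}$, I would start from the induction hypothesis $\semax{\conj{e}{I}}{s_t}{\sp{s_t}{\conj{e}{I}}}$ and upgrade its postcondition with the consequence rule using Lemma~\ref{lemma:Inv_correct}, which --- modulo the harmless swap of conjuncts in the precondition, absorbed by the monotonicity of $\sp$ (Lemma~\ref{lemma:sp_cont}) --- states exactly $\sp{s_t}{\conj{e}{I}} \logequ \Follows{\sp{s_t}{\conj{e}{I}}}{\sglt{I}}$. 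The while-rule then synthesises precisely the postcondition $\sp{\While{e}{s_t}}{U}$.

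I expect the while case to be the only nonroutine step, and even there the genuine content has already been discharged in Lemma~\ref{lemma:Inv_correct} (which says that the set of states described by $\Inv{e}{s_t}{U}$ is closed under one execution of $s_t$ started in a state where $e$ holds); granting that lemma, the remaining difficulty is purely bookkeeping --- arranging the consequence-rule applications and the chop-equivalences of Lemma~\ref{lemma:misc_asserts} so that the postcondition produced by each structural rule coincides, on the nose, with the corresponding clause of the definition of $\sp{s}{U}$.
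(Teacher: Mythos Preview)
Your proposal is correct and follows essentially the same route as the paper: structural induction on $s$, with the sequence case handled via the equivalence $P \logequ \Follows{P}{\sglt{\Last{P}}}$ (Lemma~\ref{lemma:misc_asserts}(9)) and the while case discharged by instantiating the while-rule with $I = \Inv{e}{s_t}{U}$ and invoking Lemma~\ref{lemma:Inv_correct} for the body's postcondition. Your additional remarks on the base cases and the if-case, and your observation about the conjunct swap in the precondition of Lemma~\ref{lemma:Inv_correct}, are accurate elaborations of steps the paper leaves implicit.
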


\begin{proof}
By induction on $s$. We show the main cases of sequence and while.
\begin{itemize}
\item $s = \Seq{s_0}{s_1}$:
We are given as the induction hypotheses that,
for any $U_0$,  \linebreak $\semax{U_0}{s_0}{\sp{s_0}{U_0}}$ and 
$\semax{U_0}{s_1}{\sp{s_1}{U_0}}$.
We have to prove $\semax{U}{\Seq{s_0}{s_1}}{P \chop \sp{s_1}{\Last{P}}}$
where $P = \sp{s_0}{U}$.
By the induction hypothesis, we have $\semax{U}{s_0}{P}$,
thus $\semax{U}{s_0}{P\chop \sglt{\Last{P}}}$ by
(9) of Lemma~\ref{lemma:misc_asserts} and 
the consequence rule. We therefore 
close the case with $\semax{\Last{P}}{s_1}{\sp{s_1}{\Last{P}}}$
given by the induction hypothesis. 

\item $s = \While{e}{s_t}$:
We are given as the induction hypothesis that
$\semax{U_0}{s_t}{\sp{s_t}{U_0}}$, for any $U_0$. 
We have to prove 
$\semax{U}{\While{e}{s_t}}
{\dup{U} \chop \rep{(\sp{s_t}{\conj{e}{I}} \chop \dup{I})} 
\chop \sglt{\REMOVED{I \wedge} \neg e}}$
where $I = \Inv{e}{s_t}{U}$.
It is sufficient to prove
$\semax{\conj{e}{I}}{s_t}{(\sp{s_t}{\conj{e}{I}} \chop \sglt{I})}$,
which follows from the induction hypothesis and Lemma~\ref{lemma:Inv_correct}.
\qedhere
\end{itemize}
\end{proof}

\noindent Following the standard route, it remains to prove the following condition: 
for any $s$, $U$, $P$, if 
for all $\st, \tr$, $\satisfy{\st}{U}$ and $\exec{s}{\st}{\tr}$ imply 
$\satisfy{\tr}{P}$, then $\sp{s}{U} \models P$.

This will be an immediate corollary from Lemma~\ref{lemma:sp_hd}
and the following lemma, stating that any trace satisfying $\sp{s}{U}$ 
is in fact produced by a run of $s$.

\begin{lem}\label{lemma:b}
For any $s$, $U$, $\tr$, if $\satisfy{\tr}{\sp{s}{U}}$
then $\exec{s}{\mathit{hd} ~\tr}{\tr}$.
\end{lem}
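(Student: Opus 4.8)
The plan is to prove Lemma~\ref{lemma:b} by induction on the structure of $s$, threading a suitable generalization through the loop case. For $\Assign{x}{e}$ and $\Skip$ the claim is immediate by inversion on $\sp{s}{U}$ and a direct application of the corresponding evaluation rule. For $\Seq{s_0}{s_1}$, given $\tr \models \sp{s_0}{U} \chop \sp{s_1}{\Last{\sp{s_0}{U}}}$, I unfold the chop to obtain $\tr_0 \models \sp{s_0}{U}$ with $\follows{\sp{s_1}{\Last{\sp{s_0}{U}}}}{\tr_0}{\tr}$; the induction hypothesis for $s_0$ gives $\exec{s_0}{\hd~\tr_0}{\tr_0}$, and it remains to build $\execseq{s_1}{\tr_0}{\tr}$ from the $\mathit{follows}$ hypothesis. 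This is where a small coinductive auxiliary is needed: I would prove by coinduction that whenever $\follows{\sp{s_1}{W}}{\tr}{\tr'}$ and $W \models \Last{\sp{s_0}{U}}$-style consistency holds (i.e.\ every last state of $\tr$ satisfies $W$), then $\execseq{s_1}{\tr}{\tr'}$, using the induction hypothesis for $s_1$ at the $\mathsf{flw\mbox{-}nil}$ base case (where Lemma~\ref{lemma:sp_hd} supplies that the head of the $s_1$-trace matches the last state of $\tr$) and guarding the corecursion through the $\mathsf{flw\mbox{-}delay}$ step. The if-case is similar but simpler: inversion on $\dup{U} \chop (\sp{s_t}{e \wedge U} \vee \sp{s_f}{\neg e \wedge U})$ gives $\bism{\tr}{\cons{\st}{\tr_1}}$ with $\st \models U$ and $\tr_1$ satisfying one of the two branch postconditions with $\hd~\tr_1 = \st$; the induction hypothesis picks the matching branch, and the evaluation rule for if fires via a one-step $\mathit{follows}$/$\execseqnoargs$ unfolding.

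The heart of the proof is the while-case, and the main obstacle is that a naive induction hypothesis is too weak: a trace satisfying $\sp{\While{e}{s_t}}{U} = \dup{U} \chop \rep{(\sp{s_t}{e \wedge I} \chop \dup{I})} \chop \sglt{\neg e}$ describes an arbitrarily long (possibly infinite) alternation of loop-body traces and guard re-tests, so I cannot simply peel off one iteration and recurse inductively on $s$ — I must recurse coinductively on the trace. My plan is to prove, by coinduction, a generalized statement tailored to the shape of $\execseqnoargs$ for while: for all $\tr$, if $\hd~\tr \models I$ and $\tr \models \rep{(\sp{s_t}{e \wedge I} \chop \dup{I})} \chop \sglt{\neg e}$, then $\execseq{\While{e}{s_t}}{\sglt{\hd~\tr}}{\tr'}$ for the appropriate $\tr'$ — or, more cleanly, two mutually coinductive statements mirroring the structure seen in the soundness and in Lemma~\ref{lemma:misc_asserts}\ref{lemma:last_rep}: one asserting that a trace of the right shape starting in an $I$-state is produced by $\While{e}{s_t}$, and one a $\mathit{follows}$-level companion asserting that $\follows{\rep{(\sp{s_t}{e\wedge I}\chop\dup{I})}\chop\sglt{\neg e}}{\tr}{\tr'}$ together with last-states-in-$I$ implies $\execseq{\While{e}{s_t}}{\tr}{\tr'}$.

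To make the coinductive step go through I would unfold $\rep{(\cdots)}$ via Lemma~\ref{lemma:misc_asserts}\ref{lemma:rep_unfold} into the zero-iteration case $\sglt{\true}$ and the one-or-more case $(\sp{s_t}{e\wedge I}\chop\dup{I}) \chop \rep{(\cdots)}$. In the zero case, the trace collapses (via Lemma~\ref{lemma:follows_singleton}) to a singleton satisfying $\sglt{\neg e}$, matching exactly the $\isfalse{e}{\st}$ rule for while. In the nonzero case, I extract a body-trace $\tr_b \models \sp{s_t}{e \wedge I}$ (so $\hd~\tr_b \models e \wedge I$ by Lemma~\ref{lemma:sp_hd}), apply the structural induction hypothesis for $s_t$ to get $\exec{s_t}{\hd~\tr_b}{\tr_b}$, note that the leading guard-test step $\dup{\cdot}$ and the rule $\istrue{e}{\st}$ account for the duplicated $I$-state, use the $\dup{I}$ factor together with Lemma~\ref{lemma:misc_asserts}\ref{lemma:last_chop_sglt} / \ref{lemma:last_rep} to confirm that the state where the next iteration begins again satisfies $I$, and then invoke the coinduction hypothesis on the remaining tail — carefully arranging the $\execseqnoargs$ threading so that the guardedness condition is met (the corecursive call sits under the while-rule constructor). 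The routine bookkeeping — matching up mid-states via $\mathit{follows}$, invoking Lemma~\ref{lemma:sp_setoid} to transport across bisimilarity, and handling the $\mathsf{flw\mbox{-}delay}$ prefix traversal in the companion statement — I would carry out exactly as in the proof of soundness, which this argument structurally mirrors.
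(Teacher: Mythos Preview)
Your approach is essentially the paper's: structural induction on $s$, with a coinductive auxiliary for sequence (from the $\mathit{follows}$ hypothesis to $\execseqnoargs$) and a mutual coinduction for while (one clause for $\execnoargs$, one for $\execseqnoargs$).

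Two small points are worth tightening. First, the extra ``last-states-in-$W$''/``last-states-in-$I$'' side conditions you thread through the sequence and while auxiliaries are not needed; the paper's auxiliaries carry only the $\mathit{follows}$ hypothesis, and the induction hypothesis for the inner statement applies directly at the $\mathsf{flw\mbox{-}nil}$ case. Second, and more important, your companion statement for while is off by one guard-test step: you take the $\mathit{follows}$ predicate to be $\rep{(\sp{s_t}{e\wedge I}\chop\dup{I})}\chop\sglt{\neg e}$, but at the $\mathsf{flw\mbox{-}nil}$ case this hands you a trace $\tr'$ with $\tr'\models\rep{(\cdots)}\chop\sglt{\neg e}$ and you must produce $\exec{\While{e}{s_t}}{\hd~\tr'}{\tr'}$, whereas your clause~(1) only yields $\exec{\While{e}{s_t}}{\hd~\tr'}{\cons{\hd~\tr'}{\tr'}}$. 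The paper closes this by putting the $\dup{I}$ factor \emph{in front}: the companion hypothesis is $\follows{\dup{I}\chop\rep{(\sp{s_t}{e\wedge I}\chop\dup{I})}\chop\sglt{\neg e}}{\tr}{\tr'}$, so that at $\mathsf{flw\mbox{-}nil}$ the head-duplication supplied by the next guard test is already present and clause~(1) applies verbatim. With that adjustment your plan goes through exactly as in the paper.
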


\begin{proof}
By induction on $s$. We show the main cases of sequence and while.
\begin{itemize}

\item $s = \Seq{s_0}{s_1}$: We are given as  the
induction hypotheses that, for any $U', \tr'$,
$\satisfy{\tr'}{\sp{s_0}{U'}}$ (resp. $\satisfy{\tr'}{\sp{s_1}{U'}}$)
implies 
$\exec{s_0}{\mathit{hd}~\tr'}{\tr'}$
(resp. $\exec{s_1}{\mathit{hd}~\tr'}{\tr'}$).
We have to prove $\exec{\Seq{s_0}{s_1}}{\mathit{hd}~\tr}{\tr}$,
given $\satisfy{\tr}{\sp{\Seq{s_0}{s_1}}{U}}$, which
unfolds into $\satisfy{\tr_0}{\sp{s_0}{U}}$
and $\follows{\sp{s_1}{\Last{(\sp{s_0}{U})}}}{\tr_0}{\tr}$.
By the induction hypothesis for $s_0$, we have $\exec{s_0}{\mathit{hd}~\tr_0}{\tr_0}$.
Using the induction hypothesis for $s_1$, we prove by coinduction that,
for any  $\tr_1, \tr_2$, 
$\follows{\sp{s_1}{\Last{(\sp{s_0}{U})}}}{\tr_1}{\tr_2}$
implies $\execseq{s_1}{\tr_1}{\tr_2}$, thereby we close the case. 

\item $s = \While{e}{s_t}$:
We are given as the induction hypothesis that,
for any $U', \tr'$, $\satisfy{\tr'}{\sp{s_t}{U'}}$
implies $\exec{s_t}{\mathit{hd}~\tr'}{\tr'}$.
We have to prove
$\exec{\While{e}{s_t}}{\mathit{hd}~\tr}{\tr}$, given
$\satisfy{\tr} 
{\dup{U} \chop \rep{(\sp{s_t}{\conj{e}{I}} \chop \dup{I})} \chop  \sglt{\neg e}}$
where $I = \Inv{e}{s_t}{U}$.
We do so by proving the following two conditions simultaneously 
by mutual coinduction:
\begin{itemize}

\item for any $\tr$,
$\satisfy{\tr}
{\rep{(\sp{s_t}{\conj{e}{I}} \chop \dup{I})} \chop  \sglt{\neg e}}$
implies
$\exec{\While{e}{s_t}}{\hd~\tr}{\cons{\hd~\tr}{\tr}}$,


\item for any $\tr$ and $\tr'$, 
$\follows{\dup{I} \chop \rep{(\sp{s_t}{\conj{e}{I}} \chop \dup{I})} 
\chop  \sglt{\neg e}}{\tr}{\tr'}$
implies \linebreak $\execseq{\While{e}{s_t}}{\tr}{\tr'}$.\qedhere
\end{itemize}
\end{itemize}
\end{proof}

\begin{cor}\label{cor}
For any $s$, $U$ and $P$, if 
for all $\st$ and $\tr$, $\satisfy{\st}{U}$ and $\exec{s}{\st}{\tr}$ imply 
$\satisfy{\tr}{P}$, then $\sp{s}{U} \models P$.
\end{cor}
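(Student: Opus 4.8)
The plan is to derive Corollary~\ref{cor} as an almost immediate consequence of Lemma~\ref{lemma:b} together with Lemma~\ref{lemma:sp_hd}, so there is essentially no new work to do. First I would fix $s$, $U$, $P$ and assume the semantic validity hypothesis: for all $\st$ and $\tr$, $\satisfy{\st}{U}$ and $\exec{s}{\st}{\tr}$ imply $\satisfy{\tr}{P}$. To show $\sp{s}{U} \models P$, I take an arbitrary $\tr$ with $\satisfy{\tr}{\sp{s}{U}}$ and must produce $\satisfy{\tr}{P}$.

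The key steps are then: from $\satisfy{\tr}{\sp{s}{U}}$, Lemma~\ref{lemma:sp_hd} gives $\satisfy{\hd~\tr}{U}$; and Lemma~\ref{lemma:b} gives $\exec{s}{\hd~\tr}{\tr}$. Instantiating the validity hypothesis at $\st := \hd~\tr$ and this same $\tr$, the two facts $\satisfy{\hd~\tr}{U}$ and $\exec{s}{\hd~\tr}{\tr}$ are exactly its premises, so we conclude $\satisfy{\tr}{P}$. Since $\tr$ was arbitrary, this is precisely $\sp{s}{U} \models P$ by the definition of entailment of trace predicates.

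There is no real obstacle here: all the substance has already been discharged in Lemma~\ref{lemma:b} (that every trace satisfying $\sp{s}{U}$ is genuinely produced by a run of $s$ from its head state) and in Lemma~\ref{lemma:sp_hd} (that such a trace starts in a $U$-state). The only thing to be slightly careful about is that $\sp{s}{U}$ is a setoid predicate (Lemma~\ref{lemma:sp_setoid}) and $P$ is assumed to be one as well, so the reasoning respects bisimilarity; but this is not needed for the bare implication, only for coherence of the statement. Combined with Lemma~\ref{lemma:a} ($\semax{U}{s}{\sp{s}{U}}$ is derivable) and the consequence rule, this corollary yields completeness of the Hoare logic, which is the point of the section.
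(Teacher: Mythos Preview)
Your proposal is correct and matches the paper's approach exactly: the paper states that the corollary is immediate from Lemma~\ref{lemma:sp_hd} and Lemma~\ref{lemma:b}, and you have spelled out precisely that argument. The only difference is that you made the instantiation $\st := \hd~\tr$ and the appeal to the definition of $\models$ explicit, which the paper leaves implicit.
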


Completeness is proved as a corollary of 
Lemma~\ref{lemma:a} and Corollary~\ref{cor}.

\begin{prop}[Completeness]\label{prop:complete}
For any $s$, $U$ and $P$, if 
for all $\st$ and $\tr$, $\satisfy{\st}{U}$ and $\exec{s}{\st}{\tr}$ imply 
$\satisfy{\tr}{P}$, then $\semax{U}{s}{P}$.
\end{prop}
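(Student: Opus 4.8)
The plan is to assemble the result directly from the two facts already in place: Lemma~\ref{lemma:a}, which says the candidate strongest postcondition is genuinely a postcondition in the logic, i.e.\ $\semax{U}{s}{\sp{s}{U}}$ is derivable for every $s$ and $U$; and Corollary~\ref{cor}, which says $\sp{s}{U}$ is semantically the strongest, so that under the hypothesis of the proposition — every run of $s$ from a $U$-state produces a $P$-trace — we obtain the entailment $\sp{s}{U} \models P$.

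Concretely, I would first invoke Lemma~\ref{lemma:a} to get a derivation of $\semax{U}{s}{\sp{s}{U}}$. Then, using the assumed semantic validity (for all $\st$ and $\tr$, $\satisfy{\st}{U}$ and $\exec{s}{\st}{\tr}$ imply $\satisfy{\tr}{P}$), Corollary~\ref{cor} yields $\sp{s}{U} \models P$. Finally I would apply the consequence rule of Figure~\ref{fig:hoare} with the trivially valid state entailment $U \models U$, the derivation $\semax{U}{s}{\sp{s}{U}}$, and the trace entailment $\sp{s}{U} \models P$, concluding $\semax{U}{s}{P}$. This is the entire argument; no new induction or coinduction is needed at this point.

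All of the genuine difficulty has been front-loaded into the lemmas being cited: Lemma~\ref{lemma:a} requires an induction on $s$ that reconstructs each inference rule for $\sp{\cdot}{\cdot}$, with the while-case leaning on Lemma~\ref{lemma:Inv_correct} to show that the generated invariant $\Inv{e}{s_t}{U}$ is preserved; and Corollary~\ref{cor} rests on Lemma~\ref{lemma:b} (every trace satisfying $\sp{s}{U}$ is actually produced by a run of $s$ from its head state) together with Lemma~\ref{lemma:sp_hd} (that head state satisfies $U$). Given all of these, the completeness proposition itself is immediate; the only point to watch is that the consequence rule as stated weakens the postcondition while leaving the precondition in place, which is exactly the shape we need here — we keep $U$ fixed and replace $\sp{s}{U}$ by the weaker $P$ — so there is no obstacle of a strengthened form of consequence being required.
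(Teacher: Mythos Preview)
Your proposal is correct and mirrors the paper's own proof exactly: invoke Lemma~\ref{lemma:a} to obtain $\semax{U}{s}{\sp{s}{U}}$, use Corollary~\ref{cor} to get $\sp{s}{U} \models P$ from the semantic-validity hypothesis, and conclude $\semax{U}{s}{P}$ by the consequence rule. Your additional commentary on where the real work lies (in the cited lemmas) is accurate but not part of the proof proper.
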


\begin{proof}
  Assume that for all $\st, \tr$, $\satisfy{\st}{U}$ and
  $\exec{s}{\st}{\tr}$ imply $\satisfy{\tr}{P}$. 
  By Corollary~\ref{cor}, we have that $\sp{s}{U} \models P$.
  By Lemma~\ref{lemma:a}, we have $\semax{U}{s}{\sp{s}{U}}$. Applying
  consequence, we get $\semax{U}{s}{P}$.
\end{proof}

Combining Propositions~\ref{prop:determ} (determinacy of evaluation)
and \ref{prop:complete} (completeness), we immediately get
completeness for total correctness.

\begin{cor}[Total-correctness completeness]
For any $s$, $U$ and $P$, if, for all $\st$ such that
$\satisfy{\st}{U}$, there is $\tr$ such that $\exec{s}{\st}{\tr}$ and
$\satisfy{\tr}{P}$, then $\semax{U}{s}{P}$.
\end{cor}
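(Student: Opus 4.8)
The plan is to reduce this to the completeness result already established (Proposition~\ref{prop:complete}) by exploiting determinacy of evaluation (Proposition~\ref{prop:determ}). The key observation is that, for the deterministic language While, the total-correctness reading of a Hoare triple and the partial-correctness reading coincide: if some run of $s$ from a state satisfying $U$ produces a trace satisfying $P$, then, since runs are unique up to bisimilarity and $P$ cannot tell bisimilar traces apart, \emph{every} run from that state produces a trace satisfying $P$.

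Concretely, I would assume the hypothesis: for every $\st$ with $\satisfy{\st}{U}$ there exists a $\tr$ with $\exec{s}{\st}{\tr}$ and $\satisfy{\tr}{P}$. To invoke Proposition~\ref{prop:complete} it suffices to establish its premise, namely that for all $\st$ and $\tr$, $\satisfy{\st}{U}$ and $\exec{s}{\st}{\tr}$ imply $\satisfy{\tr}{P}$. So fix such $\st$ and $\tr$. Applying the hypothesis to $\st$ yields some $\tr'$ with $\exec{s}{\st}{\tr'}$ and $\satisfy{\tr'}{P}$. By Proposition~\ref{prop:determ}, $\bism{\tr}{\tr'}$. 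Since $P$ is a trace predicate, hence a setoid predicate, $\satisfy{\tr'}{P}$ gives $\satisfy{\tr}{P}$. Thus the premise of Proposition~\ref{prop:complete} holds, and we conclude $\semax{U}{s}{P}$.

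I do not anticipate any genuine obstacle here; the only point requiring care is that the passage from $\satisfy{\tr'}{P}$ to $\satisfy{\tr}{P}$ truly relies on trace predicates being insensitive to bisimilarity — part of our standing assumption on trace predicates — rather than on strong equality of traces, which need not hold in the intensional type theory. Note also that totality of evaluation (Proposition~\ref{prop:total}) plays no role in this direction; it is in the converse passage (Corollary~\ref{coro:semax_satisfiable}) that totality is needed.
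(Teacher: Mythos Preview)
Your proposal is correct and matches the paper's own argument, which simply says the corollary follows by combining determinacy (Proposition~\ref{prop:determ}) with completeness (Proposition~\ref{prop:complete}); you have spelled out the details of that combination accurately, including the use of the setoid property of $P$ to transfer satisfaction across bisimilar traces.
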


\section{Relation to the standard partial-correctness and
  total-correctness Hoare logics}\label{sec:Hoare}

It is easy to see, by going through the soundness and completeness
results, that our trace-based Hoare logic is a conservative extension
of the standard, state-based partial-correctness and total-correctness
Hoare logics. But more can be said. The derivations in these two
logics are directly transformable into derivations in our logic,
preserving their structure, without invention of new invariants or
variants. And in the converse direction, derivations in our logic are
transformable into derivations into the standard logics in a way that
removes from postconditions information about intermediate states. In
this direction, the variant for a while-loop is obtained by bounding
the length of traces satisfying the trace invariant of the loop. 

Concerning total correctness, we use two variations of the while-rule.
In the forward transformation, we use a version of the while-rule with
a dedicated variant (a natural-valued \emph{function} on states)
whereas, in the backward transformation, we work with a version where
the invariant (a state predicate) is made dependent on a natural
number (i.e., becomes a \emph{relation} between states and naturals;
crucially, there is no functionality requirement: in the same state,
the invariant can be satisfied by zero or one or several naturals). 
The two alternative
while-rules for total correctness are:
\[
\infer[\whilefun]{
  \semax{I \wedge t = m}{\While{e}{s_t}}{I \wedge t \leq m \wedge \neg e}
}{
  \forall n:\mathit{nat}.
  ~\semax{e \wedge I \wedge t = n}{s_t}{I \wedge t < n}
}
\]
and
\[
\infer[\whilerel]{
  \semax{J\, m}{\While{e}{s_t}}
        {\exists k.\, k \leq m \wedge J\, k \wedge \neg e}
}{
  \forall n:\mathit{nat}.
  ~\semax{e \wedge J\, n}{s_t}{\exists k.\, k < n \wedge J\, k}
}
\]
There is a reason for this discrepancy, which reflects our compromise
between pursuing a constructive approach and striving for purely syntactic
translations. We will discuss it in Section~\ref{sec:Hoare:discussion}
after having presented the transformations.

We have tried to fine-tune the inference rules in the different Hoare
logics and the transformations between them for smoothness. There is
some room for variations in them. The transformations are quite
sensitive to the exact division of labor in the source and target
Hoare logics between the rules for the statement constructors and the
consequence rule, but the effects of the possible variations are
mostly inessential.

For reference, the inference rules of the state-based logics appear in
the Appendix. Notice that also here we use predicates as assertions
and entailment as consequence, so there is no dedicated assertion
language or proof system for assertions.

\subsection{Embeddings of the standard Hoare logics into the
  trace-based logic}

We formalize our claim of embeddability of the standard Hoare logics
in the following two propositions, whose direct proofs are algorithms
for the transformations.

Proposition~\ref{prop:partial_Hoare} states that, if $\semax{U}{s}{Z}$
is a derivable partial-correctness judgement, then $\semax{U}{s}{\true
  \chop \sglt{Z}}$ is derivable in our logic.  The trace predicate
$\true \chop \sglt{Z}$ indicates that $Z$ holds of any state that is
reachable by traversing, in a finite number of steps, the whole trace
$\tr$ produced by running $s$.  Classically, this amounts to 
$Z$ being true of the last state of $\tr$, if $\tr$ is
finite and hence has one; if $\tr$ is infinite, then nothing is
required.  

Proposition~\ref{lemma:total_Hoare} states that, if $\semax{U}{s}{Z}$
is a derivable total-correctness judgement, then $\semax{U}{s}{\finite
  \chop \sglt{Z}}$ is derivable in our logic (in fact, it states a
little more).  The trace predicate $\finite \chop \sglt{Z}$ expresses
that the trace $\tr$ produced by running $s$ is finite and $Z$ holds
of the last state of $\tr$; the finiteness of $\tr$ guarantees the
existence of this last state.

\begin{prop}\label{prop:partial_Hoare}
  For any $s$, $U$ and $Z$, if $\semax{U}{s}{Z}$ is derivable in the
  partial-correctness Hoare logic, then $\semax{U}{s}{\true \chop
    \sglt{Z}}$ is derivable in the trace-based Hoare logic.
\end{prop}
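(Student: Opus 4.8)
The plan is to proceed by induction on the derivation of $\semax{U}{s}{Z}$ in the partial-correctness Hoare logic. For each rule of that logic (see the Appendix), I would exhibit a derivation of $\semax{U}{s}{\true \chop \sglt{Z}}$ in the trace-based logic, using the induction hypotheses for the premises. The base cases are assignment and $\Skip$: for assignment, the trace-based rule gives $\semax{U}{\Assign{x}{e}}{\update{U}{x}{e}}$, and since $\update{U}{x}{e} \models \true \chop \sglt{\update{U}{x}{e}}$ (a doubleton trace is finite and its last state satisfies the update predicate), the consequence rule closes the case; similarly for $\Skip$ we go from $\sglt{U}$ to $\true \chop \sglt{U}$. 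The consequence rule and the auxiliary-variable rule of the source logic translate more or less directly, using monotonicity of $\Last{-}$ and of chop (Proposition~\ref{prop:asserts_monotone}).

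The interesting cases are sequence, if, and while. For sequence, the source rule has premises $\semax{U}{s_0}{Y}$ and $\semax{Y}{s_1}{Z}$; the induction hypotheses give $\semax{U}{s_0}{\true \chop \sglt{Y}}$ and $\semax{Y}{s_1}{\true \chop \sglt{Z}}$. Applying the trace-based sequence rule with $V := Y$ yields $\semax{U}{\Seq{s_0}{s_1}}{(\true \chop \sglt{Y}) \chop (\true \chop \sglt{Z})}$, and by associativity of chop (Lemma~\ref{lemma:misc_asserts}\ref{lemma:chop_assoc}), the trivial unit laws \ref{lemma:sglt_true_chop}, and \ref{lemma:sglt_chop}, this postcondition entails $\true \chop \sglt{Z}$; the consequence rule finishes. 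For the if-statement, the induction hypotheses give the same postcondition $\true \chop \sglt{Z}$ for both branches under the respective strengthened preconditions, so the trace-based if-rule gives $\semax{U}{\Ifthenelse{e}{s_t}{s_f}}{\dup{U} \chop (\true \chop \sglt{Z})}$, and $\dup{U} \chop \true \chop \sglt{Z} \models \true \chop \sglt{Z}$ by the unit/absorption laws for $\true$ under chop.

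The while-case is the one I expect to be the main obstacle. The source rule is the standard partial-correctness loop rule: from $U \models I$ and $\semax{\conj{e}{I}}{s_t}{I}$ conclude $\semax{U}{\While{e}{s_t}}{\conj{I}{\neg e}}$. The induction hypothesis gives $\semax{\conj{e}{I}}{s_t}{\true \chop \sglt{I}}$, so by the consequence rule we can feed $P := \true$ into the trace-based while-rule (using $\semax{\conj{e}{I}}{s_t}{\true \chop \sglt{I}} = \semax{\conj{e}{I}}{s_t}{\Follows{\true}{\sglt{I}}}$), obtaining $\semax{U}{\While{e}{s_t}}{\dup{U} \chop \rep{(\true \chop \dup{I})} \chop \sglt{\neg e}}$. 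What remains is the purely assertion-level entailment $\dup{U} \chop \rep{(\true \chop \dup{I})} \chop \sglt{\neg e} \models \true \chop \sglt{Z}$, where $Z = \conj{I}{\neg e}$. By absorption of $\true$ and $\dup{U}$ under chop on the left it suffices to show $\rep{(\true \chop \dup{I})} \chop \sglt{\neg e} \models \true \chop \sglt{\neg e}$ and that the last state (if any) additionally satisfies $I$; the latter is exactly Lemma~\ref{lemma:misc_asserts}\ref{lemma:last_rep} (specialized with $P := \true$), which tells us $\Last{(\sglt{I} \chop \rep{(\true \chop \dup{I})})} \models I$ — though here I must be careful about the missing leading $\sglt{I}$, which I can supply using $U \models I$ and Lemma~\ref{lemma:semax_hd} (or equivalently fold it into the $\dup{U}$ via \ref{lemma:sglt_dup}). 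I would then combine this with \ref{lemma:last_chop_sglt} ($\Last{(P \chop \sglt{U})} \models U$, giving $\neg e$) and \ref{lemma:last_chop_sglt} again, and use the characterization $\true \chop \sglt{Z}$ unfolds (classically, or via $\Last{-}$) to "$Z$ of the last state if finite," so the two conjuncts $I$ and $\neg e$ assemble into $Z$. Getting the constructive bookkeeping right here — in particular not accidentally demanding finiteness of the trace — is the delicate point, and I would lean on the fact that $\Last{-}$ is only a constraint on traces that happen to be finite and on the already-established laws rather than redoing any coinductive argument.
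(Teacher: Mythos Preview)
Your proposal is correct and follows essentially the same route as the paper: induction on the source derivation, with the sequence and while cases handled by applying the corresponding trace-based rules and then collapsing the resulting postcondition via $\true \chop \true \logequ \true$ and $\dup{I} \chop \rep{(\true \chop \dup{I})} \chop \sglt{\neg e} \models \true \chop \sglt{I \wedge \neg e}$. Two small slips worth fixing: in the sequence case the trace-based rule yields $\true \chop (\true \chop \sglt{Z})$ (the $\sglt{Y}$ is consumed by the rule, not carried into the conclusion), and the source while rule in the Appendix has precondition $I$ rather than a separate $U$ with $U \models I$, so you get $\dup{I}$ directly and the extra manoeuvre with Lemma~\ref{lemma:semax_hd} is unnecessary.
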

\begin{proof}
  By induction on the derivation of $\semax{U}{s}{Z}$.  We show the
  main cases of sequence and while.
\begin{itemize}
\item $s = \Seq{s_0}{s_1}$:
We are given as the induction hypotheses 
$\semax{U}{s_0}{\true \chop \sglt{V}}$
and $\semax{V}{s_1}{\true \chop \sglt{Z}}$.
We have to prove $\semax{U}{\Seq{s_0}{s_1}}{\true \chop \sglt{Z}}$,
which is derived by
\[
\infer[(1)]{ \semax{U}{\Seq{s_0}{s_1}}{\true \chop \sglt{Z}}}{
  \infer{ \semax{U}{\Seq{s_0}{s_1}}{\true \chop  \true \chop \sglt{Z}}}{
    \infer*[\IH_0]{\semax{U}{s_0}{\true \chop \sglt{V}}}{
    }
    &
    \infer*[\IH_1]{\semax{V}{s_1}{\true \chop \sglt{Z}}}{
    }
  }
}
\]
\begin{enumerate}
\item We have
\[
\true \chop \true 
\logequ 
\true
\]
\end{enumerate}

\item $s = \While{e}{s_t}$:
We are given as the induction hypothesis 
$\semax{e \wedge I}{s_t}{\true \chop \sglt{I}}$.
We have to prove 
$\semax{I}{\While{e}{s_t}}{\true \chop \sglt{I \wedge \neg e}}$,
which is derived by
\[
\infer[(1)]{
  \semax{I}{\While{e}{s_t}}{\true \chop \sglt{I \wedge \neg e}}
}{
  \infer{ 
    \semax{I}{\While{e}{s_t}}
    {\dup{I} \chop \rep{(\REMOVED{e \chop} \true \chop \dup{I})} \chop \sglt{\REMOVED{I \wedge} \neg e}}
  }{
    \infer*[\IH_t]{\semax{e \wedge I}{s_t}{\true \chop \sglt{I}}}{
    }
  }
}
\]
\begin{enumerate}
\item We have 
\begin{eqnarray*}
\lefteqn{\dup{I} \chop \rep{(\true \chop \dup{I})} \chop \sglt{\neg e}} \\
& \entails &
\true \chop \sglt{I} \chop \sglt{\neg e}\\
& \logequ &
\true \chop \sglt{I \wedge \neg e} 
\qquad \textrm{(by Lemma~\ref{lemma:misc_asserts} (2))}
\end{eqnarray*}
\end{enumerate}
\end{itemize}\vspace{-12 pt}
\end{proof}

\noindent For the embedding of total-correctness derivations, we prove a
slightly stronger statement to have the induction go through.

\begin{prop}\label{lemma:total_Hoare}
  For any $s$, $U$ and $Z$, if $\semax{U}{s}{Z}$ is derivable in the
  total-correctness Hoare logic with $\whilefun$, then for any $W$,
  $\semax{U \wedge W}{s}{\sglt{W} \chop \finite \chop \sglt{Z}}$ is
  derivable in the trace-based Hoare logic.
\end{prop}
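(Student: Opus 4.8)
The plan is to argue by induction on the derivation of $\semax{U}{s}{Z}$ in the total-correctness Hoare logic with $\whilefun$; the extra state predicate $W$ plays the role of the additional conjunct in the strengthened consequence rule of Lemma~\ref{lemma:semax_hd}, and is what makes the while case of the induction go through. For assignment, $\Skip$, the if-then-else rule, the consequence rule and the auxiliary-variable rule, the argument is a mild variant of the proof of Proposition~\ref{prop:partial_Hoare}: apply the matching trace-based rule and then the trace-based consequence rule, using the equational facts $\sglt{U \wedge V} \logequ \sglt{U} \chop \sglt{V}$, $\dup{U \wedge V} \logequ \sglt{U} \chop \dup{V} \logequ \dup{U} \chop \sglt{V}$, associativity and unit of chop (Lemma~\ref{lemma:misc_asserts} (1)--(4)), together with the facts that a singleton trace and a doubleton trace are finite and that the concatenation of two finite traces is finite, so that $\sglt{U} \models \finite$, $\dup{U} \models \finite$ and $\finite \chop \finite \logequ \finite$. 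For the sequence rule, for instance, one takes $\sglt{W} \chop \finite$ as the postcondition for $s_0$ and, instantiating the inductive hypothesis for $s_1$ at $W := \true$, takes $\finite \chop \sglt{Z}$ for $s_1$, applies the trace-based sequence rule, and collapses $(\sglt{W} \chop \finite) \chop (\finite \chop \sglt{Z})$ to $\sglt{W} \chop \finite \chop \sglt{Z}$.

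The substantive case is $\whilefun$, where $s = \While{e}{s_t}$, $U = I \wedge t = m$, $Z = I \wedge t \leq m \wedge \neg e$, and the premise provides, for every natural $n$, a derivation of $\semax{e \wedge I \wedge t = n}{s_t}{I \wedge t < n}$. First I would apply the inductive hypothesis to each of these with the auxiliary predicate taken to be $e \wedge I \wedge t = n$ itself, obtaining $\semax{e \wedge I \wedge t = n}{s_t}{\sglt{e \wedge I \wedge t = n} \chop \finite \chop \sglt{I \wedge t < n}}$. Set $I' := I \wedge t \leq m$ and $\hat P := \exists k.\, \sglt{e \wedge I \wedge t = k \wedge k \leq m} \chop \finite \chop \sglt{I \wedge t < k}$. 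For $n \leq m$, the consequence rule---strengthen the leading singleton using $n \leq m$, split $\sglt{I'}$ off the trailing singleton using $t < n \leq m$, weaken into $\hat P$---gives $\semax{e \wedge I \wedge t = n}{s_t}{\hat P \chop \sglt{I'}}$, while for $n > m$ the precondition $e \wedge I \wedge t = n \wedge n \leq m$ is unsatisfiable, so that triple is derivable anyway. The auxiliary-variable rule over $n$ followed by consequence then gives $\semax{e \wedge I'}{s_t}{\hat P \chop \sglt{I'}}$, and since $(I \wedge t = m) \wedge W \models I'$, the trace-based while rule produces $\semax{(I \wedge t = m) \wedge W}{s}{Q}$ with $Q := \dup{((I \wedge t = m) \wedge W)} \chop \rep{(\hat P \chop \dup{I'})} \chop \sglt{\neg e}$. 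It remains to close by the consequence rule with the entailment $Q \models \sglt{W} \chop \finite \chop \sglt{Z}$: the head of any trace satisfying $Q$ satisfies $W$; once such a trace is known to be finite, its last state is a guard-re-test state, hence satisfies $I'$ (by a variant of Lemma~\ref{lemma:misc_asserts} (14)) and, through the trailing $\sglt{\neg e}$, also $\neg e$, so it satisfies $Z$; and, the doubletons, $\hat P$ and $\sglt{\neg e}$ all being finite, finiteness of the whole trace reduces to $\rep{(\hat P \chop \dup{I'})} \models \finite$.

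That last entailment is the main obstacle, because the trace-based while rule retains no termination information, so the finiteness has to come from the variant. I would prove $\rep{(\hat P \chop \dup{I'})} \models \finite$ by strong induction on $\eval{t}{(\hd~\tr)}$, the value of the variant in the head state of a trace $\tr$ satisfying $\rep{(\hat P \chop \dup{I'})}$. Unfolding the iteration operator by Lemma~\ref{lemma:misc_asserts} (5): either $\tr$ is a singleton and hence finite, or $\tr$ decomposes into a $\hat P \chop \dup{I'}$-prefix followed by a trace again satisfying $\rep{(\hat P \chop \dup{I'})}$; the prefix is finite since $\hat P$, $\dup{I'}$ and their chop are finite, and---this is precisely what the $\sglt{t = k}$ and $\sglt{t < k}$ markers in $\hat P$ and the intervening $\dup{I'}$ are there to secure---its last state, which is the head of the continuation, carries a strictly smaller variant, so the continuation is finite by the induction hypothesis; concatenating the two finite pieces yields a finite trace. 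The induction here is ordinary strong induction on a natural number, so the whole argument is constructive.
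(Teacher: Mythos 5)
Your proof is correct and follows essentially the same route as the paper's: induction on the total-correctness derivation, with the while case handled by feeding the trace-based while rule the invariant $I \wedge t \leq m$ and a body postcondition of the form $\exists k.\,\sglt{t=k}\chop\finite\chop\sglt{t<k}$ (modulo where the side condition $k\leq m$ is recorded), and then extracting finiteness of the resulting repetition from the strictly decreasing variant. Your explicit strong induction on $\eval{t}{(\hd~\tr)}$ is just a spelled-out version of the paper's remark that a strictly decreasing colist of naturals is finite, so the two arguments coincide in substance.
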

\begin{proof}
By induction on the derivation of $\semax{U}{s}{Z}$.
We show the main cases of sequence and  while.
\begin{itemize}
\item $s = \Seq{s_0}{s_1}$:
We are given as 
the induction hypotheses that,
for any $W_0$, 
$\semax{U \wedge W_0}{s_0}{\sglt{W_0}\chop \finite \chop \sglt{V}}$
and, for any $W_1$,
$\semax{V \wedge W_1}{s_1}{\sglt{W_1}\chop \finite \chop \sglt{Z}}$.
We have to prove that, for any $W$,
$\semax{U \wedge W}{\Seq{s_0}{s_1}}
{\sglt{W} \chop \finite \chop \sglt{Z}}$. This is done by the
derivation
\[
\infer[(1)]{ \semax{U \wedge W}{\Seq{s_0}{s_1}}
              {\sglt{W} \chop \finite \chop \sglt{Z}}}{
  \infer{\semax{U \wedge W}{\Seq{s_0}{s_1}}
         {\sglt{W} \chop \finite \chop \sglt{V} \chop \finite \chop \sglt{Z}}}{
    \infer*[\IH_0\, W]{\semax{U \wedge W}{s_0}{\sglt{W} \chop \finite \chop \sglt{V}}}{
    }
    &
    \infer{\semax{V}{s_1}{\sglt{V} \chop \finite \chop \sglt{Z}}}{
      \infer*[\IH_1\, V]{\semax{V \wedge  V}{s_1}{\sglt{V} \chop \finite \chop \sglt{Z}}}{
      }
    }
  }
}
\]
\begin{enumerate}
\item We have 
\[
\finite \chop \sglt{V} \chop \finite
\entails
\finite \chop \finite
\logequ
\finite
\]
\end{enumerate}

\item $s = \While{e}{s_t}$:
We are given as the induction hypothesis that, for all $n : nat$ and $W_t$, 
$\semax{e \wedge I \wedge t = n \wedge W_t}{s_t}
{\sglt{W_t} \chop \finite \chop \sglt{I \wedge t < n}}$.
We have to prove that, for any $W$, 
$\semax{I \wedge t = m \wedge W}{\While{e}{s_t}}
{\sglt{W} \chop \finite \chop \sglt{I \wedge \neg e \wedge t \leq m}}$. This is accomplished
by the derivation 
\[
\infer[(1)]{
  \semax{I \wedge t = m \wedge W}{\While{e}{s_t}}
        {\sglt{W} \chop \finite \chop \sglt{I \wedge \neg e \wedge t \leq m}}
}{
  \infer{
    \twoline{\{I \wedge t = m \wedge W\} ~\While{e}{s_t}}
     {\{\dup{I \wedge t = m \wedge W} \chop 
           \rep{((\exists n. \sglt{t = n} \chop \finite \chop 
           \sglt{t < n}) \chop \dup{I \wedge t \leq m})} 
           \chop \sglt{\neg e}\}}  
  }{
    \infer{
      \semax{e \wedge I \wedge t \leq m}{s_t}
            {(\exists n. \sglt{t = n} \chop \finite \chop \sglt{t < n})
             \chop \sglt{I \wedge t \leq m}}
    }{
      \infer{
      \semax{\exists n.\, e \wedge I \wedge t = n \wedge t \leq m}{s_t}
            {\exists n.\, \sglt{t = n} \chop \finite \chop 
                        \sglt{I \wedge t < n \wedge t \leq m}}
     }{
      \infer[(2)]{
       \forall n.\,
       \semax{e \wedge I \wedge t = n \wedge t \leq m}{s_t}
             {\sglt{t =n} \chop \finite 
                    \chop \sglt{I \wedge t < n \wedge t \leq m}}}{
       \infer*[\forall n.\, \IH_t\, n\, (t = n \wedge t \leq m)]{\forall n.\, 
       \semax{e \wedge I \wedge t = n \wedge t = n \wedge t \leq m}{s_t}
             {\sglt{t = n \wedge t \leq m} \chop \finite 
                    \chop \sglt{I \wedge t < n }}}{
       }
  }}
}}}
\]
\begin{enumerate}
\item The repetition gives rise to a non-empty colist of values of $t$,
which is strictly decreasing and must hence be of finite length. The
concatenation of a finite colist of finite traces is finite.

\item If $t = n$ and $t \leq m$ in the first state of a trace, then $n
\leq m$ (everywhere), so in the last state $t < n$ gives $t < m$,
which can be weakened to $t \leq m$.\qedhere
\end{enumerate}
\end{itemize}
\end{proof}

\begin{cor}\label{prop:total_Hoare}
  For any $s$, $U$ and $Z$, if $\semax{U}{s}{Z}$ is derivable in the
  total-correctness Hoare logic with $\whilefun$, then
  $\semax{U}{s}{\finite \chop \sglt{Z}}$ is derivable in the
  trace-based Hoare logic.
\end{cor}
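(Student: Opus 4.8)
The plan is to obtain this as an immediate consequence of Proposition~\ref{lemma:total_Hoare} by instantiating the auxiliary state predicate $W$ with $\true$. Taking $W = \true$ there yields derivability of $\semax{U \wedge \true}{s}{\sglt{\true} \chop \finite \chop \sglt{Z}}$ in the trace-based Hoare logic, and it only remains to reshape this judgement into the stated form.

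To do so I would invoke the consequence rule of the trace-based logic. On the precondition side, $U \models U \wedge \true$ is immediate. On the postcondition side, I need $\sglt{\true} \chop \finite \chop \sglt{Z} \models \finite \chop \sglt{Z}$; this is clause~\ref{lemma:sglt_true_chop} of Lemma~\ref{lemma:misc_asserts} (namely $\sglt{\true} \chop P \logequ P$) applied with $P := \finite \chop \sglt{Z}$. If one is fussy about the bracketing of the triple chop, clause~\ref{lemma:chop_assoc} of the same lemma lets one first reassociate $\sglt{\true} \chop (\finite \chop \sglt{Z})$, and monotonicity of chop (Proposition~\ref{prop:asserts_monotone}) is available should the rewrite need to happen under a context. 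Applying the consequence rule with these two entailments closes the proof.

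There is essentially no obstacle: the whole argument is one instantiation plus one application of consequence, with the nontrivial entailment already discharged in Lemma~\ref{lemma:misc_asserts}. The only point that warrants a moment's care is keeping the associativity of $\chop$ straight when cancelling the leading $\sglt{\true}$, which is exactly what clause~\ref{lemma:chop_assoc} is for.
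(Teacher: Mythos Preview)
Your proposal is correct and matches the paper's own proof, which simply says ``Immediate from Proposition~\ref{lemma:total_Hoare} by choosing $W = \true$.'' You have merely spelled out the implicit application of the consequence rule and the supporting entailments from Lemma~\ref{lemma:misc_asserts}, which is exactly what is needed.
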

\begin{proof}
  Immediate from Proposition~\ref{lemma:total_Hoare} by choosing 
  $W = \true$.
\end{proof}

\subsection{Projections of the trace-based logic into standard Hoare logics}

Given that derivations in the standard Hoare logics can be transformed
into derivations in the trace-based Hoare logic, it is natural to
wonder, if it is also possible to translate derivations in the
converse direction. In this direction we would expect some
loss (or displacement) of information. Reducing a condition on traces
into a condition on those last states that happen to exist or into a
condition that also requires their existence must lose or displace the
constraints on the intermediate states. 

We now proceed to demonstrating that meaningful transformations
(``projections'') from the trace-based logic to the standard logics are
indeed possible.

We will show that, if $\semax{U}{s}{P}$ is derivable in the trace-based Hoare
logic, then so are $\semax{U}{s}{\Last\, P}$ in the partial-correctness Hoare
logic and $\semax{U \wedge \ceil{P}{m}}{s}{\Last\, P}$ in the
total-correctness Hoare logic. We will shortly define $\ceil{P}{m}$
formally, but intuitively, the total-correctness judgement
states that $s$ terminates in a state satisfying $\Last\, P$ from any
initial state $\st$ such that $U$ holds in $\st$ and any $\st$-headed
trace satisfying $P$ has length at most $m$.  Since $m$ is universally
quantified on the top level, we can actually derive $\semax{U \wedge
  \exists m.\, \ceil{P}{m}}{s}{\Last\, P}$, stating that $s$
terminates in a state satisfying $\Last\, P$ from any initial state
$\st$ such that $U$ holds and the length of $\st$-headed traces 
satisfying $P$ is bounded.

\begin{prop}\label{lemma:back_Hoare}
For any $s$, $U$ and $P$,
if $\semax{U}{s}{P}$ in the trace-based Hoare logic, 
then for any $W$, 
$\semax{U \wedge W}{s}{\Last{(\sglt{W}\chop P)}}$
is derivable in the partial-correctness Hoare logic.
\end{prop}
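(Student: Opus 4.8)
The plan is to prove the displayed statement — which is already strengthened with the auxiliary state predicate $W$ precisely so that the induction goes through — by induction on the derivation of $\semax{U}{s}{P}$ in the trace-based logic, building in each case a derivation in the partial-correctness Hoare logic from its rules for the statement constructors together with its consequence and existential-introduction rules, plus the equivalences of Lemma~\ref{lemma:misc_asserts} and the equivalence $\Last{(P\chop\sglt{V})}\logequ\conj{\Last{P}}{V}$ noted in the text preceding it. The auxiliary $W$ is essential for the sequence case: after handling $s_0$ one must retain the last state of the trace produced by $s_0$ together with the fact that $W$ held initially, which is exactly what $\Last{(\sglt{W}\chop P)}$ records, and which is then supplied as the auxiliary predicate to the induction hypothesis for $s_1$. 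The base cases are immediate: for $\Skip$ the target postcondition $\Last{(\sglt{W}\chop\sglt{U})}$ reduces by Lemma~\ref{lemma:misc_asserts}\ref{lemma:sglt_chop} and \ref{lemma:last_sglt} to $\conj{U}{W}$; for $\Assign{x}{e}$ the target $\Last{(\sglt{W}\chop\update{U}{x}{e})}$ reduces, via Lemma~\ref{lemma:misc_asserts}\ref{lemma:sglt_dup} (reading $\update{U}{x}{e}$ as a doubleton predicate) and the noted $\Lastnoargs$-equivalence, to the strongest postcondition of $\Assign{x}{e}$ for precondition $\conj{U}{W}$, so the assignment rule and consequence apply.

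For $s = \Seq{s_0}{s_1}$ with premises $\semax{U}{s_0}{\Follows{P}{\sglt{V}}}$ and $\semax{V}{s_1}{Q}$, the induction hypothesis for $s_0$ with auxiliary predicate $W$ gives $\semax{\conj{U}{W}}{s_0}{\conj{\Last{(\sglt{W}\chop P)}}{V}}$ (using $\Last{(P'\chop\sglt{V})}\logequ\conj{\Last{P'}}{V}$), and the induction hypothesis for $s_1$ with auxiliary predicate $\Last{(\sglt{W}\chop P)}$ gives, after Lemma~\ref{lemma:misc_asserts}\ref{lemma:last_last} and \ref{lemma:chop_assoc}, $\semax{\conj{V}{\Last{(\sglt{W}\chop P)}}}{s_1}{\Last{(\sglt{W}\chop P\chop Q)}}$; these compose by the state-based sequence rule. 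For $s = \Ifthenelse{e}{s_t}{s_f}$ I apply the induction hypothesis to both branches with auxiliary predicate $\conj{W}{U}$, obtaining the common postcondition $\Last{(\sglt{\conj{W}{U}}\chop P)}$; the state-based conditional rule then yields $\semax{\conj{U}{W}}{s}{\Last{(\sglt{\conj{W}{U}}\chop P)}}$, and consequence finishes using $\Last{(\sglt{W}\chop\dup{U}\chop P)}\logequ\Last{(\sglt{\conj{W}{U}}\chop P)}$, which follows from Lemma~\ref{lemma:misc_asserts}\ref{lemma:sglt_dup} and the elementary fact $\Last{(\dup{V}\chop Q)}\logequ\Last{(\sglt{V}\chop Q)}$ ($\Lastnoargs$ ignores the duplicated head state).

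The while case is the one I expect to be the main obstacle. Given premises $U\models I$ and $\semax{\conj{e}{I}}{s_t}{\Follows{P}{\sglt{I}}}$, the right state invariant for the state-based while-rule should be $I' \mathrel{:\equiv} \Last{(\sglt{\conj{W}{U}}\chop\rep{(P\chop\dup{I})})}$, the ``reachable by a terminating partial run'' states, remembered together with the auxiliary $W$. Three facts are needed: (i) $\conj{U}{W}\models I'$, witnessed by the singleton trace (zero iterations); (ii) $I'\models I$, from $U\models I$, monotonicity, and Lemma~\ref{lemma:misc_asserts}\ref{lemma:last_rep}; and (iii) $\semax{\conj{e}{I'}}{s_t}{I'}$, obtained from the induction hypothesis for $s_t$ with auxiliary predicate $I'$ — which yields postcondition $\Last{(\sglt{I'}\chop P\chop\sglt{I})}$ — after strengthening the precondition via $I'\models I$ and then using the entailment $\Last{(\sglt{I'}\chop P\chop\sglt{I})}\models I'$. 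This last entailment is the crux: from a finite trace $\tr$ with $\hd~\tr\models I'$, $\tr\models P$ and $\Last\,\tr\models I$, together with a finite trace $\sigma$ witnessing $\hd~\tr\models I'$, one concatenates $\sigma$ with $\duplast~\tr$ to build a finite trace witnessing $\Last\,\tr\models I'$ — i.e., one closes the loop by one further iteration, using Lemma~\ref{lemma:misc_asserts}\ref{lemma:rep_unfold} and \ref{lemma:rep_idem} to re-fold the iteration. Then the state-based while-rule gives $\semax{I'}{\While{e}{s_t}}{\conj{I'}{\neg e}}$, and consequence closes the case since $\conj{U}{W}\models I'$ and, combining Lemma~\ref{lemma:misc_asserts}\ref{lemma:sglt_dup}, the fact $\Last{(\dup{V}\chop Q)}\logequ\Last{(\sglt{V}\chop Q)}$, and the noted $\Last{(Q'\chop\sglt{\neg e})}\logequ\conj{\Last{Q'}}{\neg e}$, we have $\conj{I'}{\neg e}\logequ\Last{(\sglt{W}\chop\dup{U}\chop\rep{(P\chop\dup{I})}\chop\sglt{\neg e})}$.

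The consequence and existential-introduction cases are routine: they propagate the auxiliary $W$ unchanged and appeal to monotonicity of $\Lastnoargs$ and $\chop$ together with the corresponding state-based rules; for the existential case one uses the usual freshness convention for the bound variable with respect to $W$, so that $\conj{(\exists z.\,U)}{W}\logequ\exists z.\,\conj{U}{W}$ and $\exists z.\,\Last{(\sglt{W}\chop P)}\models\Last{(\sglt{W}\chop\exists z.\,P)}$. I expect essentially all the effort to concentrate in the while case, specifically in identifying $I'$ and verifying the entailment $\Last{(\sglt{I'}\chop P\chop\sglt{I})}\models I'$ by the trace-concatenation argument above.
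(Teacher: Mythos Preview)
Your proposal is correct and follows essentially the same approach as the paper's proof: induction on the trace-based derivation, with the same choice of auxiliary predicate in the sequence case ($V'=\Last{(\sglt{W}\chop P\chop\sglt{V})}$, which equals your $\Last{(\sglt{W}\chop P)}\wedge V$) and essentially the same while-invariant (the paper writes $I'=\Last{(\sglt{W}\chop\dup{U}\chop\rep{(P\chop\dup{I})})}$, equivalent to your $\Last{(\sglt{W\wedge U}\chop\rep{(P\chop\dup{I})})}$ via Lemma~\ref{lemma:misc_asserts}\ref{lemma:sglt_dup} and the $\Last{(\dup{V}\chop Q)}\logequ\Last{(\sglt{V}\chop Q)}$ fact you note). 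Your crucial entailment $\Last{(\sglt{I'}\chop P\chop\sglt{I})}\models I'$ is exactly the paper's step (2), which it discharges by Lemma~\ref{lemma:misc_asserts}\ref{lemma:last_last} plus unfold/refold of the iteration---the same trace-concatenation argument you describe.
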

\begin{proof}
By induction on the derivation of $\semax{U}{s}{P}$.
We show the main cases of sequence and while.
\begin{itemize}
\item $s = \Seq{s_0}{s_1}$: We are given as the induction hypotheses
  that $\semax{U \wedge W_0}{s_0} {\Last{(\sglt{W_0}
      \chop P \chop \sglt{V})}}$, for any $W_0$, and $\semax{V \wedge
    W_1}{s_1}{\Last{(\sglt{W_1} \chop Q)}}$, for any $W_1$.  We have to show that,
  for any $W$, $\semax{U \wedge W}{\Seq{s_0}{s_1}} {\Last{(\sglt{W}
      \chop P \chop Q)}}$. Let $V'= \Last{(\sglt{W} \chop P \chop \sglt{V})}$.
  We have the derivation
\[
\infer[(1)]{\semax{U \wedge W}{\Seq{s_0}{s_1}}{\Last{(\sglt{W} \chop P \chop Q)}}}{
  \infer{\semax{U \wedge W}{\Seq{s_0}{s_1}}
        {\Last{(\sglt{V'} \chop Q)}}}{
    \infer*[\IH_0\, W]{\semax{U \wedge W}{s_0}{V'}}{
    }
    &   
    \infer[(2)]{\semax{V'}{s_1}
         {\Last{(\sglt{V'} \chop Q)}}}{
       \infer*[\IH_1\, V']{\semax{V \wedge V'}{s_1}
         {\Last{(\sglt{V'} \chop Q)}}}{
       }
    }
  }
}
\]
\begin{enumerate}
\item Recalling that $\Lastnoargs$ is monotone,  we have 
\begin{eqnarray*}
\lefteqn{ \Last{(\sglt{V'} \chop Q)} } \\
& = & 
\Last{(\sglt{\Last{(\sglt{W} \chop P \chop \sglt{V})}} \chop Q)} \\
& \logequ & 
\Last{(\sglt{W} \chop P \chop \sglt{V} \chop Q)} 
\qquad \textrm{(by Lemma~\ref{lemma:misc_asserts} (12))}\\
& \entails & 
\Last{(\sglt{W} \chop P \chop Q)}
\end{eqnarray*}

\item By Lemma~\ref{lemma:misc_asserts} 
(10) and (11),  
$V' = \Last{(\sglt{W} \chop P \chop \sglt{V})} \models V$,
therefore $V' \models V \wedge V'$.
\end{enumerate}

\item $s = \While{e}{s_t}$: We are given as the induction hypothesis that,
for any $W_t$, \linebreak $\semax{I \wedge e \wedge W_t}{s_t}
{\Last{(\sglt{W_t} \chop P \chop \sglt{I})}}$. We now have to prove that, for 
any $W$, 
$\semax{U \wedge W}{\While{e}{s_t}}
{\Last{(\sglt{W} \chop \dup{U} \chop \rep{(P \chop \dup{I})} \chop \sglt{\neg e})}}$,
given $U \models I$. Let $I' =
\Last{(\sglt{W} \chop \dup{U} \chop \rep{(P \chop \dup{I})})}$.
We close the case by the derivation
\[
\infer[(1)]{
\semax{U \wedge W}{\While{e}{s_t}}
{\Last{(\sglt{W} \chop \dup{U} \chop \rep{(P \chop \dup{I})} \chop \sglt{\neg e})}}
}{
  \infer{
    \semax{I'}{\While{e}{s_t}}{I' \wedge \neg e}}{
  \infer[(2)]{
  \semax{e \wedge I'}{s_t}{I'}
  }{
    \infer*[\IH_t\, I']{\semax{e \wedge I \wedge I'}{s_t}
        {\Last{(\sglt{I'} \chop P \chop \sglt{I})}}}{
    }
  }
}
}
\]
\begin{enumerate}
\item We have
\begin{eqnarray*}
\lefteqn{ U \wedge W }\\
& \logequ & 
\Last{(\sglt{W} \chop \dup{U} \chop \sglt{\true})}
\qquad \textrm{(by Lemma~\ref{lemma:misc_asserts} (1) 
and (10))}\\
& \entails &  \Last{(\sglt{W} \chop \dup{U} \chop \rep{(P \chop \dup{I})})}
\qquad \textrm{(by Lemma~\ref{lemma:misc_asserts} (5))}\\
& = & I'
\end{eqnarray*}

\item We have 
\begin{eqnarray*}
\lefteqn{ \Last{(\sglt{I'} \chop P \chop \sglt{I})} }\\
& = & 
\Last{(\sglt{\Last{(\sglt{W} \chop \dup{U} \chop \rep{(P \chop \dup{I})})}} \chop P \chop \sglt{I})}\\
& \logequ & 
\Last{(\sglt{W} \chop \dup{U} \chop \rep{(P \chop \dup{I})} \chop P \chop \sglt{I})}
\qquad \textrm{(by Lemma~\ref{lemma:misc_asserts} (12))}\\
& \logequ & 
\Last{(\sglt{W} \chop \dup{U} \chop \rep{(P \chop \dup{I})} \chop P \chop \dup{I})}\\
& \entails & 
\Last{(\sglt{W} \chop \dup{U} \chop \rep{(P \chop \dup{I})})}\\
& = & I'
\end{eqnarray*}
\end{enumerate}
\end{itemize}\vspace{-18 pt}
\end{proof}\smallskip

\begin{cor}\label{prop:back_Hoare}
For any $s$, $U$ and $P$,
if $\semax{U}{s}{P}$, then 
$\semax{U}{s}{\Last{P}}$ is derivable in the partial correctness Hoare logic.
\end{cor}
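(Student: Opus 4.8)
The plan is to derive this as an immediate consequence of Proposition~\ref{lemma:back_Hoare} by specializing the auxiliary predicate $W$ to $\true$. Given $\semax{U}{s}{P}$ in the trace-based logic, Proposition~\ref{lemma:back_Hoare} yields, for every $W$, a derivation of $\semax{U \wedge W}{s}{\Last{(\sglt{W} \chop P)}}$ in the partial-correctness Hoare logic. Taking $W = \true$ gives $\semax{U \wedge \true}{s}{\Last{(\sglt{\true} \chop P)}}$.

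It then remains to massage the pre- and postconditions into the desired shape. On the precondition side, $U \wedge \true \logequ U$ is trivial. On the postcondition side, $\sglt{\true} \chop P \logequ P$ by Lemma~\ref{lemma:misc_asserts}\ref{lemma:sglt_true_chop}, and since $\Lastnoargs$ is monotone (Proposition~\ref{prop:asserts_monotone}) this gives $\Last{(\sglt{\true} \chop P)} \logequ \Last{P}$. Feeding these entailments into the consequence rule of the partial-correctness Hoare logic (from $U \models U'$, $\semax{U'}{s}{Z'}$ and $Z' \models Z$ infer $\semax{U}{s}{Z}$) turns the specialized instance into $\semax{U}{s}{\Last{P}}$, as required.

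I do not expect any real obstacle here: the content of the argument is entirely carried by Proposition~\ref{lemma:back_Hoare}, whose inductive proof does the work of choosing variants and threading the auxiliary variable through the while-rule; the corollary is just the $W = \true$ collapse followed by one application of consequence. The only minor point to get right is that the equivalences used are genuinely available (Lemma~\ref{lemma:misc_asserts}\ref{lemma:sglt_true_chop} for the chop-unit law and Proposition~\ref{prop:asserts_monotone} for monotonicity of $\Lastnoargs$), so that the rewriting under $\Lastnoargs$ is justified rather than assumed.
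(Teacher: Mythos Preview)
Your proposal is correct and matches the paper's own proof, which simply says the corollary is immediate from Proposition~\ref{lemma:back_Hoare} and Lemma~\ref{lemma:misc_asserts}~(3) by instantiating $W = \true$. You have just spelled out the details (the trivial precondition simplification, monotonicity of $\Lastnoargs$, and the final application of consequence) that the paper leaves implicit.
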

\begin{proof}
  Immediate from Proposition~\ref{lemma:back_Hoare} and
  Lemma~\ref{lemma:misc_asserts} (3) by
  instantiating $W = \true$.
\end{proof}

To define the assertion translation for the projection of the
trace-based Hoare logic into the total-correctness Hoare logic, we
introduce two new connectives. The inductively defined trace predicate
$\len{n}$ is true of finite traces with length at most $n$ (we take
the singleton trace to have length 0).  Given a trace predicate $P$,
the state predicate $\ceil{P}{n}$ is defined to be true of a state
$\st$, if every trace headed by $\st$ and satisfying $P$ also
satisfies $\len{n}$.
\[
\begin{array}{c}
\infer{\sglt{\st} \models \len{n}}{}
\quad\quad
\infer{\cons{\st}{\tr} \models \len{(n+1)}}{\tr \models \len{n}}
\\[2ex]
\infer{\st \models \ceil{P}{n}}{
\forall \tr.\, \hd~\tr = \st \wedge \tr \models P \to \tr \models \len{n}}
\end{array}
\]

\begin{lem}\label{lemma:ceil_antimonotone}
For any $n$, $P$ and $Q$, if $Q \models P$ then
$\ceil{P}{n} \models \ceil{Q}{n}$.
\end{lem}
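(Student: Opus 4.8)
The plan is to prove this directly by unfolding the definition of $\ceil{-}{n}$, with no recursion or (co)induction needed — it is a one-step logical manipulation, in contrast to the surrounding lemmas about $\chop$, $\rep{-}$ and $\Lastnoargs$.

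First I would fix $n$, $P$, $Q$ and assume the hypothesis $Q \models P$, i.e., $\forall \tr.\, \satisfy{\tr}{Q} \to \satisfy{\tr}{P}$. Then I would take an arbitrary state $\st$ with $\satisfy{\st}{\ceil{P}{n}}$, which by definition means $\forall \tr.\, \hd~\tr = \st \wedge \satisfy{\tr}{P} \to \satisfy{\tr}{\len{n}}$, and aim to show $\satisfy{\st}{\ceil{Q}{n}}$. Unfolding the goal, I would take an arbitrary $\tr$ with $\hd~\tr = \st$ and $\satisfy{\tr}{Q}$; from $Q \models P$ I get $\satisfy{\tr}{P}$, and then feeding $\hd~\tr = \st$ and $\satisfy{\tr}{P}$ into the assumption $\satisfy{\st}{\ceil{P}{n}}$ yields $\satisfy{\tr}{\len{n}}$, which is exactly what is required. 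This closes the proof.

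I do not expect any real obstacle here: the statement is essentially the observation that $\ceil{-}{n}$ is built by an implication that is antimonotone in its trace-predicate argument, and unlike $\len{n}$ itself (inductively defined) or $\ceil{P}{n}$'s interaction with other connectives, monotonicity under entailment requires nothing beyond rearranging quantifiers. The only thing worth a moment's care is keeping the direction of the entailments straight — $Q \models P$ goes "the wrong way" relative to the conclusion $\ceil{P}{n} \models \ceil{Q}{n}$ — but that is precisely the antimonotonicity the lemma name advertises, and it falls out of the hypothesis being used in the contravariant position of the defining implication.
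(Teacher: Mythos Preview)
Your argument is correct and is exactly the natural one: unfold the definition of $\ceil{-}{n}$ and use $Q \models P$ in the contravariant position. The paper states this lemma without proof, presumably because it is immediate from the definition in just the way you spell out.
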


For setoid predicates $P$ and $Q$, we say $Q$ \emph{extends} $P$,
written $\after{P}{Q}$, if, whenever $\tr \models P$,
there exists $\tr'$ such that $\follows{Q}{\tr}{\tr'}$,
i.e., 
$\forall \tr.\, \tr \models P \imp \exists \tr'.\, \follows{Q}{\tr}{\tr'}$.

\begin{lem}\label{lemma:after_monotone}
For any $P, P', Q$ and $Q'$, if $P' \models P$ and $Q \models Q'$
then $\after{P}{Q} \imp \after{P'}{Q'}$. 
\end{lem}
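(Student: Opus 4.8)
The plan is to reduce the statement to a monotonicity property of the auxiliary relation $\follows{-}{-}{-}$ in its trace-predicate argument, and then obtain the lemma itself by a short non-recursive argument. The only coinductive work is in the reduction step, and it is essentially the argument already carried out (without a separate statement) in the proof of the second half of Proposition~\ref{prop:asserts_monotone}.

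First I would establish the auxiliary claim: for all setoid predicates $Q$, $Q'$ with $Q \models Q'$ and all traces $\tr$, $\tr'$, if $\follows{Q}{\tr}{\tr'}$ then $\follows{Q'}{\tr}{\tr'}$. This is proved by coinduction, performing inversion on the derivation of $\follows{Q}{\tr}{\tr'}$. In the $\mathsf{flw\mbox{-}nil}$ case we have $\bism{\tr}{\sglt{\st}}$, $\hd~\tr' = \st$ and $\tr' \models Q$; from $Q \models Q'$ we get $\tr' \models Q'$, and rule $\mathsf{flw\mbox{-}nil}$ gives $\follows{Q'}{\tr}{\tr'}$. In the $\mathsf{flw\mbox{-}delay}$ case we have $\bism{\tr}{\cons{\st}{\tau_0}}$, $\bism{\tr'}{\cons{\st}{\tau_0'}}$ and $\follows{Q}{\tau_0}{\tau_0'}$; the coinduction hypothesis yields $\follows{Q'}{\tau_0}{\tau_0'}$, and rule $\mathsf{flw\mbox{-}delay}$ gives $\follows{Q'}{\tr}{\tr'}$. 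Guardedness is unproblematic since in the recursive case the coinduction hypothesis is applied underneath the $\mathsf{flw\mbox{-}delay}$ constructor.

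With this in hand, the main claim is immediate. Assume $P' \models P$, $Q \models Q'$ and $\after{P}{Q}$, and let $\tr$ be any trace with $\tr \models P'$. Then $\tr \models P$ by $P' \models P$, so $\after{P}{Q}$ supplies some $\tr'$ with $\follows{Q}{\tr}{\tr'}$; the auxiliary claim (using $Q \models Q'$) upgrades this to $\follows{Q'}{\tr}{\tr'}$. This $\tr'$ witnesses the required instance of $\after{P'}{Q'}$, so $\after{P}{Q} \imp \after{P'}{Q'}$ as desired.

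I do not anticipate any real obstacle: the statement is essentially bookkeeping around the two defining clauses of $\follows{-}{-}{-}$, and the single coinductive step is routine. If one wished to keep the proof shorter, one could simply observe that the needed monotonicity of $\follows{-}{-}{-}$ is already contained in the proof of Proposition~\ref{prop:asserts_monotone} and cite it directly, reducing this lemma to the one-line argument of the previous paragraph.
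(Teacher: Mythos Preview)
Your proof is correct and is essentially a spelled-out version of the paper's argument: the paper simply says ``Follows from the definition,'' which amounts to unfolding $\after{-}{-}$ and using the monotonicity of $\follows{-}{-}{-}$ in its predicate argument (the latter being, as you note, already implicit in the proof of Proposition~\ref{prop:asserts_monotone}). There is no substantive difference in approach.
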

\begin{proof}
Follows from the definition.
\end{proof}

\begin{lem}\label{lemma:after_chop}
For any $P, Q$ and $R$, $\after{(P \chop Q)}{R} \imp
\after{(\sglt{\Last{P}} \chop Q)}{R}$.
\end{lem}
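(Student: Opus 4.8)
The plan is to unfold the definition of $\after{}{}$ and then reduce the claim to a fact about concatenation of traces. So I would assume $\after{(P \chop Q)}{R}$, fix an arbitrary $\tr$ with $\satisfy{\tr}{\sglt{\Last{P}} \chop Q}$, and aim to produce some $\tr''$ with $\follows{R}{\tr}{\tr''}$. Unfolding $\chop$, there is $\tr_0$ with $\satisfy{\tr_0}{\sglt{\Last{P}}}$ and $\follows{Q}{\tr_0}{\tr}$; by the defining clause of $\sglt{\cdot}$ (and its being a setoid predicate) $\tr_0$ may be taken to be a singleton $\sglt{\st}$ with $\satisfy{\st}{\Last{P}}$. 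Inverting $\follows{Q}{\sglt{\st}}{\tr}$---only rule $\mathsf{flw\mbox{-}nil}$ has a singleton as its left argument---then gives $\hd~\tr = \st$ and $\satisfy{\tr}{Q}$. From $\satisfy{\st}{\Last{P}}$ I would extract a trace $\tr_P$ with $\satisfy{\tr_P}{P}$ and $\conv{\tr_P}{\st}$.

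The key move is to apply the hypothesis $\after{(P \chop Q)}{R}$ not to $\tr$ directly but to the concatenation $\tr_P \concat \tr$, which glues $\tr_P$ and $\tr$ together at their shared state $\st = \hd~\tr$. First I would check $\satisfy{\tr_P \concat \tr}{P \chop Q}$: since $\satisfy{\tr_P}{P}$, it suffices to establish $\follows{Q}{\tr_P}{\tr_P \concat \tr}$, and this follows by a routine induction on the derivation of $\conv{\tr_P}{\st}$---in the base case $\tr_P \concat \tr = \tr$ and rule $\mathsf{flw\mbox{-}nil}$ applies using $\hd~\tr = \st$ and $\satisfy{\tr}{Q}$, while in the step case $\tr_P = \cons{\st_1}{\tr_P'}$ and rule $\mathsf{flw\mbox{-}delay}$ applies by the induction hypothesis. (This is essentially the same gluing fact already used in the proof of Lemma~\ref{lemma:misc_asserts}(12).) Then $\after{(P \chop Q)}{R}$ supplies some $\tr'$ with $\follows{R}{\tr_P \concat \tr}{\tr'}$.

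It then remains to discard the initial segment of $\tr'$ that was matched against $\tr_P$. For this I would prove a small auxiliary statement by induction on $\conv{\tr_P}{\st}$: whenever $\conv{\tr_P}{\st}$, $\hd~\tr = \st$ and $\follows{R}{\tr_P \concat \tr}{\tr'}$, there is a suffix $\tr''$ of $\tr'$ with $\follows{R}{\tr}{\tr''}$. In the base case $\tr_P \concat \tr = \tr$, so $\tr'' = \tr'$ works; in the step case $\tr_P \concat \tr = \cons{\st_1}{(\tr_P' \concat \tr)}$, and inverting $\follows{R}{\cons{\st_1}{(\tr_P' \concat \tr)}}{\tr'}$ forces $\tr' = \cons{\st_1}{\tr'_0}$ with $\follows{R}{\tr_P' \concat \tr}{\tr'_0}$, after which the induction hypothesis closes the case. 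Instantiating this auxiliary statement with the $\tr'$ obtained in the previous step yields the required $\tr''$, which finishes the proof.

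I do not expect a genuine obstacle here: the argument is essentially bookkeeping around concatenation and one-step inversion of $\follows{}{}{}$. The one point that deserves attention is that the finiteness derivation $\conv{\tr_P}{\st}$---which is available precisely because $\st$ arises as a \emph{last} state of a trace satisfying $P$---is what licenses both inductions, so that no decision about whether $\tr$ or $\tr_P \concat \tr$ is finite is ever made; everything is established structurally, in keeping with the rest of the development.
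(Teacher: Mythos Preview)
Your proposal is correct and follows essentially the same route as the paper: extract a finite witness $\tr_P$ for $P$ ending at $\hd~\tr$, concatenate it with $\tr$ to obtain a trace satisfying $P \chop Q$, invoke the hypothesis, and then strip off the finite prefix to recover the desired $\follows{R}{\tr}{\tr''}$. The paper phrases the last step as ``dropping the first $n$ elements, where $n$ is the length of $\tr_1$'' while you spell it out as an induction on the derivation of $\conv{\tr_P}{\st}$, but these are the same argument at different levels of detail.
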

\begin{proof}
Suppose $\tr_0 \models \sglt{\Last{P}} \chop Q$.
There exists $\tr_1$ such that $\tr_1 \models P$
and $\conv{\tr_1}{\hd\ \tr_0}$.
We then have $\tr_1 \concat \tr_0 \models P \chop Q$.
The hypothesis gives us that there exists $\tr_2$ such that
$\follows{Q}{\tr_1 \concat \tr_0}{\tr_2}$.
We then build a trace $\tr_3$ such that
$\follows{Q}{\tr_0}{\tr_3}$ from $\tr_2$ by dropping
the first $n$ elements, where $n$ is the length of $\tr_1$.
($n$ is welldefined because $\tr_1$ is finite.)
\end{proof}

\begin{lem}\label{lemma:after_last}
For any $P$ and $Q$, $\after{\sglt{\Last{P}}}{Q} \imp \after{P}{Q}$.
\end{lem}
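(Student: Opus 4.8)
The plan is to unfold the definition of $\after{\cdot}{\cdot}$ and prove $\after{P}{Q}$ directly: given any $\tr$ with $\tr \models P$, produce a trace $\tr'$ with $\follows{Q}{\tr}{\tr'}$. The tempting classical argument case-splits on whether $\tr$ is finite — if it is, glue a $Q$-satisfying trace onto its last state using the hypothesis $\after{\sglt{\Last{P}}}{Q}$; if it is not, then $\tr$ itself already satisfies $\follows{Q}{\tr}{\tr}$ — but this case split is constructively unavailable, so the two cases have to be merged into a single corecursive traversal of $\tr$ that postpones, possibly forever, the decision of whether the end of $\tr$ is ever reached.

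Concretely I would first record two immediate consequences of the definitions: (i) if $\tr \models P$ and $\conv{\tr}{\st}$ then $\st \models \Last{P}$ (this is precisely the introduction rule for $\Lastnoargs$); and (ii) if $\conv{\tr_1}{\st}$ then $\conv{\cons{\st_0}{\tr_1}}{\st}$ (the finiteness rule for cons). Then I would prove, by coinduction while building the extending trace $\tr'$ by corecursion in parallel — in the style of the $\midp$ construction in the proof of Lemma~\ref{lemma:misc_asserts}\,(4) — the auxiliary statement: for every $\tr$ such that $\forall \st.\, \conv{\tr}{\st} \imp \st \models \Last{P}$, there exists $\tr'$ with $\follows{Q}{\tr}{\tr'}$. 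The proof is by case analysis on $\tr$. If $\bism{\tr}{\sglt{\st}}$, then $\conv{\tr}{\st}$, so $\st \models \Last{P}$; instantiating the hypothesis $\after{\sglt{\Last{P}}}{Q}$ at the trace $\sglt{\st}$ (which satisfies $\sglt{\Last{P}}$) yields some $\tr'$ with $\follows{Q}{\sglt{\st}}{\tr'}$, which closes this case (the relevant instance of $\follows$ is exactly an application of rule $\mathsf{flw\mbox{-}nil}$). If $\bism{\tr}{\cons{\st}{\tr_1}}$, then by (ii) the suffix $\tr_1$ again satisfies $\forall \st'.\, \conv{\tr_1}{\st'} \imp \st' \models \Last{P}$, so the coinduction hypothesis supplies $\tr'_1$ with $\follows{Q}{\tr_1}{\tr'_1}$, and rule $\mathsf{flw\mbox{-}delay}$ gives $\follows{Q}{\cons{\st}{\tr_1}}{\cons{\st}{\tr'_1}}$; so $\tr' := \cons{\st}{\tr'_1}$ works. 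Both clauses are guarded, so the corecursion and coinduction are legitimate. Finally, to conclude $\after{P}{Q}$: any $\tr$ with $\tr \models P$ satisfies the premise of the auxiliary statement by (i), hence extends by $Q$, which is exactly what $\after{P}{Q}$ asserts.

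The main obstacle is precisely the one flagged above: constructively we cannot decide finiteness of $\tr$, so we cannot handle separately the ``extend at the end'' and ``already infinite, nothing to do'' cases. What makes the single coinductive traversal go through is the choice of invariant carried along it — that every state at which the current suffix of $\tr$ could possibly terminate already satisfies $\Last{P}$. This invariant is preserved under dropping a head (fact (ii)) and is exactly strong enough to invoke the hypothesis at the moment, if ever, the end of $\tr$ is hit. The only other delicate point is the familiar bureaucracy of exhibiting an existential witness inside a coinductive proof, which is dealt with here as elsewhere in the paper by constructing the witness trace corecursively alongside the coinductive derivation.
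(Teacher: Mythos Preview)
Your proposal is correct and takes essentially the same approach as the paper: both arguments corecursively traverse $\tr$, carrying the information that any last state of the current suffix satisfies $\Last{P}$, and invoke the hypothesis $\after{\sglt{\Last{P}}}{Q}$ only if and when a singleton is reached. The paper's proof is terser (it simply says ``build a trace $\tr'$ such that $\follows{Q}{\tr}{\tr'}$, by traversing $\tr$ and invoking the hypothesis when the last state of $\tr$ is hit''), but your explicit formulation of the coinductive invariant and the guardedness check spell out exactly what that sentence means.
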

\begin{proof}
Suppose $\tr \models P$.
From the hypothesis, we know that, for any $\st$,
if $\conv{\tr}{\st}$, then there exists $\tr'$ such that
$\hd\ \tr' = \st$ and $\tr' \models Q$. 
Using this, we build a trace $\tr'$ such that
$\follows{Q}{\tr}{\tr'}$, by 
traversing $\tr$ and invoking the hypothesis when
the last state of $\tr$ is hit.
\end{proof}

\begin{prop}\label{prop:back_HoareTotal}
  For any $s$, $U$ and $P$, if $\semax{U}{s}{P}$ in the trace-based
  Hoare logic, then for any $m:\mathit{nat}, W$ and $R$ such that
  $\after{\sglt{W} \chop P}{R}$, 
  $\semax{U
    \wedge W \wedge \ceil{P\chop R}{m}}{s} {\Last{(\sglt{W} \chop P)}
    \wedge \ceil{R}{m}}$ is derivable in the total-correctness Hoare
  logic with $\whilerel$.
\end{prop}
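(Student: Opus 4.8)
The proof will proceed by induction on the derivation of $\semax{U}{s}{P}$ in the trace-based Hoare logic, mirroring the structure of Proposition~\ref{lemma:back_Hoare} but carrying the extra bookkeeping needed for the variant. For each rule of the trace-based logic we will build a corresponding derivation in the total-correctness logic with $\whilerel$, choosing the ``relation'' invariant so that its natural-number argument bounds the length of the relevant trace. The key intuition is that $\ceil{P\chop R}{m}$ in the precondition says ``any run of $s$ followed by $R$ has length $\le m$,'' so at each step we can split this bound across the pieces that the chop operator introduces, and the $\whilerel$ rule lets us use a fresh bound for each iteration of the loop body without committing to a decreasing \emph{function}.

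\textbf{The easy rules.} For assignment, $\Skip$, the consequence rule, and the $\exists$-rule, the argument is essentially the same manipulation as in Proposition~\ref{lemma:back_Hoare}, now also tracking $\ceil{-}{m}$: since $\ceil{-}{m}$ is antimonotone in its predicate argument (Lemma~\ref{lemma:ceil_antimonotone}), any entailment used on postconditions is matched by a reverse entailment on the $\ceil{-}{m}$ side, and we use Lemma~\ref{lemma:after_monotone} to transport the hypothesis $\after{\sglt{W}\chop P}{R}$ along. For sequence, given $\semax{U}{s_0}{P\chop\sglt{V}}$ and $\semax{V}{s_1}{Q}$ with target postcondition $P\chop Q$, I would apply the $s_0$-IH with the same $m$, taking the ``extension'' predicate for $s_0$ to be $\sglt{V}\chop Q\chop R$ (so that $\sglt{W}\chop P\chop(\sglt{V}\chop Q\chop R)$ extends appropriately, using Lemma~\ref{lemma:after_chop} and the fact that $R$ extends $\sglt{W}\chop P$), and then apply the $s_1$-IH with the intermediate state predicate $V'=\Last{(\sglt{W}\chop P\chop\sglt{V})}$ and extension predicate $R$; the $\Last$-shuffling of Lemma~\ref{lemma:misc_asserts}~(12) then collapses things as in the partial-correctness proof. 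The if-statement is routine.

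\textbf{The while case is the crux.} This is where $\whilerel$ is essential and where I expect the real work. Given $U\models I$ and $\semax{I\wedge e}{s_t}{P\chop\sglt{I}}$, with trace-based postcondition $\dup{U}\chop\rep{(P\chop\dup{I})}\chop\sglt{\neg e}$, I would define the state-and-nat relation $J$ by $J\,n\,\st :\equiv \st\models I'\wedge \st\models\ceil{\rep{(P\chop\dup{I})}\chop\sglt{\neg e}\chop R}{n}$, where $I'=\Last{(\sglt{W}\chop\dup{U}\chop\rep{(P\chop\dup{I})})}$ is the invariant of the partial-correctness projection. The point is that from a state where $\ceil{\rep{(P\chop\dup{I})}\chop\sglt{\neg e}\chop R}{n}$ holds and the guard is true, running $s_t$ once (by the IH for $s_t$ with an appropriate $W_t$ and extension predicate) produces a finite trace of some length $\ell\ge 1$ whose last state — still satisfying $I'$ — satisfies $\ceil{\rep{(P\chop\dup{I})}\chop\sglt{\neg e}\chop R}{k}$ for some $k < n$; the strict decrease comes precisely because the body took at least one step (the $\dup{I}$ contributes a state), and we have no functionality requirement, so $k$ need not be determined. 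The main obstacle, and the step I would spend the most care on, is verifying this strict-decrease obligation $\semax{e\wedge J\,n}{s_t}{\exists k.\,k<n\wedge J\,k}$ constructively: one must manipulate $\len{-}$ and $\ceil{-}{-}$ through a chop, showing that if a trace $\tr_1\models P\chop\dup{I}$ has length $\ell$ and $\tr_1$ concatenated with any $P$-and-$R$-continuation has total length $\le n$, then the continuation alone has length $\le n-\ell < n$ — this requires the $\after$-machinery (Lemmas~\ref{lemma:after_chop}, \ref{lemma:after_last}) to guarantee that a continuation actually \emph{exists} so that the bound is not vacuous, plus a careful finiteness/length argument that stays within constructive reasoning. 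Once that obligation is discharged, applying $\whilerel$ with $J$ and $m$, together with $U\wedge W\wedge\ceil{P\chop R}{m}\models J\,m$ (this is where $\ceil{-}{m}$ antimonotonicity and $\rep{P}\logequ\sglt{\true}\vee\dots$ from Lemma~\ref{lemma:misc_asserts}~(5) enter, to establish $I'$ at the start), and weakening the $\whilerel$-postcondition $\exists k.\,k\le m\wedge J\,k\wedge\neg e$ down to $\Last{(\sglt{W}\chop P)}\wedge\ceil{R}{m}$ (using $\ceil{R}{k}\models\ceil{R}{m}$ for $k\le m$ — here $\len{-}$ being an ``at most'' predicate makes $\ceil{-}{-}$ monotone in the bound), finishes the case.
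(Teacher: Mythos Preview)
Your plan matches the paper's proof: the same induction, the same choice $J\,n = I' \wedge \ceil{\rep{(P\chop\dup{I})}\chop\sglt{\neg e}\chop R}{n}$ for the while case (the paper writes $I'$ as $\Last{(\dup{U\wedge W}\chop\rep{(P\chop\dup{I})})}$, equivalent to yours by Lemma~\ref{lemma:misc_asserts}\ref{lemma:sglt_dup}), and the same strict-decrease argument via the extra state contributed by $\dup{I}$. Two details you gloss over that the paper makes explicit: the $\after$ side conditions for the inductive hypotheses (in both the sequence and while cases) are not discharged by Lemmas~\ref{lemma:after_chop}/\ref{lemma:after_last} alone---those only transform one $\after$ into another---but require Corollary~\ref{coro:semax_satisfiable} (soundness plus totality of the trace-based logic) applied to the relevant subderivation to manufacture the needed continuation trace; and in the while case the paper treats $n=0$ separately, showing $e\wedge J\,0$ is contradictory by again invoking Corollary~\ref{coro:semax_satisfiable} to produce a witness trace of positive length.
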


\begin{proof}
By induction on the derivation of $\semax{U}{s}{P}$.
We show the main cases of sequence and while.

\begin{itemize}

\item $s = \Seq{s_0}{s_1}$:
We are given as the induction hypotheses that,
for any $m_0, W_0$ and $R_0$ such that
$\after{(\sglt{W_0} \chop P \chop \sglt{V})}{R_0}$, 
we have that
\begin{equation}\tag{$\IH_0$}
\semax{U \wedge W_0 \wedge \ceil{P \chop \sglt{V} \chop R_0}{m_0}}
{s_0}{\Last{(\sglt{W_0} \chop P \chop \sglt{V})} \wedge
\ceil{R_0}{m_0}}
\end{equation}
and, for any $m_1, W_1$ and $R_1$ such that
$\after{(\sglt{W_1} \chop Q)}{R_1}$,
we have that
\begin{equation}\tag{$\IH_1$}
\semax{V \wedge W_1 \wedge \ceil{Q \chop R_1}{m_1}}
{s_1}{\Last{(\sglt{W_1} \chop Q)} \wedge
\ceil{R_1}{m_1}}.
\end{equation}
We have to prove that,
for any $m, W$ and $R$ such that
\begin{equation}\tag{$\Hyp$}
\after{(\sglt{W} \chop P \chop Q)}{R},
\end{equation}
we have that
\begin{equation*}
\semax{U \wedge W \wedge \ceil{P \chop Q \chop R}{m}}{\Seq{s_0}{s_1}}
{\Last{(\sglt{W} \chop P \chop Q)} \wedge \ceil{R}{m}}.
\end{equation*}

Let $V' = \Last{(\sglt{W} \chop P \chop \sglt{V})}$.
We close the case by the derivation
\[
\infer[(5)]{\semax{U \wedge W \wedge \ceil{P \chop Q \chop R}{m}}{\Seq{s_0}{s_1}}
    {\Last{(\sglt{W} \chop P \chop Q)} \wedge \ceil{R}{m}} }{
 \infer{\semax{U \wedge W \wedge \ceil{P \chop Q \chop R}{m}}{\Seq{s_0}{s_1}}
      {\Last{(\sglt{V'} \chop Q)} \wedge \ceil{R}{m}} }{
  \infer[(3)]{\twoline{\{U \wedge W \wedge \ceil{P \chop Q \chop R}{m}\}
          s_0}{\{V' \wedge \ceil{Q \chop R}{m}\}}}{
     \infer*[(1)~\IH_0\, m, W, Q \chop R]{\twoline{\{U \wedge W \wedge \ceil{P \chop \sglt{V} \chop Q \chop R}{m}\}
           s_0}{\{V' \wedge \ceil{Q \chop R}{m}\}}}{
     }
  }
  &
  \infer[(4)]{\twoline{\{
           V' \wedge \ceil{Q \chop R}{m}\}s_1}
          {\{\Last{(\sglt{V'} \chop Q)} \wedge \ceil{R}{m}\}}  }{
     \infer*[(2)~\IH_1\, m, V', R]{\twoline{\{V \wedge V' \wedge \ceil{Q \chop R}{m}\}s_1} 
          {\{\Last{(\sglt{V'} \chop Q)} \wedge \ceil{R}{m}\}} }{  
     }
  }
 }
}
\]
\begin{enumerate}
\item To invoke the induction hypothesis, we have to prove
$\after{\sglt{W} \chop P \chop \sglt{V}}{Q \chop R}$.
By Corollary~\ref{coro:semax_satisfiable} and the hypothesis
$\semax{V}{s_1}{Q}$, we have that,
for any $\st$ such that $\st \models V$, there exists $\tr$ 
such that $\hd~\tr = \st$ and $\tr \models Q$.
This gives us $\after{\sglt{V}}{Q}$.
By Lemmata~\ref{lemma:misc_asserts} (11) 
and~\ref{lemma:after_monotone},
we have $\after{\sglt{\Last{(\sglt{W} \chop P \chop \sglt{V})}}}{Q}$,
therefore $\after{\sglt{W} \chop P \chop \sglt{V}}{Q}$
by Lemma~\ref{lemma:after_last}. 
Now, for any $\tr$, suppose $\tr \models \sglt{W} \chop P \chop \sglt{V}$.
There exists $\tr'$ such that
$\follows{Q}{\tr}{\tr'}$. 
We deduce $\tr' \models \sglt{W} \chop P \chop \sglt{V} \chop Q$.
From the hypothesis $\Hyp$, we know that there exists $\tr''$
such that $\follows{R}{\tr'}{\tr''}$,
therefore $\follows{Q \chop R}{\tr}{\tr''}$, as required.

\item To invoke the induction hypothesis, we have to prove
$\after{\sglt{V'} \chop Q}{R}$, namely
$\after{\sglt{\Last{(\sglt{W} \chop P \chop \sglt{V})}} \chop Q}{R}$.
By Lemma~\ref{lemma:after_chop}, it suffices
to show
$\after{\sglt{W} \chop P \chop \sglt{V} \chop Q}{R}$,
which follows from the hypothesis ($\Hyp$), 
$\sglt{W} \chop P \chop \sglt{V} \chop Q \models
\sglt{W} \chop P \chop Q$ and  Lemma~\ref{lemma:after_monotone}.

\item
$\ceil{P \chop Q \chop R}{m} \models
\ceil{P \chop \sglt{V} \chop Q \chop R}{m}$ follows
from Lemma~\ref{lemma:ceil_antimonotone} and 
$P \chop \sglt{V} \chop Q \chop R \models P \chop Q \chop R$.

\item $V' \models V \wedge V'$ holds because $V' = \Last{(\sglt{W} \chop P
  \chop \sglt{V})} \models \Last{\sglt{V}} \logequ V$
from Lemmata~\ref{lemma:misc_asserts} (10)
and (11).

\item We have
\begin{eqnarray*}
\lefteqn{ \Last{(\sglt{V'} \chop Q)} }\\
& = & 
\Last{(\sglt{\Last{(\sglt{W} \chop P \chop \sglt{V})}} \chop Q)}\\
& \logequ & 
\Last{(\sglt{W} \chop P \chop \sglt{V} \chop Q)}
\qquad \textrm{(by Lemma~\ref{lemma:misc_asserts} (12))}\\
& \entails & 
\Last{(\sglt{W} \chop P \chop Q)}
\end{eqnarray*}
\end{enumerate}

\item $s = \While{e}{s_t}$.
We are given as the induction hypothesis that,
for any $m_t$, $W_t$ and $R_t$ such
that $\after{\sglt{W_t} \chop P \chop \sglt{I}}{R_t}$,
\begin{equation}\tag{$\IH$} 
\semax{e \wedge I \wedge W_t \wedge \ceil{P \chop \sglt{I} \chop R_t}{m_t}}
{s_t}{\Last{(\sglt{W_t} \chop P \chop \sglt{I})} \wedge
\ceil{R_t}{m_t}}. 
\end{equation}
We also know $U \models I$ and $\semax{e \wedge I}{s}{P \chop \sglt{I}}$,
the latter of which gives us
\begin{equation}\label{eq:back_Hoare:whileI}
\semax{I}{\While{e}{s}}{\dup{I} \chop \rep{(P \chop \dup{I})} \chop 
\sglt{\neg e}}.
\end{equation}
We have to prove that, for any $m, W$ and $R$ such
that 
\begin{equation}\tag{$\Hyp$}
\after{\sglt{W} \chop \dup{I} \chop \rep{(P \chop \dup{I})} \chop \sglt{\neg e}}{R}, 
\end{equation}
we have that
\begin{equation*}
\begin{array}{l}
\{U \wedge W \wedge \ceil{\dup{U} \chop \rep{(P \chop \dup{I})}
\chop \sglt{\neg e} \chop R}{m}\}\\
\qquad \While{e}{s_t}~
\{\Last{(\sglt{W} \chop \dup{U} \chop \rep{(P \chop \dup{I})} 
\chop \sglt{\neg e})}
\wedge \ceil{R}{m}\}.
\end{array}
\end{equation*}
Let 
\begin{equation*}
J_0 = \Last{(\dup{U \wedge W} \chop \rep{(P \chop \dup{I})})}, 
\end{equation*}
and
\begin{equation*}
J_1\, n = \ceil{\rep{(P \chop \dup{I})} \chop \sglt{\neg e} \chop R}{n},
\end{equation*}
and 
\begin{equation*}
Q = \dup{I} \chop \rep{(P \chop \dup{I})} \chop \sglt{\neg e} \chop R.
\end{equation*}
We first prove that, for all $n$,
\begin{equation}\label{eq:back_Hoare:while_body}
\semax{e \wedge J_0 \wedge J_1\, n}{s_t}{\exists k.\, k < n \wedge J_0 \wedge J_1\, k}
\end{equation}
holds by case analysis on $n$.
\begin{itemize}
\item $n = 0$.
We prove $e \wedge J_0 \wedge J_1\, 0 \models \false$.
Then, by the derivability of $\semax{\false}{s}{U}$ for any $s$ and $U$
in the total correctness Hoare logic, we obtain 
(\ref{eq:back_Hoare:while_body}).
Suppose $\st_0 \models e \wedge J_0 \wedge J_1\, 0$ holds.
From $\st_0 \models J_0$, there exists 
a trace $\tr_0$ such that
$\tr_0 \models \dup{U \wedge W} \chop \rep{(P \chop \dup{I})}$
and $\conv{\tr_0}{\st_0}$. 
By Lemma~\ref{lemma:misc_asserts} (14),
$\st_0 \models J_0$ and 
$U \entails I$ give us $\st_0 \models I$.
Applying Corollary~\ref{coro:semax_satisfiable}
to (\ref{eq:back_Hoare:whileI}), 
we know that there exists
a trace $\tr_1$ such that $\hd~\tr_1 = \st_0$
and $\cons{\st_0}{\tr_1} 
\models \dup{I} \chop \rep{(P \chop \dup{I})} \chop \sglt{\neg e}$.
Hence, we have 
$\tr_1 \models
\sglt{I} \chop \rep{(P \chop \dup{I})} \chop \sglt{\neg e}$.
We deduce
\[
\tr_0 \concat \tr_1 \models
\sglt{W} \chop \dup{I} \chop \rep{(P \chop \dup{I})} \chop \sglt{\neg e},
\]
where $\concat$ was defined by corecursion by
\[
\sglt{\st} \concat \tr = \tr
\qquad
\cons{\st}{\tr} \concat \tr' = \cons{\st}{(\tr \concat \tr')}.
\]
The assumption $\Hyp$ gives us
a trace $\tr_2$ such that
$\follows{R}{\tr_0 \concat \tr_1}{\tr_2}$. 
Since $\tr_0$ is finite, we find the suffix $\tr_3$
of $\tr_2$ such that $\tr_3 \models
\rep{(P \chop \dup{I})} \chop \sglt{\neg e} \chop R$
by dropping the first $n$ elements from $\tr_2$,
where $n$ is the length of $\tr_0$. 
We also have $\hd~\tr_3 = \st_0$ by construction. 
The trace $\tr_3$ cannot be 
a singleton $\sglt{\st_0}$,
which would imply $\st_0 \models e$ and $\st_0 \models \neg e$.
Contradiction by $\st_0 \models J_1\, 0$.

\item $n \not= 0$. We invoke the induction hypothesis $\IH$
on $n$, $J_0$ and $Q$. To do so, we have to prove
\begin{equation}\label{eq:back_Hoare:IH_cond}
\after{\sglt{J_0} \chop P \chop \sglt{I}}{Q}.
\end{equation}
By Lemma~\ref{lemma:after_chop}, it suffices to prove
$\after
{(\dup{U \wedge W} \chop \rep{(P \chop \dup{I})} \chop P \chop \sglt{I})}{Q}$.
From Corollary~\ref{coro:semax_satisfiable} and 
(\ref{eq:back_Hoare:whileI}), 
we deduce
$\after{\sglt{I}}
{\dup{I} \chop \rep{(P \chop \dup{I})} \chop \sglt{\neg e}}$,
hence
$\after
{\dup{U \wedge W} \chop \rep{(P \chop \dup{I})} \chop P \chop \sglt{I}}
{\dup{I} \chop \rep{(P \chop \dup{I})} \chop \sglt{\neg e}}$
by Lemma~\ref{lemma:after_last}. 
This together with the hypothesis ($\Hyp$)
proves (\ref{eq:back_Hoare:IH_cond}).

We have now proved
\[
\qquad \semax{e \wedge I \wedge J_0 \wedge \ceil{P \chop \sglt{I} \chop Q}{n}}
{s_t}{\Last{(\sglt{J_0} \chop P \chop \sglt{I})} \wedge
\ceil{Q}{n}}. 
\]
We close the case by invoking the consequence rule with 
\begin{eqnarray*}
\lefteqn{ e \wedge J_0 \wedge J_1\, n }\\
& = &  
e \wedge J_0
 \wedge \ceil{\rep{(P \chop \dup{I})} \chop \sglt{\neg e} \chop R}{n}\\
& \entails & 
e \wedge J_0
 \wedge \ceil{P \chop \dup{I} \chop \rep{(P \chop \dup{I})} 
\chop \sglt{\neg e} \chop R}{n}\\
&&\textrm{(by Lemma~\ref{lemma:misc_asserts} (5))}\\
& \logequ & 
e \wedge J_0
\wedge \ceil{P \chop \sglt{I} \chop \dup{I} \chop \rep{(P \chop \dup{I})} 
\chop \sglt{\neg e} \chop R}{n}\\
&&\textrm{(as $\dup{I} \logequ \sglt{I} \chop \dup{I}$)}\\
& = &
e \wedge \Last{(\dup{U \wedge W} \chop \rep{(P \chop \dup{I})})} 
\wedge \ceil{P \chop \sglt{I} \chop Q}{n}\\
& \logequ & 
e \wedge I 
\wedge \Last{(\dup{U \wedge W} \chop \rep{(P \chop \dup{I})})} \wedge 
\ceil{P \chop \sglt{I} \chop Q}{n}\\
&&\textrm{(by Lemma~\ref{lemma:misc_asserts} (14) and 
$U \models I$)}\\
& = & 
e \wedge I \wedge J_0 \wedge \ceil{P \chop \sglt{I} \chop Q}{n},
\end{eqnarray*}
and 
\begin{eqnarray*}
\qquad\lefteqn{ \Last{(\sglt{J_0} \chop P \chop \sglt{I})} \wedge \ceil{Q}{n} }\\
& = & 
\Last{(\sglt{\Last{(\dup{U \wedge W} \chop \rep{(P \chop \dup{I})})}} 
\chop P \chop \sglt{I})} 
\wedge \ceil{Q}{n} \\
& \entails & 
\Last{(\dup{U \wedge W} \chop \rep{(P \chop \dup{I})}
\chop P \chop \sglt{I})} 
\wedge \ceil{Q}{n} \\
&&\textrm{(by Lemma~\ref{lemma:misc_asserts} (12))}\\
&\logequ & 
\Last{(\dup{U \wedge W} \chop \rep{(P \chop \dup{I})}
\chop P \chop \dup{I})} 
\wedge \ceil{Q}{n} 
\qquad \textrm{(1)}\\
& \entails &  
\Last{(\dup{U \wedge W} \chop \rep{(P \chop \dup{I})})}
\wedge \ceil{Q}{n} 
\qquad \textrm{(by Lemma~\ref{lemma:misc_asserts} (5))}\\
& =  & 
J_0 \wedge \ceil{\dup{I} \chop \rep{(P \chop \dup{I})} 
\chop \sglt{\neg e} \chop R}{n}\\
& \entails & 
J_0 \wedge \ceil{\rep{(P \chop \dup{I})} 
\chop \sglt{\neg e} \chop R}{(n-1)} \qquad \textrm{(2)}\\
& = & 
J_0 \wedge J_1\, (n-1)\\
& \entails & 
\exists k.\, k < n \wedge J_0 \wedge J_1\, k
\end{eqnarray*}

(1) Given a trace $\tau$ and a state $\st$ satisfying 
$\dup{U \wedge W} \chop \rep{(P \chop \dup{I})}
\chop P \chop \sglt{I}$ and $\conv{\tau}{\st}$, 
we have $\duplast\ \tau \models
\dup{U \wedge W} \chop \rep{(P \chop \dup{I})}
\chop P \chop \dup{I}$ and $\conv{\duplast\ \tau}{\st}$.

(2) From Lemma~\ref{lemma:misc_asserts} (14) 
and $U \models I$,
we know, for any $\st$, if $\st \models J_0$, namely
$\st \models \Last{(\dup{U \wedge W} \chop \rep{(P \chop \dup{I})})}$,
then $\st \models I$. 
Suppose $\st \models J_0 \wedge \ceil{\dup{I} \chop \rep{(P \chop \dup{I})} 
\chop \sglt{\neg e} \chop R}{n}$.
It suffices to prove that, for any $\tr$, 
if $\hd~\tr = \st$ and 
$\tr \models \rep{(P \chop \dup{I})} 
\chop \sglt{\neg e} \chop R$, then $\tr \models \len{(n-1)}$.
We have $\cons{\st}{\tr} \models
\dup{I} \chop \rep{(P \chop \dup{I})} 
\chop \sglt{\neg e} \chop R$, 
which together with 
$\st \models \ceil{\dup{I} \chop \rep{(P \chop \dup{I})} 
\chop \sglt{\neg e} \chop R}{n}$
yields $\cons{\st}{\tr} \models \len{n}$. This gives us
$\tr \models \len{(n-1)}$.

\end{itemize}

\bigskip

\noindent By the rule for the while-statement by taking $J\, n$ to be
$J_0 \wedge J_1\, n$, we deduce
\[
\semax{J_0 \wedge J_1\, m}{\While{e}{s_t}}
      {\exists k.\, k \leq m \wedge J_0 \wedge J_1\, k \wedge \neg e}
\]
We have
\begin{eqnarray*}
\lefteqn{ U \wedge W \wedge \ceil{\dup{U} \chop \rep{(P \chop \dup{I})}
\chop \sglt{\neg e} \chop R}{m} } \\
& \entails & 
\Last{(\dup{U \wedge W} \chop \rep{(P \chop \dup{I})})}
\wedge \ceil{\rep{(P \chop \dup{I})} \chop \sglt{\neg e} \chop R}{m}\\
& = & 
J_0 \wedge J_1\, m
\end{eqnarray*}
and 
\begin{eqnarray*}
\lefteqn{  \exists k.\, k \leq m \wedge J_0 \wedge J_1\, k \wedge \neg e } \\
& \entails & 
J_0 \wedge J_1\, m \wedge \neg e \quad \quad \textrm{(by monotonicity of $J_1$)} \\
& = &  
\Last{(\dup{U \wedge W} \chop \rep{(P \chop \dup{I})})}
\wedge 
\ceil{\rep{(P \chop \dup{I})} \chop \sglt{\neg e} \chop R}{m} \wedge \neg e\\
& \entails & 
\Last{(\sglt{W} \chop \dup{U} \chop \rep{(P \chop \dup{I})} 
\chop \sglt{\neg e})}
\wedge \ceil{R}{m}
\end{eqnarray*}
So, by consequence, we have
\[
\begin{array}{l}\{U \wedge W \wedge \ceil{\dup{U} \chop \rep{(P \chop \dup{I})}
\chop \sglt{\neg e} \chop R}{m}\}\\
\qquad \While{e}{s_t}~
\{\Last{(\sglt{W} \chop \dup{U} \chop \rep{(P \chop \dup{I})} 
\chop \sglt{\neg e})}
\wedge \ceil{R}{m}\}
\end{array}\]
as we wanted.\qedhere 
\end{itemize}
\end{proof}

\begin{cor}\label{cor:back_HoareTotal}
For any $s$, $U$ and $P$,
if $\semax{U}{s}{P}$ in the trace-based Hoare logic, then
$\semax{U \wedge \exists m.\, \ceil{P}{m}}{s}
{\Last{P}}$
is derivable in the total-correctness Hoare logic with $\whilerel$. 
\end{cor}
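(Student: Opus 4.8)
The plan is to obtain this as a direct instance of Proposition~\ref{prop:back_HoareTotal}, followed by elimination of the auxiliary variable $m$. Concretely, I would apply Proposition~\ref{prop:back_HoareTotal} with $W = \true$ and $R = \sglt{\true}$. Its side condition $\after{(\sglt{W} \chop P)}{R}$ then reads $\after{(\sglt{\true} \chop P)}{\sglt{\true}}$, and this holds trivially: for any $\tr$, Lemma~\ref{lemma:follows_true} gives $\follows{\sglt{\true}}{\tr}{\tr}$, so $\tr$ itself witnesses the required existential. Hence for every $m : \mathit{nat}$ we get a derivation in the total-correctness Hoare logic with $\whilerel$ of
\[
\semax{U \wedge \true \wedge \ceil{(P \chop \sglt{\true})}{m}}{s}{\Last{(\sglt{\true} \chop P)} \wedge \ceil{\sglt{\true}}{m}}.
\]

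Next I would simplify pre- and postcondition using Lemma~\ref{lemma:misc_asserts}. By part~(3) we have $\sglt{\true} \chop P \logequ P \logequ P \chop \sglt{\true}$, whence $\Last{(\sglt{\true} \chop P)} \logequ \Last{P}$ by monotonicity of $\Lastnoargs$ (Proposition~\ref{prop:asserts_monotone}) and $\ceil{(P \chop \sglt{\true})}{m} \logequ \ceil{P}{m}$ by Lemma~\ref{lemma:ceil_antimonotone} applied in both directions. Moreover $\ceil{\sglt{\true}}{m} \logequ \true$, since the only traces satisfying $\sglt{\true}$ are singletons, which have length $0 \le m$. Discharging these equivalences through the consequence rule of the total-correctness logic yields, for each $m$,
\[
\semax{U \wedge \ceil{P}{m}}{s}{\Last{P}}.
\]

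Finally I would remove the dependence on $m$. Since $\Last{P}$ is $m$-free, the family $\forall m.\, \semax{U \wedge \ceil{P}{m}}{s}{\Last{P}}$ can be fed to the auxiliary-variable (existential) rule of the total-correctness logic (reading the postcondition as the vacuous $\exists m.\,\Last{P}$), giving $\semax{\exists m.\, (U \wedge \ceil{P}{m})}{s}{\exists m.\, \Last{P}}$; one more application of consequence, using $\exists m.\,(U \wedge \ceil{P}{m}) \logequ U \wedge \exists m.\, \ceil{P}{m}$ (as $U$ is $m$-free) and $\exists m.\,\Last{P} \logequ \Last{P}$, delivers the claim. There is no genuine obstacle; the only points that merit a moment's care are the verification of the side condition $\after{(\sglt{\true} \chop P)}{\sglt{\true}}$ and of the identity $\ceil{\sglt{\true}}{m} \logequ \true$, plus checking that the state-based total-correctness logic supplies the existential and consequence rules invoked in the last step (these are among the rules in the Appendix).
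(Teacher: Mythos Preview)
Your proposal is correct and follows exactly the route the paper takes: instantiate Proposition~\ref{prop:back_HoareTotal} with $W = \true$ and $R = \sglt{\true}$, simplify via consequence, and then eliminate $m$ using the existential rule of the total-correctness logic (which is indeed among the rules in the Appendix). You have simply spelled out in detail the side-condition check and the simplifications that the paper leaves implicit.
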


\begin{proof}
  From Proposition~\ref{prop:back_HoareTotal} by taking $W = \true$,
  $R = \sglt{\true}$ and then invoking the consequence rule and the
  admissible rule
\[
\infer{\semax{\exists m.\, U}{s}{\exists m.\, Z}}{
  \forall m.\, \semax{U}{s}{Z}
}
\]
  of the total-correctness Hoare logic.
\end{proof}

\subsection{Discussion}\label{sec:Hoare:discussion}

Let us now discuss the transformations from and to the
total-correctness Hoare logic. We used two different rules for while.
In the forward direction (the embedding), we used $\whilefun$, whereas
in the backward direction (the projection), we used $\whilerel$.

With $\whilerel$, the total-correctness Hoare logic is complete,
constructively. In particular, $\whilefun$ is weaker and
straightforwardly derivable from $\whilerel$ by instantiating $J~n = (I
\wedge t = n)$.

In the converse direction, attempts to derive $\whilefun$ from
$\whilerel$ constructively meet several problems. It is natural to
define the invariant $I$ as $I = \exists m.\, J\, m$. The variant $t$
should then be defined as
\[
t = \left\{\begin{array}{l@{\quad}l}
  \textrm{the least $k$ such that $J\, k$} & \textrm{if~} \exists m.\, J\, m \\
  \textrm{0 (or any other natural)} & \textrm{if~} \neg \exists m.\, J\, m
  \end{array}
\right.
\]
This is a sensible classical definition. But constructively, there are
two issues with it. First, in the first case, we must be able to find
the least $k$ such that $J\, k$. We have a bound for this search,
which is given by the witness $m_0$ of $\exists m.\, J\, m$. But in
order to search, we must be able to decide where $J\, k$ or $\neg J\,
k$ for each $k < m_0$, which requires $J$ to be decidable.  Second, in
order to choose between the two cases, we must be able to decide
whether $\exists m.\, J\, m$ or $\neg \exists m.\, J\, m$, which is
generally impossible.

These issues show that $\whilefun$ is not fine-tuned for constructive
reasoning. To formulate a more sensitive rule, we may notice that we
only ever need $t$ in contexts where $I = \exists m.\, J\, m$ can be
assumed to hold. So instead of $I \wedge t = n$ we could use $\Sigma p
: I.\, t\, p = n$ making the variant $t$ depend on the proof of the
invariant. The corresponding modified rule is
\[
\infer[\whilefun']{
  \semax{\Sigma p : I.\,  t\, p = m}{\While{e}{s_t}}{\Sigma p : I.\,  t\, p \leq m \wedge \neg e}
}{
  \forall n:\mathit{nat}
  ~\semax{e \wedge \Sigma p : I.\,  t\, p  = n}{s_t}{\Sigma p : I.\,  t\, p < n}
}
\]

Ideally, we would like to use $\whilerel$ (or some interderivable
rule) both in the transformations to and from the trace-based Hoare
logic. But in the forward direction we have been unable to do it. We
conjecture the reason to be that the state-based total-correctness
triples and trace-based (partial-correctness) triples express
different types of propositions about the underlying evaluation
relations: total-correctness triples are about existence of a final
state satisfying the postcondition, whereas partial-correctness
triples are about all traces produced satisfying the
postcondition. This discrepancy shows itself also in our backward
transformation: we need to invoke the totality of the trace-based
evaluation relation.

\section{Examples}\label{sec:examples}

Propositions~\ref{prop:partial_Hoare} and~\ref{lemma:total_Hoare} show
that our trace-based logic is expressive enough to perform the same
type of analyses that the state-based partial or total correctness
Hoare logics can perform.
However, the expressiveness of our logic goes beyond that of
the partial and the total correctness Hoare logics. 
In this section, we demonstrate this by a series of examples. 
We adopt the usual notational convention that 
any occurrence of a variable in a state predicate
represents the value of the variable in the state,
e.g., a state predicate $x + y = 7$ abbreviates $\lambda \st.\, \st\, x + \st\, y = 7$.

\subsection{Unbounded total search}

Since we work in a constructive underlying logic, 
we can distinguish between termination of a run, $\finite$, 
and nondivergence, $\neg \infinite$.
For instance, any unbounded nonpartial search 
fails to be terminating but is nonetheless nondivergent.

This example is inspired by Markov's principle: $(\neg \forall
n.\, \neg B\, n) \to \exists n. B\, n$ for any decidable
predicate $B$ on natural numbers, i.e., a predicate satisfying
$\forall n.\, B\, n \vee \neg B\, n$. Markov's principle is a
classical tautology, but is not valid constructively.  This implies we
cannot constructively prove a statement $s$ that searches a natural
number $n$ satisfying $B$ by successively checking whether $B ~0, B
~1, B ~2, \ldots$ to be terminating. In other words, we cannot
constructively derive a total correctness judgement for $s$.  The
assumption $\neg \forall n.\ \neg B\, n$ only guarantees that $B$ is not
false everywhere, therefore the search cannot diverge; indeed, we can
constructively prove that $s$ is nondivergent in our logic.

We assume given a decidable predicate $B$ on natural numbers
and an axiom \linebreak {\it B\_noncontradictory}: 
$\neg \forall n.\, \neg B\, n$ stating that 
$B$ is not false everywhere. 
Therefore running the statement 
\[
\mathit{Search} \equiv \Seq{\Assign{x}{0}}{\While{\neg B\, x}{\Assign{x}{x+1}}}
\]
cannot diverge: this would contradict {\it B\_noncontradictory}. 
In Proposition~\ref{prop:Markov}
we prove that any trace produced by running $\mathit{Search}$ is nondivergent
and $B ~x$ holds of the last state. 

We define a family of trace predicates
$\mathit{cofinally\, n}$
coinductively as follows:
\[
\begin{array}{l}
\infer={
  \cons{\st}{\sglt{\st}} \models \cofinally ~n 
}{
  \st ~x = n
  &B ~n
}
\quad
\infer={
  \cons{\st}{\cons{\st}{\tr}} \models \cofinally ~n
}{
  \st ~x = n
  &\neg B ~n
  &\tr \models \cofinally ~(n+1)
}
\end{array}
\]
$\cofinally\, n$ is a setoid predicate.

A crucial observation is that, in the presence of {\it B\_noncontradictory},
$\cofinally ~0$ is stronger than nondivergent: 

\begin{lem}\label{lemma:cofinally_not_infinite}
$\cofinally ~0 \models \neg \infinite$.
\end{lem}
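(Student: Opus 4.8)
The plan is to assume a trace $\tr$ with $\tr \models \cofinally~0$ and $\tr \models \infinite$, i.e.\ $\div{\tr}$ (the two are definitionally the same), and derive a contradiction with the axiom {\it B\_noncontradictory}. The intuition is that an \emph{infinite} trace satisfying $\cofinally~0$ can only arise by taking the second clause of the definition over and over, which forces $\neg B~n$ at every stage and hence witnesses $\forall n.\, \neg B~n$. I first record the obvious incompatibility of finiteness and infiniteness: $\conv{\tr}{\st}$ and $\div{\tr}$ cannot both hold, which is proved by induction on the derivation of $\conv{\tr}{\st}$ (this is the fact underlying Lemma~\ref{lemma:misc_asserts}(8)).

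Next I would prove a one-step descent lemma: for every $n$ and $\tr$, if $\tr \models \cofinally~n$ and $\div{\tr}$, then $\neg B~n$ and there is some $\tr'$ with $\tr' \models \cofinally~(n+1)$ and $\div{\tr'}$. This goes by inversion on $\tr \models \cofinally~n$. In the first case, $\bism{\tr}{\cons{\st}{\sglt{\st}}}$, the trace is finite, contradicting $\div{\tr}$ by the incompatibility just noted. In the second case, $\bism{\tr}{\cons{\st}{\cons{\st}{\tr'}}}$ with $\st~x = n$, $\neg B~n$ and $\tr' \models \cofinally~(n+1)$; we immediately read off $\neg B~n$, and two inversions on $\div{\tr}$ (using that $\div{}$ is a setoid predicate) yield $\div{\tr'}$, as required.

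Finally, under the standing assumptions $\tr \models \cofinally~0$ and $\div{\tr}$, I would prove by induction on $n$ that $\exists \tr_n.\, \tr_n \models \cofinally~n \wedge \div{\tr_n}$: the base case takes $\tr_0 = \tr$, and the step case applies the descent lemma to the trace delivered by the induction hypothesis. Feeding each such $\tr_n$ back into the descent lemma gives $\neg B~n$ for every $n$, i.e.\ $\forall n.\, \neg B~n$, directly contradicting {\it B\_noncontradictory}. Hence $\div{\tr}$ is impossible, which is exactly $\tr \models \neg\infinite$. The only mildly delicate steps are the coinductive inversion in the descent lemma and the index bookkeeping; the mathematical core is just an infinite descent that fabricates a proof of $\forall n.\, \neg B~n$, so I expect no real obstacle.
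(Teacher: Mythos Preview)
Your proposal is correct and follows essentially the same approach as the paper: assume $\tr \models \cofinally~0$ and $\tr \models \infinite$, show by induction on $n$ that for every $n$ there is an infinite trace satisfying $\cofinally~n$, deduce $\neg B~n$ for all $n$, and contradict \textit{B\_noncontradictory}. Your version is more explicit (you isolate the one-step descent lemma and spell out the inversions), but the structure and key idea are identical to the paper's proof.
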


\begin{proof}
  It is sufficient to prove that, for any $\tr$,
  $\satisfy{\tr}{\cofinally ~0}$ and $\satisfy{\tr}{\infinite}$ are
  contradictory.  Suppose there is a trace $\tr$ such that
  $\satisfy{\tr}{\cofinally ~0}$ and $\satisfy{\tr}{\infinite}$. Then
  by induction on $n$ we can show that, for any $n$, there is a trace
  $\tr'$ such that $\satisfy{\tr'}{\cofinally ~n}$ and
  $\satisfy{\tr'}{\infinite}$. But whenever this holds
  for some $\tr'$ and $n$, then $\neg B\, n$. Hence we also have
  $\forall n.\, \neg B\, n$. But this contradicts ${\it
    B\_noncontradictory}$.
\end{proof}

It is straightforward to prove that, 
if $\tr$ satisfies 
$\cofinally~0$, then its last state (if exists) satisfies $B$ at $x$.
(We refer to the program variable $x$ in the statement $\mathit{Search}$.)

\begin{lem}\label{lemma:cofinally_colast}
$\cofinally~0 \models \true \chop \sglt{B~x}$.
\end{lem}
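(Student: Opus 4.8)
The plan is to reduce the entailment to a statement about last states and then to invoke Lemma~\ref{lemma:conv_last}. Unfolding the definition of chop, $\satisfy{\tr}{\true \chop \sglt{B~x}}$ holds as soon as $\satisfy{\tr}{\true}$ (which is trivial) and $\follows{\sglt{B~x}}{\tr}{\tr}$; hence it suffices to prove $\follows{\sglt{B~x}}{\tr}{\tr}$ for every $\tr$ with $\satisfy{\tr}{\cofinally~0}$. By Lemma~\ref{lemma:conv_last}, applied with the state predicate $B~x$, this in turn reduces to showing that every last state of such a $\tr$ satisfies $B~x$, i.e.\ that $\conv{\tr}{\st}$ implies $\satisfy{\st}{B~x}$.

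The heart of the proof is therefore the auxiliary claim: for all $n$, $\tr$ and $\st$, if $\satisfy{\tr}{\cofinally~n}$ and $\conv{\tr}{\st}$, then $\satisfy{\st}{B~x}$. I would prove it by induction on $\len{\tr}$ (which is well defined because $\conv{\tr}{\st}$ exhibits $\tr$ as finite), with $n$ universally quantified, performing inversion on $\satisfy{\tr}{\cofinally~n}$ in each case. If $\len{\tr} = 0$ then $\tr$ is a singleton, and a singleton trace satisfies no $\cofinally~n$, so this case is vacuous. Otherwise inversion on $\cofinally~n$ leaves two possibilities. If $\tr$ is (bisimilar to) $\cons{\st_0}{\sglt{\st_0}}$ with $\st_0~x = n$ and $B~n$, then $\conv{\tr}{\st}$ forces $\st = \st_0$, so $\satisfy{\st}{B~x}$ holds. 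If $\tr$ is (bisimilar to) $\cons{\st_0}{\cons{\st_0}{\tr'}}$ with $\st_0~x = n$, $\neg B~n$ and $\satisfy{\tr'}{\cofinally~(n+1)}$, then $\conv{\tr}{\st}$ yields $\conv{\tr'}{\st}$ with $\len{\tr'} < \len{\tr}$, and the induction hypothesis applied to $\tr'$, $n+1$ and $\st$ gives $\satisfy{\st}{B~x}$.

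Putting the pieces together: instantiating the auxiliary claim at $n = 0$ shows that for any $\tr$ with $\satisfy{\tr}{\cofinally~0}$, $\conv{\tr}{\st}$ implies $\satisfy{\st}{B~x}$; Lemma~\ref{lemma:conv_last} then yields $\follows{\sglt{B~x}}{\tr}{\tr}$, and hence $\satisfy{\tr}{\true \chop \sglt{B~x}}$, which is the desired entailment.

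The only mildly delicate point is the mismatch between the $\cofinally$ rules, which advance two states at a time, and the inductive definition of finiteness, which advances one: in the recursive case the relevant sub-derivation of finiteness for $\tr'$ sits two levels below the one for $\tr$, which is why I organize the argument as an induction on $\len{\tr}$ rather than on the immediate structure of the finiteness derivation. Beyond that bookkeeping, every step is a routine unfolding of the definitions together with the cited lemma.
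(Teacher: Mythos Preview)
Your argument is correct, but it differs from the paper's. The paper proves the auxiliary condition
\[
\forall n\, \tr.\; \satisfy{\tr}{\cofinally~n} \to \follows{\sglt{B~x}}{\tr}{\tr}
\]
directly by \emph{coinduction}, without invoking Lemma~\ref{lemma:conv_last} at all: inversion on $\satisfy{\tr}{\cofinally~n}$ gives either the doubleton case (where one application of $\mathsf{flw\mbox{-}delay}$ followed by $\mathsf{flw\mbox{-}nil}$ closes the goal using $B~n$ and $\st_0~x=n$) or the recursive case (where two applications of $\mathsf{flw\mbox{-}delay}$ reduce to the coinduction hypothesis at $n+1$). This is the uniform coinductive pattern the paper uses throughout, and it treats finite and infinite $\tr$ on the same footing.

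Your route instead factors through Lemma~\ref{lemma:conv_last}, turning the coinductive goal into a purely inductive one about last states, and then discharges that by (strong) induction on the length of a finite trace. This is sound and arguably more elementary for a reader less comfortable with guarded coinduction, at the cost of a small detour and the length bookkeeping you flag at the end. The paper's version is shorter and stays entirely in the coinductive idiom that matches the definition of $\cofinally$; yours trades that uniformity for an argument that lives entirely in the inductive fragment once Lemma~\ref{lemma:conv_last} has been applied.
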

\begin{proof}
We prove the following condition by coinduction,
from which the lemma follows:
for any $n, \tr$, if $\tr \models \cofinally~n$, 
then $\follows{\sglt{B~x}}{\tr}{\tr}$.
\end{proof}

\begin{prop}\label{prop:Markov}
$\semax{\true}{\mathit{Search}}
{(\true \chop \sglt{B ~x}) \wedge \neg \infinite}$.
\end{prop}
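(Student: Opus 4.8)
The plan is to prove the triple by appealing to completeness (Proposition~\ref{prop:complete}) and then reducing everything to the two facts about $\cofinally$ already in hand, Lemmas~\ref{lemma:cofinally_not_infinite} and~\ref{lemma:cofinally_colast}. By completeness it suffices to show that whenever $\exec{\mathit{Search}}{\st}{\tr}$, the trace $\tr$ satisfies $(\true \chop \sglt{B~x}) \wedge \neg\infinite$. So fix $\st$ and $\tr$ with $\exec{\mathit{Search}}{\st}{\tr}$. Inverting the evaluation rules for the sequence, for the assignment, and for extended evaluation, one obtains $\tr = \cons{\st}{\tau}$ for some $\tau$ with $\exec{\While{\neg B\, x}{\Assign{x}{x+1}}}{\update{\st}{x}{0}}{\tau}$; the leading $\st$ is simply the first state recorded by $x := 0$.

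The heart of the argument is the auxiliary claim that, for any state $\st'$ and any $n$ with $\st'\, x = n$, every $\tau'$ with $\exec{\While{\neg B\, x}{\Assign{x}{x+1}}}{\st'}{\tau'}$ satisfies $\cofinally\, n$. I would prove this by coinduction, inverting on the derivation of the while evaluation. If the guard $\neg B\, x$ tests false at $\st'$, then decidability of $B$ gives $B\, n$ and the rule forces $\tau' = \cons{\st'}{\sglt{\st'}}$, so the first rule for $\cofinally\, n$ applies. If the guard tests true, decidability gives $\neg B\, n$; unfolding the extended evaluation of the body $x := x+1$ from the accumulated trace $\cons{\st'}{\sglt{\st'}}$ and then of the loop shows $\tau' = \cons{\st'}{\cons{\st'}{\tau''}}$ where $\exec{\While{\neg B\, x}{\Assign{x}{x+1}}}{\update{\st'}{x}{n+1}}{\tau''}$, so the coinduction hypothesis yields $\satisfy{\tau''}{\cofinally\, (n+1)}$ and the second rule for $\cofinally$ gives $\satisfy{\tau'}{\cofinally\, n}$. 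Instantiating with $\st' = \update{\st}{x}{0}$ and $n = 0$ gives $\satisfy{\tau}{\cofinally\, 0}$.

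To finish: Lemma~\ref{lemma:cofinally_not_infinite} gives $\satisfy{\tau}{\neg\infinite}$, and since the only rule with conclusion $\div{(\cons{\st}{\tau})}$ has premise $\div{\tau}$, inversion lifts this to $\satisfy{\cons{\st}{\tau}}{\neg\infinite}$, i.e. $\satisfy{\tr}{\neg\infinite}$. Lemma~\ref{lemma:cofinally_colast} (more precisely, the coinductive statement established in its proof) gives $\follows{\sglt{B~x}}{\tau}{\tau}$, and one application of rule $\mathsf{flw\mbox{-}delay}$ yields $\follows{\sglt{B~x}}{\cons{\st}{\tau}}{\cons{\st}{\tau}}$, whence $\satisfy{\tr}{\true \chop \sglt{B~x}}$ (the $\true$-prefix being witnessed by $\tr$ itself). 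Conjoining the two components gives $\satisfy{\tr}{(\true \chop \sglt{B~x}) \wedge \neg\infinite}$, as required.

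I expect the main obstacle to be the auxiliary coinductive claim: one must carefully track the trace through the two state-growth steps inside a single loop iteration (testing the guard, then running $x := x+1$), matching it exactly to the two-state pattern in the clauses defining $\cofinally$, and one must remember that the guard $\neg B\, x$ reads as a genuine boolean — so that testing false really delivers $B\, n$ — only because $B$ is assumed decidable. An alternative proof would instead build a Hoare derivation directly, using the while-rule with the trivial invariant $I = \true$, and then prove that the resulting loop postcondition entails $(\true \chop \sglt{B~x}) \wedge \neg\infinite$ (after simplifying $\neg\neg B\, x$ to $B\, x$ using decidability of $B$); but that entailment proof carries out essentially the same coinductive bookkeeping against $\cofinally$, so the completeness route is the more economical one.
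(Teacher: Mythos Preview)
Your proof is correct, but it takes a different route from the paper. The paper builds the Hoare derivation syntactically---precisely the ``alternative'' you sketch at the end: it applies the assignment, sequence, and while rules (with the trivial invariant $I = \true$), obtaining the loop postcondition $\dup{x=0} \chop \rep{(\update{(\neg B\,x)}{x}{x+1}\chop \dup{\true})} \chop \sglt{B~x}$, and then proves by coinduction that this trace predicate entails $\cofinally~0$; the two lemmas and the consequence rule finish the job. Your route via completeness shifts the coinductive work from an entailment between trace predicates to a statement directly about the evaluation relation. As you yourself observe, the coinductive bookkeeping is comparable. The paper's approach has the contextual advantage of exhibiting an explicit Hoare derivation, which is the point of the examples section---to demonstrate the logic in use rather than to bypass it; your approach is arguably more economical since it avoids massaging the iterated-chop postcondition that the while rule produces.
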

\begin{proof}
A sketch of the derivation is given in Figure~\ref{fig:Markov}. (At
several places we have applied the consequence rule silently.)
\begin{enumerate}
\item We use Lemmata~\ref{lemma:cofinally_not_infinite}
and~\ref{lemma:cofinally_colast}.

\item
$x$ is incremented by one in every iteration until
$B$ holds at $x$.
$\dup{x=0} \chop \rep{(\update{(\neg B\, x)}{x}{x+1}\chop \dup{\true})} \chop 
\sglt{B~x} \models \cofinally~0$ follows 
from the definition of $\cofinally$. (It is proved by coinduction.)

\item We take $\true$ as the loop invariant.\qedhere
\end{enumerate}
\end{proof}

\begin{figure}
\[
\begin{array}{c}
\infer[(1)]{
  \semax{\true}{\Seq{\Assign{x}{0}}{\While{\neg B\, x}{\Assign{x}{x+1}}}}
  {(\true \chop \sglt{B ~x}) \wedge \neg \infinite}
}{
  \infer{
    \semax{\true}{\Seq{\Assign{x}{0}}{\While{\neg B\, x}{\Assign{x}{x+1}}}}
          {\update{\true}{x}{0} \chop \cofinally ~0}
  }{
    \semax{\true}{\Assign{x}{0}}{\update{\true}{x}{0}}
    ~~~\infer[(2)]{
       \semax{x=0}{\While{\neg B\, x}{\Assign{x}{x+1}}}
             {\cofinally~0}
    }{
     \detach{-1}{2}{
     \hspace{-4cm}
     \infer[(3)]{
       \twoline{
        \{x=0\}
        ~\While{\neg B\, x}{\Assign{x}{x+1}}}
         {\{\dup{x=0} \chop 
            \rep{(\update{(\REMOVED{x \geq 0 \wedge} \neg B\, x)}{x}{x+1}\chop \dup{\true})} \chop 
            \sglt{\REMOVED{x \geq 0 \wedge} B~x}\}}
      }{
        \semax{\REMOVED{x \geq 0 \wedge} \neg B\, x}
          {\Assign{x}{x+1}}
          {\update{(\REMOVED{x \geq 0 \wedge} \neg B\, x)}{x}{x+1}}
      }
  }
}}}
\end{array}
\]
\caption{Derivation of 
$\semax{\true}{\mathit{Search}}
{(\true \chop \sglt{B ~x}) \wedge \neg \infinite}$}\label{fig:Markov}
\end{figure}

\subsection{Liveness}

As the similarity of our assertion language to
the interval temporal logic suggests, we can specify and prove
liveness properties. 
In Proposition~\ref{prop:liveness}, we prove that
the statement
\[
\Seq{\Assign{x}{0}}{\While{\true}{\Assign{x}{x+1}}}
\]
eventually sets the value of $x$ to $n$ for any $n:\mathit{nat}$
at some point. 

The example is simple but sufficient to demonstrate core techniques
used to prove liveness properties of more practical examples.  For
instance, imagine that assignment to $x$ involves a system call, with
the assigned value as the argument.  It is straightforward to enrich
traces to record such special events, and we can then apply the same
proof technique to prove the statement eventually performs the system
call with $n$ as the argument for any $n$.

For every $n$, we define inductively a trace predicate $\eventually~
n$ stating that a state $\st$ in which the value of $x$ is $n$ is
eventually reached by finitely traversing $\tr$:
\[
\infer{
  \satisfy{\sglt{\st}}{\eventually ~n}
}{
  \st~x = n
}
\quad
\infer{
  \satisfy{\cons{\st}{\tr}}{\eventually ~n}
}{
  \st~x = n
}
\quad
\infer{
  \satisfy{\cons{\st}{\tr}}{\eventually ~n}
}{
  \satisfy{\tr}{\eventually ~n}
}
\]
If $\tr$ satisfies
$\eventually~n$, then it has a finite prefix followed by a state
that maps $x$ to $n$.

\begin{lem}\label{lemma:eventually_finite}
For any $n, \tr$, if $\tr \models \eventually~n$
then $\tr \models \finite \chop \sglt{x=n} \chop \true$.
\end{lem}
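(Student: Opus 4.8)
\emph{Proof plan.} I would prove the lemma by induction on the derivation of $\tr \models \eventually~n$. Abbreviate the tail part as $Q = \sglt{x=n} \chop \true$, so that the goal becomes $\tr \models \finite \chop Q$; unfolding the chop, this asks for a finite prefix $\tr_1$ of $\tr$, i.e., a trace $\tr_1$ with $\tr_1 \models \finite$ and $\follows{Q}{\tr_1}{\tr}$. I would first record the easy auxiliary fact that $\tr' \models Q$ holds for \emph{every} trace $\tr'$ whose head state maps $x$ to $n$: witnessing the chop with the singleton prefix $\sglt{\hd~\tr'}$ and using rule $\mathsf{flw\mbox{-}nil}$ for the trailing $\true$, this reduces $\tr' \models \sglt{x=n} \chop \true$ to the single obligation $\hd~\tr'~x = n$.

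In the two base cases --- $\tr = \sglt{\st}$, and $\tr = \cons{\st}{\tr_0}$, with $\st~x = n$ in both --- I would take $\sglt{\st}$ as the witnessing prefix: it is finite since $\conv{\sglt{\st}}{\st}$, and rule $\mathsf{flw\mbox{-}nil}$ yields $\follows{Q}{\sglt{\st}}{\tr}$ from $\hd~\tr = \st$ together with $\tr \models Q$, the latter holding by the auxiliary fact because $\hd~\tr~x = \st~x = n$.

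In the recursive case --- $\tr = \cons{\st}{\tr_0}$ with a shorter derivation of $\tr_0 \models \eventually~n$ --- the induction hypothesis gives a finite prefix $\tr_1$ of $\tr_0$ with $\follows{Q}{\tr_1}{\tr_0}$. Then $\cons{\st}{\tr_1}$ is a suitable prefix of $\tr$: it is finite (from $\conv{\tr_1}{\st'}$ we get $\conv{\cons{\st}{\tr_1}}{\st'}$), and rule $\mathsf{flw\mbox{-}delay}$ lifts $\follows{Q}{\tr_1}{\tr_0}$ to $\follows{Q}{\cons{\st}{\tr_1}}{\cons{\st}{\tr_0}}$, which is $\follows{Q}{\cons{\st}{\tr_1}}{\tr}$.

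I do not expect a genuine obstacle here: the argument is routine once the chop is unfolded at the outermost $\finite$. The only points needing a little care are keeping the $\mathsf{flw\mbox{-}nil}$ and $\mathsf{flw\mbox{-}delay}$ cases of the $\follows{\cdot}{\cdot}{\cdot}$ relation apart, and noting that every prefix built in the proof is finite by an explicit derivation of $\conv{\cdot}{\cdot}$, so nothing relies on the (constructively unavailable) decision of finiteness --- the proof stays constructive throughout.
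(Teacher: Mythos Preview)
Your proposal is correct and follows exactly the approach the paper takes: induction on the derivation of $\tr \models \eventually~n$. The paper states only that one line, so your write-up simply spells out the routine details of the three cases and the unfolding of the outer chop.
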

\begin{proof}
By induction on the derivation of $\eventually~n$.
\end{proof}

\begin{prop}\label{prop:liveness}
For any $n:\mathit{nat}$,
$\semax{\true}{\Seq{\Assign{x}{0}}{\While{\true}{\Assign{x}{x+1}}}}{\finite ~\chop \sglt{x = n} \chop \true}$.
\end{prop}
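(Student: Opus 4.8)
The plan is to build a derivation in the trace-based Hoare logic mirroring the one in Figure~\ref{fig:Markov}, with the auxiliary predicate $\eventually~n$ in the role that $\cofinally~0$ plays there. At the top I apply the sequence rule, peeling off $\Assign{x}{0}$ on the left: the assignment axiom gives $\semax{\true}{\Assign{x}{0}}{\update{\true}{x}{0}}$, and since every last state of a trace satisfying $\update{\true}{x}{0}$ maps $x$ to $0$, the consequence rule turns this into $\semax{\true}{\Assign{x}{0}}{\update{\true}{x}{0} \chop \sglt{x=0}}$, which has the shape $P \chop \sglt{V}$ required by the sequence rule with $V = (x=0)$. The right premise is then a triple $\semax{x=0}{\While{\true}{\Assign{x}{x+1}}}{\eventually~n}$, so the conclusion of the sequence rule is $\semax{\true}{\Seq{\Assign{x}{0}}{\While{\true}{\Assign{x}{x+1}}}}{\update{\true}{x}{0} \chop \eventually~n}$. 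For the while triple I take the loop invariant $I = \true$ and the body postcondition $P = \update{\true}{x}{x+1}$; the side condition $x = 0 \models \true$ is trivial, and the body premise $\semax{\conj{\true}{\true}}{\Assign{x}{x+1}}{\update{\true}{x}{x+1} \chop \sglt{\true}}$ is just the assignment axiom followed by the equivalences $\update{(\conj{\true}{\true})}{x}{x+1} \logequ \update{\true}{x}{x+1} \logequ \update{\true}{x}{x+1} \chop \sglt{\true}$ (Lemma~\ref{lemma:misc_asserts}(3)). The while rule thus yields $\semax{x=0}{\While{\true}{\Assign{x}{x+1}}}{\dup{x=0} \chop \rep{(\update{\true}{x}{x+1} \chop \dup{\true})} \chop \sglt{\neg\true}}$.

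Two applications of consequence then close the derivation. After the sequence rule I weaken $\update{\true}{x}{0} \chop \eventually~n$ to the goal $\finite \chop \sglt{x=n} \chop \true$: Lemma~\ref{lemma:eventually_finite} and monotonicity give $\update{\true}{x}{0} \chop \eventually~n \models \update{\true}{x}{0} \chop \finite \chop \sglt{x=n} \chop \true$, and since any doubleton trace is finite we have $\update{\true}{x}{0} \models \finite$, so by associativity of chop and $\finite \chop \finite \logequ \finite$ we land on $\finite \chop \sglt{x=n} \chop \true$. After the while rule I weaken $\dup{x=0} \chop \rep{(\update{\true}{x}{x+1} \chop \dup{\true})} \chop \sglt{\neg\true}$ to $\eventually~n$ — using $\neg\true \logequ \false$, this is the entailment $\dup{x=0} \chop \rep{(\update{\true}{x}{x+1} \chop \dup{\true})} \chop \sglt{\false} \models \eventually~n$, which is the real content of the proof.

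This entailment is the main obstacle, and it cannot be attacked by a naive coinduction: $\eventually~n$ is an \emph{inductively} defined predicate, whereas the left-hand side is built from the coinductive $\rep$, and its trailing $\sglt{\false}$ is never actually reached (the repetition runs forever, since $\sglt{\true} \chop \sglt{\false} \logequ \sglt{\false}$ is unsatisfiable, so a trace can satisfy $\rep{Q} \chop \sglt{\false}$, with $Q$ here a finite-trace predicate, only by performing infinitely many iterations of $Q$). I would instead prove by induction on $m$ the generalization: for all $k$, $m$ and every trace $\tr$ whose head maps $x$ to $k$ and with $\tr \models \rep{(\update{\true}{x}{x+1} \chop \dup{\true})} \chop \sglt{\false}$, one has $\tr \models \eventually~(k+m)$. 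The base case $m = 0$ is immediate from the head condition and the rules defining $\eventually$. In the step case, unfolding $\rep$ once via Lemma~\ref{lemma:misc_asserts}(5) and associativity (Lemma~\ref{lemma:misc_asserts}(4)), and discarding the impossible $\sglt{\true} \chop \sglt{\false}$ alternative, exposes a prefix satisfying $\update{\true}{x}{x+1} \chop \dup{\true}$, i.e.\ $\tr$ is bisimilar to $\cons{\st}{\cons{\st'}{\tr''}}$ with $\st~x = k$, $\st'~x = k+1$, and $\tr''$ (headed by $\st'$) again satisfying $\rep{(\update{\true}{x}{x+1} \chop \dup{\true})} \chop \sglt{\false}$; the induction hypothesis at $k+1$, $m-1$ gives $\tr'' \models \eventually~(k+m)$, and two applications of the last rule for $\eventually$ lift this to $\tr$. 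Applying the generalization with $k = 0$ and $m = n$, after peeling off the leading $\dup{x=0}$ (which forces the head to map $x$ to $0$), gives the desired entailment; the routine trace-splitting bookkeeping is exactly as in the proofs of Lemma~\ref{lemma:misc_asserts}.
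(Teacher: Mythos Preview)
Your proposal is correct and follows essentially the same route as the paper: the derivation tree you build (sequence rule with $V = (x=0)$, while rule with invariant $\true$, two consequence steps through $\eventually~n$ and Lemma~\ref{lemma:eventually_finite}) is exactly the one in Figure~\ref{fig:liveness}. The only difference is that where the paper leaves step~(2) as an informal remark (``$x$ is incremented by one in every iteration \ldots\ $x$ must eventually become $n$''), you spell out a concrete induction on $m$ proving $\tr \models \eventually~(k+m)$ for any $\tr$ headed by a state with $x = k$ and satisfying $\rep{(\update{\true}{x}{x+1} \chop \dup{\true})} \chop \sglt{\false}$; this is a sound and natural way to cash out the paper's informal argument.
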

\begin{proof} 
A sketch of the derivation is given in Figure~\ref{fig:liveness}.  
\begin{enumerate}
\item We use Lemma~\ref{lemma:eventually_finite}.

\item
$x$ is incremented by one in every iteration, starting from
zero. Each iteration is finite: the length of the trace of each iteration
is invariably
two. $x$  must eventually become $n$ after a finite number of 
iterations for any $n$.

\item We take $\true$ as the invariant.\qedhere
\end{enumerate}
\end{proof}

\begin{figure}
\[
\begin{array}{c}
\infer[(1)]{
  \semax{\true}{\Seq{\Assign{x}{0}}{\While{\true}{\Assign{x}{x+1}}}}{\finite ~\chop \sglt{x = n} \chop \true}
}{
\infer{
  \semax{\true}{\Seq{\Assign{x}{0}}{\While{\true}{\Assign{x}{x+1}}}}{\update{\true}{x}{0} \chop \eventually ~n}
}{
  \semax{\true}{\Assign{x}{0}}{\update{\true}{x}{0}}
  &
  \infer[(2)]{\semax{x=0}{\While{\true}{\Assign{x}{x+1}}}{\eventually~n}
}{
  \detach{-1}{2}{
  \hspace{-4cm}
    \infer[(3)]{
    \twoline{\{x=0\}
             ~\While{\true}{\Assign{x}{x+1}}}
            {\{\dup{x=0} \chop 
               \rep{(\update{\true}{x}{x+1} \chop \dup{\true})} \chop
               \sglt{\false}\}}
    }{
    \semax{\true}{\Assign{x}{x+1}}{\update{\true}{x}{x+1}}
}}}
}}
\end{array}
\]
\caption{Derivation of $\semax{\true}{\Seq{\Assign{x}{0}}{\While{\true}{\Assign{x}{x+1}}}}{\finite ~\chop \sglt{x = n} \chop \true}$}\label{fig:liveness}
\end{figure}

\subsection{Weak trace equivalence}\label{sec:wbism}

The last example is inspired by a notion of weak trace equivalence:
two traces are weakly equivalent if they are bisimilar
by identifying a finite number of consecutive identical states
with a single state.  It is
conceivable that (strong) bisimilarity is too strong for some applications
and one needs weak bisimilarity. 
For instance, we may want to prove that the observable
behavior, such as the colist of i/o events of a potentially diverging run,
is bisimilar to a particular colist of i/o events. 
Then we must be able to collapse a finite number of non-observable 
internal steps. We definitely should not collapse an infinite number 
of internal steps, otherwise we would end up concluding that
a statement performing an i/o operation after a diverging run,
e.g., $\Seq{\While{\true}{\Skip}}{\mathsf{print} ~\mbox{\it ``hello''}}$,
is observably equivalent to a statement immediately performing the same i/o operation,
e.g., $\mathsf{print} ~\mbox{\it ``hello''}$.

In this subsection, we prove that the trace produced by running
the statement
\[
    \While{\true}
           {(\Seq{\Assign{y}{x}}{
            \Seq{(\While{y \not= 0}{\Assign{y}{y-1}})}
                {\Assign{x}{x+1}}})}
\]
is weakly bisimilar to
the ascending sequence of natural numbers $0 :: 1 :: 2 :: 3 :: \ldots$,
by projecting the value of $x$, assuming that $x$ is initially 0.
The statement differs from that of the previous subsection
in that it ``stutters'' for a finite but unbounded number of steps,
i.e., $\While{y \not= 0}{\Assign{y}{y-1}}$,
before the next assignment to $x$ happens. 

This exercise is instructive in that 
we need to formalize weak trace equivalence in our constructive 
underlying logic. 
We do so by supplying an inductive predicate $\red{\tr}{\tr'}$ stating
that $\tr'$ is obtained from $\tr$ 
by dropping finitely many elements from the beginning, until 
the first state with a different value of $x$ is encountered, 
and a coinductive trace predicate
$\upstream ~n$
stating that the given trace is weakly bisimilar to 
the ascending sequence of natural numbers starting at $n$, by projecting
the value of $x$. Formally: 
\[
\infer{
  \red{\cons{\st}{\tau}}{\tau'}
}{
 \st ~x = \hd ~\tau~ x
 &\red{\tau}{\tau'}
}
\quad
\infer{
  \red{\cons{\st}{\tau}}{\tau'}
}{
 \st ~x \not= \hd ~\tau~x
 &\bism{\tau}{\tau'}
}
\]

\[
\infer={
  \satisfy{\cons{\st}{\tau}}{\upstream ~n}
}{
  \st~x = n
  &\red{\cons{\st}{\tr}}{\tr'}
  &\satisfy{\tr'}{\upstream ~(n+1)}
}
\]
These definitions are tailored to our example.  But a more general
weak trace equivalence can be defined similarly~\cite{interact}.  We
note that our formulation is not the only one possible nor the most
elegant. In particular, with a logic supporting nesting of induction
into coinduction as primitive~\cite{DA:mixcoind}, there is no need to
separate the definition into an inductive part, $\red{\tr}{\tr'}$, and
a coinductive part, $\upstream ~n$.  Yet our formulation is amenable
for formalization in Coq.

We also use an auxiliary trace predicate $\stay{x}$ that is true of a
finite trace in which the value of $x$ does not change and is
non-negative at the end and hence everywhere. It is defined
inductively as follows:
\[
\infer{
  \sglt{\st} \models \stay{x}
}{x \geq 0
}
\quad
\infer{
  \cons{\st}{\tr} \models \stay{x}
}{
  \st~x = \hd\, \tr\, x
  &\tr \models \stay{x}
}
\]

\begin{prop}\label{prop:wkbism}
$\semax{x=0}{s}{\upstream ~0}$
where
$s \equiv \While{\true}
           {(\Seq{\Assign{y}{x}}{
            \Seq{(\While{y \not= 0}{\Assign{y}{y-1}})}
                {\Assign{x}{x+1}}})}$.
\end{prop}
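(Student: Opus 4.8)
The plan is to follow the template of the derivations behind Propositions~\ref{prop:Markov} and~\ref{prop:liveness}: apply the inference rules of Figure~\ref{fig:hoare} mechanically along the syntax of $s$, choosing suitable \emph{state} invariants, and then close the gap between the postcondition so obtained and $\upstream~0$ by a single use of the consequence rule justified by a semantic entailment proved by coinduction. Write $s_t$ for the loop body $\Seq{\Assign{y}{x}}{\Seq{(\While{y \neq 0}{\Assign{y}{y-1}})}{\Assign{x}{x+1}}}$. I would take $I \equiv (x \geq 0)$ as the outer-loop invariant (the precondition $x = 0$ entails it) and $J \equiv (x \geq 0 \wedge y \geq 0)$ as the inner-loop invariant. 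The nonnegativity of $x$ in $I$ is essential rather than cosmetic: if a run of $s_t$ could start with $y = x < 0$, the inner loop $\While{y \neq 0}{\Assign{y}{y-1}}$ would diverge and $s_t$ would no longer produce only finite traces.

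First I would handle the inner loop. Applying the while-rule (with the assignment rule handling its body, plus one use of consequence) gives it the postcondition $\dup{J} \chop \rep{(\update{(\conj{(y \neq 0)}{J})}{y}{y-1} \chop \dup{J})} \chop \sglt{(y = 0)}$, and the crucial step is the semantic lemma that this entails $\stay{x}$. This is the one genuine \emph{induction} in the whole argument: $y$ is a nonnegative integer that strictly decreases at every pass, so the repetition has finite length, hence the trace is finite with $x$ constant and nonnegative throughout; one proceeds by induction on the value of $y$ at the head state, unfolding $\rep{\cdot}$ via Lemma~\ref{lemma:misc_asserts}. Composing $s_t$ with the sequence rule then yields $\semax{\conj{\true}{I}}{s_t}{\Follows{P}{\sglt{I}}}$, where $P$ is — modulo the overlapping-state bookkeeping of $\chop$ and $\dup{\cdot}$ — essentially $\update{(x \geq 0)}{y}{x} \chop \stay{x} \chop \update{(\conj{(x \geq 0)}{(y = 0)})}{x}{x+1}$, i.e.\ a finite trace on which $x$ is constant, followed by one step increasing $x$ by one.

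Feeding this into the while-rule for the outer loop, whose guard is $\true$ so that $\neg e \logequ \false$, gives $\semax{x = 0}{s}{\dup{(x = 0)} \chop \rep{(P \chop \dup{I})} \chop \sglt{\false}}$. It then remains to establish, by coinduction on $\upstream$, the entailment $\dup{(x = 0)} \chop \rep{(P \chop \dup{I})} \chop \sglt{\false} \models \upstream~0$. I would generalize this to a statement parameterized by the current value $n$ of $x$ — roughly, that a finite trace on which $x$ is constantly $n$, concatenated with a trace satisfying $\rep{(P \chop \dup{I})} \chop \sglt{\false}$ whose head also has $x = n$, satisfies $\upstream~n$ — so that the extra duplicated states introduced by $\dup{\cdot}$ and by the first guard test are absorbed into the prefix. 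In the step one unfolds $\rep{\cdot}$ once (the empty-repetition alternative of Lemma~\ref{lemma:misc_asserts} is vacuous because $\sglt{\false}$ is unsatisfiable, so a $P \chop \dup{I}$ iteration is always available), observes that this iteration contributes exactly one finite block over which $x$ stays equal to $n$ and then jumps to $n+1$ — so the inductively defined reduction $\red{\tr}{\tr'}$ strips precisely this block — and finishes by invoking the coinduction hypothesis on the remainder, whose head now has $x = n+1$.

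I expect the main obstacle to be this final entailment into $\upstream~0$: it interleaves the coinductive predicate $\upstream$ with the inductively defined reduction $\red{\tr}{\tr'}$ (so the argument is a coinduction with a nested induction), and keeping the overlapping-state accounting of $\chop$, $\dup{\cdot}$ and $\stay{x}$ straight while matching one $P \chop \dup{I}$ iteration against one $\upstream$-step is delicate. The inner-loop termination lemma is the secondary difficulty but is a routine induction on the variant $y$; everything else is bookkeeping with the consequence rule and the equivalences collected in Lemma~\ref{lemma:misc_asserts}.
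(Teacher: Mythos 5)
Your proposal follows essentially the same route as the paper's derivation (Figure~\ref{fig:wstream}): the outer while-rule with state invariant $x\geq 0$, the inner while-rule with invariant $y\geq 0$ (your slightly stronger $x\geq 0\wedge y\geq 0$ is an inessential variant), a consequence step reducing the inner loop's postcondition to $\stay{x}$ by induction on the variant $y$, and a final consequence step proving $\dup{x=0}\chop\rep{(\cdots)}\chop\sglt{\false}\models\upstream~0$ by coinduction generalized over the current value $n$ of $x$. The decomposition, the invariants, and the two key semantic entailments all match the paper's argument, so the proposal is correct and not a different approach.
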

\begin{proof}
A sketch of the derivation is given in Figure~\ref{fig:wstream}.

\begin{enumerate}
\item $x$ is incremented by one in every iteration, starting from zero,
and the loop is iterated forever. 
Each iteration is finite, although globally unbounded. 
So $x$ must increase forever.

\item We take $x \geq 0$ as the invariant. 

\item $y$ is decremented by one in every iteration, starting from
a non-negative number. It must become zero after a finite number of iterations.
$x$ does not change. 

\item We take $y \geq 0$ as the invariant.\qedhere
\end{enumerate}

\end{proof}

\begin{figure}
\[
\begin{array}{c}
 \infer[(1)]{
   \semax{x=0}
         {\While{\true}
                {(\Seq{\Assign{y}{x}}
                {\Seq{(\While{y \not= 0}{\Assign{y}{y-1}})}
                     {\Assign{x}{x+1}}})}}
         {\upstream ~0}
 }{
 \infer[(2)]{
  \twoline{\{x = 0\}
        ~\While{\true}
               {(\Seq{\Assign{y}{x}}
               {\Seq{(\While{y \not= 0}{\Assign{y}{y-1}})}
                    {\Assign{x}{x+1}}})}}
        {\{\dup{x=0} \chop 
        \rep{(\stay{x} \chop \update{(x \geq 0)}{x}{x+1} \chop \dup{x \geq 0})} \chop
        \sglt{\false}\}}
}{
\infer{
  \semax{x \geq 0}
        {\Seq{\Assign{y}{x}}
        {\Seq{(\While{y \not= 0}{\Assign{y}{y-1}})}
             {\Assign{x}{x+1}}}}
        {\stay{x} \chop \update{(x \geq 0)}{x}{x+1} \chop \sglt{x \geq 0}}
}{
 \semax{x \geq 0}{\Assign{y}{x}}{\update{(x \geq 0)}{y}{x}}
 ~~~\infer{
     \twoline{\{x \geq 0 \wedge y \geq 0\}
               ~\Seq{(\While{y \not= 0}{\Assign{y}{y-1}})}{\Assign{x}{x+1}}}
             {\{\stay{x} \chop \update{(x \geq 0)}{x}{x+1}\}}
   }{
    \detach{-1}{2}{
    \hspace{-3cm}
    \infer[(3)]{
    \semax{x \geq 0 \wedge y \geq 0}{\While{y \not= 0}{\Assign{y}{y-1}}}
          {\stay{x} }
    }{
    \infer[(4)]{
    \twoline{\{x \geq 0 \wedge y \geq 0\} 
             ~\While{y \not= 0}{\Assign{y}{y-1}}}
            {\{\dup{x \geq 0 \wedge y\geq 0} \chop 
               \rep{(\update{(y > 0)}{y}{y-1} \chop \dup{y \geq 0})} \chop
               \sglt{y = 0}\}}
    }{
     \semax{y \neq 0 \wedge y \geq 0}{\Assign{y}{y-1}}{\update{y > 0}{y}{y-1} \chop \sglt{y \geq 0}}
   }}}
    ~~\semax{x \geq 0}{\Assign{x}{x+1}}{\update{(x \geq 0)}{x}{x+1}}
    }
}}}
\end{array}
\]
\caption{Derivation of $\semax{x = 0}{s}{\upstream ~0}$}
\label{fig:wstream}
\end{figure}


\section{Related work}
\label{sec:related}

Coinductive big-step semantics for nontermination have been considered
by Cousot and Cousot \cite{CC:biiss} and by Leroy and Grall
\cite{Ler:coibso,LG:coibso} (in the context of the CompCert project,
which is a major demonstration of feasibility of certified
compilation). Cousot and Cousot \cite{CC:biiss} study fixpoints on
bi-semantic domains, partitioned into domains of terminating and
diverging behaviors; they prove a specific fixed-point theorem for
such domains (bi-induction), and then produce a bi-inductive big-step
semantics for lambda-calculus (avoiding some duplication of rules
between what would otherwise be distinct inductive and coinductive
definitions of terminating resp.\ diverging evaluation). Leroy and
Grall \cite{LG:coibso} investigate two big-step semantics approaches
for lambda-calculus.  The first, based on Cousot and Cousot
\cite{CC:inddsa}, has different evaluation relations for terminating
and diverging runs, one inductive (with finite traces), the other
coinductive (with infinite traces). To conclude that any program
either terminates or diverges, they need the law of excluded middle
(amounting to decidability of the halting problem), and, as a result,
cannot prove the standard small-step semantics sound wrt.\ the
big-step semantics constructively. The second approach, applied in
\cite{BL:mecscs}, uses a
coinductively defined evaluation relation with possibly infinite
traces. While-loops are not ensured to be progressive in terms of
growing traces (an infinite number of consecutive silent small steps
may be collapsed) and this leads to problems.

Some other works on coinductive big-step semantics include
Glesner~\cite{Gle:procns} and Nestra~\cite{Nes:fras,Nes:trasgf}. In
these it is accepted that a program evaluation can somehow continue
after an infinite number of small steps. With Glesner, this seems to
have been a curious unintended side-effect of the design, which she
was experimenting with just for the interest of it. Nestra developed a
nonstandard semantics with transfinite traces on purpose in order to
obtain a soundness result for a widely used slicing transformation
that is unsound standardly (can turn nonterminating runs into
terminating runs).

Our trace-based coinductive big-step semantics~\cite{NU:trabco} was
heavily inspired by Capretta's~\cite{Cap:genrct} modelling of
nontermination in a constructive setting similar to ours. Rather than
using coinductive possibly infinite traces, he works with a
coinductive notion of a possibly infinitely delayed value (for
statements, this corresponds to delaying the final state). The
categorical basis appears in Rutten's work~\cite{Rut:notcwb}. But
Rutten only studied the classical setting (any program terminates or
not), where a delayed state collapses to a choice of between a state
or a designated token signifying nontermination.

A general categorical account of small-step trace-based semantics has
been given by Hasuo et al.~\cite{HJS:gentsc}.

While Hoare logics for big-step semantics based on inductive, finite
traces have been considered earlier (to reason about traces of
terminating runs), Hoare or VDM-style logics for reasoning about
properties of nonterminating runs do not seem to have been studied
before, with one exception, see below. Neither do we in fact know
about dynamic logic or KAT (Kleene algebra with tests) approaches that
would have assertions about possibly infinite traces.  Rather,
nonterminating runs have been typically reasoned about in temporal
logics like LTL and CTL$^*$ or in interval temporal logic
\cite{Mos:temlrh,HZM:harsbt}.  These are however essentially different
in spirit by their ``endogeneity'': assertions are made about traces
in a fixed transition system rather than traces of runs of different
programs. Notably, however, interval temporal logic has connectives
similar to ours---in fact they were a source of inspiration for our
design.

Hofmann and Pavlova~\cite{HP:ghostvar} consider a VDM-style logic with
finite trace assertions that are applied to all finite prefixes of the
trace of a possibly nonterminating run of a program. This logic
allows reasoning about safety, but not liveness. We expect that we
should be able to embed a logic like this in ours.


\section{Conclusions}
\label{sec:concl}

We have presented a sound and complete Hoare logic for the coinductive
trace-based big-step semantics of While.  The logic naturally extends
both the standard state-based partial and total correctness Hoare
logics. Its design may be exploratory at this stage---in the sense
that one might wish to consider alternative choices of primitive
connectives. We see our logic as a viable unifying foundational
framework facilitating translations from more applied logics.

\subsubsection*{Acknowledgements}

We are grateful to Martin Hofmann, Thierry Coquand and Adam Chlipala
for discussions. Our anonymous reviewers provided very detailed and
useful feedback.

\clearpage

\clearpage

\appendix

\section{State-based partial correctness and total correctness Hoare
  logics}

The figures below give the rules of the standard, state-based partial
correctness and total correctness Hoare logics in the form used in
Section~\ref{sec:Hoare}.

\begin{figure}[h]
\[
\begin{array}{c}
\infer{
  \semax{U[e/x]}{\Assign{x}{e}}{U}
}{}
\quad
\infer{
  \semax{U}{\Skip}{U}
}{}
\quad
\infer{\semax{U}{s_0;s_1}{Z}}{
  \semax{U}{s_0}{V}
  &
  \semax{V}{s_1}{Z}
}
\\[2ex]
\infer{
  \semax{U}{\Ifthenelse{e}{s_t}{s_f}}{Z}
}{
  \semax{e \wedge U}{s_t}{Z}
  &\semax{\neg e \wedge U}{s_f}{Z}
}
\quad
\infer{
  \semax{I}{\While{e}{s_t}}{I \wedge \neg e}
}{
  \semax{e \wedge I}{s_t}{I}
}
\\[2ex]
\infer{
  \semax{\exists z.\, U}{s}{\exists z.\, V}
}{
  \forall z.\, \semax{U}{s}{V}
}
\quad
\infer{
  \semax{U}{s}{Z}
}{
  U \models U'
  &\semax{U'}{s}{Z'}
  & Z' \models Z
}
\end{array}
\]
\caption{Inference rules of partial-correctness Hoare logic}\label{fig:partial_Hoare}
\end{figure}

\begin{figure}[h]
\[
\begin{array}{c}
\infer{
  \semax{U[e/x]}{\Assign{x}{e}}{U}
}{}
\quad
\infer{
  \semax{U}{\Skip}{U}
}{}
\quad
\infer{\semax{U}{s_0;s_1}{Z}}{
  \semax{U}{s_0}{V}
  &
  \semax{V}{s_1}{Z}
}
\\[2ex]
\quad
\infer{
  \semax{U}{\Ifthenelse{e}{s_t}{s_f}}{Z}
}{
  \semax{e \wedge U}{s_t}{Z}
  &\semax{\neg e \wedge U}{s_f}{Z}
}
\\[2ex]
\infer[\whilefun]{
  \semax{I \wedge t = m}{\While{e}{s_t}}{I \wedge t \leq m \wedge \neg e}
}{
  \forall n:\mathit{nat}
  ~\semax{e \wedge I \wedge t = n}{s_t}{I \wedge t < n}
}
\\[2ex]
\mathrm{alt.}
\\[2ex]
\infer[\whilerel]{
  \semax{J\, m}{\While{e}{s_t}}
        {\exists k.\, k \leq m \wedge J\, k \wedge \neg e}
}{
  \forall n:\mathit{nat}
  ~\semax{e \wedge J\, n}{s_t}{\exists k.\, k < n \wedge J\, k}
}

\\[2ex]
\infer{
  \semax{\exists z.\, U}{s}{\exists z.\, V}
}{
  \forall z.\, \semax{U}{s}{V}
}
\quad
\infer{
  \semax{U}{s}{Z}
}{
  U \models U'
  &\semax{U'}{s}{Z'}
  & Z' \models Z
}
\end{array}
\]
\caption{Inference rules of total-correctness Hoare logic}\label{fig:total_Hoare}
\end{figure}

\phantom{blah}

\begin{thebibliography}{20}

\bibitem{coqart} 
Y.~Bertot, P.~Cast\'eran.  
\newblock \emph{Interactive Theorem Proving and Program Development: Coq'Art:
The Calculus of Inductive Constructions}, \emph{Texts in Theoretical Computer
Science: An EATCS Series}. 
\newblock Springer, 2004.


\bibitem{BL:mecscs} 
S.~Blazy, X.~Leroy.
\newblock Mechanized semantics for the Clight subset of the C
    language.  
\newblock \emph{J.\ of Autom.\ Reasoning}, 43(3):263--288, 2009.

\bibitem{Cap:genrct}
V.~Capretta.
\newblock General recursion via coinductive types.
\newblock \emph{Log.\ Methods in Comput.\ Sci.}, 1(2:1), 2005.

\bibitem{CC:inddsa}
P.~Cousot, R.~Cousot.
\newblock Inductive definitions, semantics and abstract interpretation.
\newblock In \emph{Conf.\ Record of 19th ACM SIGPLAN-SIGACT Symp.\ on
Principles of Programming Languages, POPL~'92 (Albuquerque, NM, Jan.\
1992)}, pp.~83--94. ACM Press, 1992.


\bibitem{CC:biiss}
P.~Cousot, R.~Cousot.
\newblock Bi-inductive structural semantics.
\newblock \emph{Inform.\ and Comput.}, 207(2):258--283, 2009.

\bibitem{DA:mixcoind} 
N.~A. Danielsson, T. Altenkirch.
\newblock Subtyping, declaratively.
\newblock In C.~Bolduc, J.~Desharnais, B.~Ktari, eds., \emph{Proc.\ of
  10th Int.\ Conf.\ on Mathematics of Program Construction, MPC 2000
  (Qu\'ebec City, June 2010)}, vol.~6120 of \emph{Lect.\ Notes in
  Comput.\ Sci.}, pp.~100--118. Springer, 2010.

\bibitem{Gle:procns} 
S.~Glesner.
\newblock A proof calculus for
  natural semantics based on greatest fixed point semantics.
\newblock In J.~Knoop, G.~C. Necula, W. Zimmermann, eds., \emph{Proc.\
  of 3rd Int.\ Wksh.\ on Compiler Optimization Meets Compiler
  Verification, COCV~2004 (Barcelona, Apr.\ 2004)}, vol.~132(1) of
  \emph{Electron.\ Notes in Theor.\ Comput.\ Sci.},
  pp.~73--93. Elsevier, 2004.

\bibitem{HZM:harsbt}
J.~Halpern, Z.~Manna, B.~Moszkowski.
\newblock A hardware semantics based on temporal intervals.
\newblock In J.~D\'iaz, ed., \emph{Proc.\ of 10th Int.\ Coll.\ on
  Automata, Languages and Programming, ICALP~'83 (Barcelona, July
  1983)}, vol.~154 of \emph{Lect.\ Notes in Comput.\ Sci.},
  pp.~278--191. Springer, 1983.

\bibitem{HJS:gentsc}
I.~Hasuo, B.~Jacobs, A.~Sokolova.
\newblock Generic trace semantics via coinduction.
\newblock \emph{Log.\ Methods in Comput.\, Sci.}, 3(4:11), 2007.

\bibitem{HP:ghostvar}
M. Hofmann, M. Pavlova.
\newblock Elimination of ghost variables in program logics.
\newblock In G.~Barthe, C.~Fournet, eds., \emph{Revised Selected
  Papers from 3rd Symp.\ on Trustworthy Global Computing, TGC~2007
  (Sophia Antipolis, Nov.\ 2007)}, vol.~4912 of \emph{Lect.\ Notes in
  Comput.\ Sci.}, pp.~1--20. Springer, 2008.

\bibitem{Ler:coibso}
X.~Leroy.
\newblock Coinductive big-step operational semantics.
\newblock In P.~Sestoft, ed., \emph{Proc.\ of 15th Europ. Symp. on
  Programming, ESOP 2006 (Vienna, March 2006)}, vol.~3924 of
\emph{Lect.\ Notes in Comput.\ Sci.}, pp. 54--68. Springer, 2006.

\bibitem{LG:coibso}
X.~Leroy, H.~Grall.
\newblock Coinductive big-step operational semantics.
\newblock \emph{Inform.\ and Comput.}, 207(2):285--305, 2009.

\bibitem{Kle:auxvar}
T.~Kleymann.
\newblock Hoare logic and auxiliary variables.
\newblock \emph{Formal Asp.\ in Comput.} 11(5):541--566, 1999. 

\bibitem{Mos:temlrh}
B.~Moszkowski.
\newblock A temporal logic for reasoning about hardware.
\newblock \emph{Computer}, 18(2):10--19, 1985.

\bibitem{esop10}
K.~Nakata, T.~Uustalu.
\newblock A Hoare logic for the coinductive trace-based big-step
  semantics of While.
\newblock In A.~D. Gordon, ed., \emph{Proc.\ of 19th European
  Symp.\ on Programming, ESOP 2010 (Paphos, March 2010)}, vol.~6012 of
  \emph{Lect.\ Notes in Comput.\ Sci.}, pp.~488--506. Springer, 2010.

\bibitem{interact}
K.~Nakata, T.~Uustalu.
\newblock Resumptions, weak bisimilarity and big-step semantics for
While with interactive I/O: an exercise in mixed
induction-coinduction.
\newblock In L.~Aceto, P.~Sobocinski, eds., \emph{Proc.\ of 7th
  Wksh.\ on Structural Operational Semantics, SOS 2010 (Paris,
  Aug.\ 2010)}, vol.~32 of \emph{Electron.\ Proc.\ in
  Theor.\ Comput.\ Sci.}, pp.~57--75. Open Publishing Assoc., 2010.

\bibitem{NU:trabco}
K.~Nakata, T.~Uustalu.
\newblock Trace-based coinductive operational semantics for While:
big-step and small-step, relational and functional styles.
\newblock In S.~Berghofer, T.~Nipkow, C.~Urban, M.~Wenzel, eds.,
\emph{Proc.\ of 22nd Int.\ Conf.\ on Theorem Proving in Higher Order
  Logics, TPHOLs 2009 (Munich, Aug. 2009)}, vol.~5674 of
\emph{Lect.\ Notes in Comput.\ Sci.}, pp.~375--390. Springer, 2009.

\bibitem{Nes:fras}
H.~Nestra.
\newblock Fractional semantics.
\newblock In M.~Johnson, V.~Vene, eds., \emph{Proc.\ of 11th
  Int.\ Conf.  on Algebraic Methodology and Software Technology,
  AMAST~2006 (Kuressaare, July 2006)}, vol.~4019 of \emph{Lect.\ Notes
  in Comput.\ Sci.}, pp.~278--292. Springer, 2006.

\bibitem{Nes:trasgf}
H.~Nestra.
\newblock Transfinite semantics in the form of greatest fixpoint.
\newblock \emph{J.\ of Log.\ and Algebr.\ Program.}, 78(7):574--593,
2009.

\bibitem{NN:sem}
H.~Riis Nielson, F.~Nielson.
\newblock \emph{Semantics with Applications: A Formal Introduction}.
\newblock Wiley, 1992.

\bibitem{Rut:notcwb}
J.~Rutten.
\newblock A note on coinduction and weak bisimilarity for While programs.
\newblock \emph{Theor.\ Inform.\ and Appl.}, 33(4--5):393--400, 1999.

\end{thebibliography}
\end{document}